\newcommand{\R}{\mathbb{R}}
\renewcommand{\Re}[1]{\operatorname{Re}\left\{#1\right\}}
\newcommand{\vct}[1]{\boldsymbol{#1}}
\newcommand{\mtx}[1]{\mathbf{#1}}
\newcommand{\Ind}[1]{\mathds{1}\left[#1\right]}
\newcommand{\vp}{\vct{p}}
\newcommand{\vq}{\vct{q}}
\newcommand{\vv}{\vct{v}}
\newcommand{\vx}{\vct{x}}
\newcommand{\valpha}{\vct{\alpha}}
\newcommand{\vtheta}{\vct{\theta}}
\newcommand{\mK}{\mtx{K}}
\newcommand{\mS}{\mtx{S}}
\newcommand{\mId}{\mathbf{I}}
\newcommand{\vxbar}{\underline{\vct{x}}}
\DeclareMathOperator*{\argmin}{arg\,min}
\newcommand{\norm}[1]{\left\lVert#1\right\rVert}
\newcommand{\dpdth}{\frac{\partial \boldsymbol{p}}{\partial \boldsymbol{\theta}}}
\newcommand{\dqdv}{\frac{\partial \boldsymbol{q}}{\partial \boldsymbol{v}}}
\newcommand{\dqdth}{\frac{\partial \boldsymbol{q}}{\partial \boldsymbol{\theta}}}
\newcommand{\dpdv}{\frac{\partial \boldsymbol{p}}{\partial \boldsymbol{v}}}
\newcommand{\dvdp}{\frac{\partial \boldsymbol{v}}{\partial \boldsymbol{p}}}
\newcommand{\dvdq}{\frac{\partial \boldsymbol{v}}{\partial \boldsymbol{q}}}
\def\BibTeX{{\rm B\kern-.05em{\sc i\kern-.025em b}\kern-.08em
    T\kern-.1667em\lower.7ex\hbox{E}\kern-.125emX}}
\newtheorem{theorem}{Theorem}
\newtheorem{lemma}{Lemma}
\newtheorem{remark}{Remark}
\newtheorem{assumption}{Assumption}
\newcommand{\stildestar}{\mathbf{S}_{\ddagger}}
\newcommand{\stildedag}{\mathbf{S}_{\dagger}}
\begin{document}
\title{Conditions for Estimation of Sensitivities of\\Voltage Magnitudes to Complex Power Injections}
%\title{Conditions for Estimation of Sensitivities of\\ Voltage Magnitudes to Complex Power Injections}
%--- Authors
\author{Samuel Talkington$^{\dagger *}$,~\IEEEmembership{Student Member,~IEEE,}
        % <-this % stops a space
Daniel Turizo$^{\ddagger}$,~\IEEEmembership{Member,~IEEE,}\\
Santiago Grijalva$^{*}$,~\IEEEmembership{Senior Member,~IEEE,}
Jorge Fernandez$^{\S}$,~\IEEEmembership{Student Member,~IEEE,}\\
and Daniel K. Molzahn$^{\ddagger}$,~\IEEEmembership{Senior Member,~IEEE.}
        % --- Begin attributions
%\thanks{Manuscript submitted May 5th, 2022, revised \today.}
\thanks{The authors are with the School of Electrical and Computer Engineering, Georgia Institute of Technology, Atlanta, GA, USA.\\
Email:  \{talkington,djturizo,sgrijalva6,jlf,molzahn\}@gatech.edu.  

Manuscript submitted May 5, 2022; revised August 24th, 2022 and November 11th, 2022; accepted December 30th, 2022. This material is based upon work supported in part by the following organizations:}
%The authors' work for this paper is partially supported by the following organizations:}%

\thanks{$^{*}$U.S. Department of Energy’s Office of Energy Efficiency and Renewable Energy (EERE) under Solar Energy Technologies Office (SETO) Agreement Number AWD-38426. Sandia National Laboratories is a multi-mission laboratory managed and operated by National Technology and Engineering Solutions of Sandia, LLC., a wholly owned subsidiary of Honeywell International, Inc., for the U.S. Department of Energy’s National Nuclear Security Administration under contract DENA0003525. 

$^{\dagger}$National Science Foundation Graduate Research Fellowship Program under Grant No. DGE-1650044. Any opinions, findings, and conclusions or recommendations expressed in this material are those of the author(s) and do not necessarily reflect the views of the National Science Foundation.}

\thanks{$^{\ddagger}$National Science Foundation (NSF) Energy, Power, Control and Networks program under award 202314 to the Georgia Tech Research Corporation. Any opinions, findings, and conclusions or recommendations expressed in this material are those of the author(s) and do not necessarily reflect the views of the National Science Foundation.}

\thanks{$^{\S}$Georgia Tech Research Corporation, Strategic Energy Institute, Energy Policy and Innovation (EPI) Center, Award DE00017791. 
This paper describes objective technical results and analysis. \\
Any subjective views or opinions that might be expressed in the paper do not necessarily represent the views of the sponsoring organizations.}% <-this % stops a space

}
%Manuscript submitted May 5, 2022, revised \today. 
% The paper headers
%\markboth{Submitted for Consideration to IEEE Transactions on Power Systems}{Talkington \MakeLowercase{\textit{et al.}}: Conditions for Estimation of Sensitivities of Voltage Magnitudes to Complex Power Injections}

\maketitle 
\begin{abstract}
    Voltage phase angle measurements are often unavailable from sensors in distribution networks and transmission network boundaries. Therefore, this  paper  addresses  the  conditions for  estimating sensitivities  of  voltage  magnitudes  with  respect  to  complex (active  and  reactive) electric power injections based on sensor measurements.  These  sensitivities  represent  submatrices  of  the inverse power flow Jacobian. We extend previous results to show that  the  sensitivities  of  a  bus  voltage  magnitude  with  respect  to active power injections are unique and different from those with respect  to  reactive  power. The classical Newton-Raphson power flow model is used to derive a novel representation of bus voltage magnitudes as an underdetermined linear operator of the active and reactive power injections\textemdash parameterized by the bus power factors. Two conditions that ensure the existence of unique complex power injections given voltage magnitudes are established for this underdetermined linear system, thereby compressing the solution space. The first is a sufficient condition based on the bus power factors. The second is a necessary and sufficient condition based on the system eigenvalues.  We use matrix completion theory to develop estimation methods for recovering sensitivity matrices with  varying levels  of  sensor  availability.  Simulations  verify  the  results  and demonstrate engineering use of the proposed methods.
\end{abstract}
\begin{IEEEkeywords}
Power flow Jacobian, voltage sensitivity matrix, reactive power, underdetermined systems, inverse problems
\end{IEEEkeywords}

% \section*{Nomenclature}
% \addcontentsline{toc}{section}{Nomenclature}
% \begin{IEEEdescription}[\IEEEusemathlabelsep\IEEEsetlabelwidth{$V_1,V_2,V_3$}]
% \item[$v_i (\vv),\theta_i (\vtheta)$] Voltage magnitude and phase angle(s).
% \item[$p_i (\vp), q_i (\vq)$] Net active and reactive power injection(s).
% \item[$\alpha_i (\valpha)$] Bus power factor(s).
% \item[$\frac{\partial \vy}{\partial \vx}$] $m\times n$ Jacobian matrix of partial derivatives of a function $\vy(\vx) : \R^n \mapsto \R^m$ with respect to $\vx$.
% \item[$G_{ik},B_{ik}$] Real and imaginary parts of the $ik$th entry of the bus admittance matrix.
% \end{IEEEdescription}
\newpage
\section{Introduction}
\IEEEPARstart{T}{he} power flow Jacobian matrix is central to many optimization, security, operation, and planning applications in electric power systems \cite{kundur}. This matrix contains the partial derivatives of the active and reactive AC power flow mismatch equations with respect to voltage magnitudes and phase angles, and is typically very sparse \cite{crow}. Knowledge of this matrix allows engineers to model the impact of changes in active and reactive power injections on the state of the system, and is central to the Newton-Raphson method to iteratively solve the non-linear AC power flow problem \cite{grainger_power_1994}.

The entries of the power flow Jacobian matrix describe the change in the active and reactive power injections, $\Delta \boldsymbol{p}$ and $\Delta \boldsymbol{q}$, due to small changes in the voltage magnitude $\Delta \boldsymbol{v}$ or angle $\Delta \boldsymbol{\theta}$, often known as sensitivity coefficients \cite{chen_measurement-based_2016,chen_measurement-based_2014,peschon_sensitivity_1968,chang_mutual_dependence}. There is a broad literature on the computation and applications of Jacobian matrices \cite{baker_network-cognizant_2018} or sensitivity coefficients in domains within power systems and others \cite{mugnier_model-lessmeasurement-based_2016,da_silva_data-driven_2020,nowak_measurement-based_2020,moffat_local_2021,hu_generalized_2019,sauer_reconstructing_1999,barter_closed_2021,sauer_jacobain}. 

In many cases, such as multi-area transmission systems, transmission system boundaries, and distribution systems, the network models needed to compute this matrix may not be complete or accurate \cite{inverse_power_flow}, especially in distribution systems \cite{deka_structure_learning}. Growing sensor deployment in electric power systems has spurred research on methods to recover sensitivity coefficients from measurements, allowing for the network behavior to be approximated even when the model is inaccurate, out of date, or unavailable \cite{mugnier_model-lessmeasurement-based_2016,chen_measurement-based_2014,chen_measurement-based_2016}. The literature has also explored estimation of the admittance matrix, both with synchrophasor measurements \cite{gupta_compound_2021,inverse_power_flow} and without \cite{zhang_phaseless_topology_line_param,claeys_unbalanced_phaseless_line_param}.

% Sensing devices that provide phase angle measurements have numerous benefits, applications, and a promising future. However, challenges and disparities remain in their deployment and application due to infrastructure costs and communication requirements. The proposed methodology of this paper can help alleviate remaining bottlenecks in availability and accuracy \textbf{[naspi, Chapter 6.3, 6.4]} of synchrophasor deployment.  

%These coefficients can be used to form a linear approximation of the non-linear AC power flow equations, which is useful for modeling networks where knowledge of the topology and circuit parameters is limited. 

The net power injections at buses $i=1,\dots,n$ of an electric power system are denoted by $p_i + j q_i \in \mathbb{C}$, where $p_i$ is the net active power injection, $q_i$ is the net reactive power injection, and $j\triangleq \sqrt{-1}$. These are related with the bus voltages $\bar{v}_i \triangleq v_i \phase{\theta_i} \in \mathbb{C}$ and the net current injections $\bar{\ell}_i = \ell_i \phase{\phi_i} \in \mathbb{C}$ as
\begin{equation}
\label{eq:complex_power}
    p_i+jq_i = \bar{v}_i \bar{\ell}_i^{*} = \sqrt{p_i^2 + q_i^2} \phase{\theta_i - \phi_i}, %\phase{\phi^v_i-\phi^{\ell}_i}, %= |s_k| \phase{\phi_n^v - \phi_n^i}
\end{equation}
where $\bar{\ell}_i^*$ is the complex conjugate of the net current injection, $\theta_i-\phi_i$ is the difference between the phase angles of the voltage and current at bus $i$, and $\sqrt{p_i^2 + q_i^2}$ is the apparent power, i.e., the magnitude of the complex powers.

However, measurements of $\theta_i$ are often unavailable. For example, advanced metering infrastructure (AMI) data
%, the data available to distribution grid operators 
are usually only available for the voltage magnitudes $\boldsymbol{v} \in \mathbb{R}^n$ and the active and reactive power injections $\boldsymbol{p},\boldsymbol{q} \in \mathbb{R}^n$ \cite{AMI_state_estimation}. This makes it difficult to model \eqref{eq:complex_power} with these measurement data, as they lack the voltage phase angles. 

Motivated by this problem, we are interested in determining the relationships between $\boldsymbol{v}$, $\boldsymbol{p}$, and $\boldsymbol{q}$ when $\boldsymbol{\theta}$ is not available. We propose that the \emph{power factors} at buses $i=1,\dots,n$, which we define as  
\begin{equation}
\label{eq:pf_def}
    \alpha_i \triangleq \cos(\theta_i - \phi_i) = \frac{p_i}{\sqrt{p_i^2+q_i^2}} = \cos\arctan \frac{q_i}{p_i},
\end{equation}
help provide an answer to this problem. These quantities are the ratios of active powers $p_i$ to the apparent powers $\sqrt{p_i^2 + q_i^2}$, which can be efficiently computed from $\boldsymbol{p},\boldsymbol{q}$ \cite{talkington_recovering_2021}.

%i.e., the magnitude of the complex powers \eqref{eq:complex_power}. These quantities can be efficiently computed from $\boldsymbol{p},\boldsymbol{q}$ \cite{talkington_recovering_2021}. 

Specifically, in this paper, we use \eqref{eq:pf_def} to determine a sufficient condition for when it is possible to relate the changes in $\boldsymbol{v}$ to changes in $\boldsymbol{p}$ and $\boldsymbol{q}$ via an \emph{underdetermined} system of equations that arises from the Newton-Raphson power flow model. In particular, when the injection power factors are known or can be found, it is possible to effectively solve this underdetermined system without the phase angles. Our work contributes to the existing literature that studies this ``voltage angle free" data input assumption, which has developed line and topology parameter estimation for single-phase\cite{zhang_phaseless_topology_line_param} and multi-phase unbalanced \cite{claeys_unbalanced_phaseless_line_param} networks. %and a solution to this underdetermined system can be guaranteed to exist.

In addition to these contributions, we also develop specialized applications of matrix recovery algorithms for improving knowledge about the behavior of low-observability power systems in terms of the sensitivities of voltage magnitudes to active and reactive power injections. The key idea of these algorithms is that matrices with skewed spectral content, i.e., rapidly decreasing singular values, can be recovered in settings that are intractable with traditional estimation algorithms \cite{candes_exact_2009,yao_nuclear_norm,davenport_overview_2016,candes_matrix_completion_with_noise}. Recent research has shown the effectiveness of this class of algorithms in power system estimation problems, because many commonly encountered data matrices in electric power systems have rapidly decreasing singular values. Example applications include estimating voltage phasors \cite{donti_matrix_2020} and evaluating voltage stability \cite{lim_svd-based_2016}. We propose that the voltage magnitude blocks of the inverse power flow Jacobian also meet this criteria. Our work relates to previous work on adaptive power flow linearizations \cite{misra_optimal_2018} and a broad literature on measurement-based estimation of sensitivity coefficients \cite{chen_measurement-based_2014,chen_measurement-based_2016,kumar_neumann_voltage_2022}, their use in control \cite{gupta_model-less_2022}, and explorations into low-rank and online variants of these algorithms \cite{ospina_sensitivity_estimation}. 

%Our work contributes to the aforementioned literature by developing algorithms and a sufficient condition for modeling complex power perturbations without measurements of voltage phase angles.
In summary, the contributions of this paper are:
\begin{enumerate}
    \item Extending the results in \cite{christakou_efficient_2013}, conditions for radial distribution networks which ensure that the voltage magnitude sensitivities with respect to active and reactive power injections are unique. 
    \item Algorithms for radial distribution networks to recover or update the sensitivity matrices of voltage magnitudes to active and reactive power injections\textemdash which are submatrices of the inverse of the power flow Jacobian\textemdash via regression and matrix completion. The matrix recovery algorithms exploit the skewed spectral content of the inverse of the power flow Jacobian for primary networks.
    \item A sufficient condition for arbitrary networks that, if satisfied, guarantees the existence of a unique complex power injection state estimate from measurements of voltage magnitudes. This condition depends on the bus power factors $\boldsymbol{\alpha}$ and blocks of the power flow Jacobian.
   
\end{enumerate}

This paper assumes an unbalanced electrical network where the sets $\mathcal{S}$ and $\mathcal{N}$ contain the slack and PQ buses, respectively.
%We assume that there is a single slack bus, and thus, three slack nodes. 
We also assume that voltage regulating devices are held fixed throughout the system. The analytical results in Sections \ref{sec:review_vph_sens} and \ref{sec:unique_vmag_sens} apply to unbalanced radial distribution networks, and to arbitrary networks in Sections \ref{sec:encoding_reactive_power} and \ref{sec:power_factor_bound}. Numerical experiments for the analytical results are shown in Section \ref{sec:comp_test_power_factor_bound} for meshed transmission networks and radial distribution networks. Implementation of the matrix recovery algorithms are shown for radial distribution networks in Section \ref{sec:sens_matrix_recovery}.  %The results of this paper are primarily useful for distribution network analysis, because an important application of this method is when the physical system model is of dubious quality or unknown. 

\section{Preliminaries}
\subsection{Data Input Assumptions}
As the primary application of this paper is in distribution systems, we will work with datasets $\mathcal{D}_i$ for PQ buses $i=1,\dots,n \triangleq |\mathcal{N}|$, of the form:
\begin{equation}
\label{ami_signal}
    \mathcal{D}_i \triangleq %\{v^{(t)}_i,\boldsymbol{x}_{i}^{(t)}\}_{t=1}^m = 
    \big\{(v_{i,t},p_{i,t},q_{i,t})\big\}_{t=1}^m,
\end{equation}
where $v_{i,t}, p_{i,t},$ and $q_{i,t}$ are the nodal voltage magnitude, net active, and net reactive power injection measurements, respectively, at bus~$i$ for time steps $t=1,\dots,m$. We will assume the errors of these sensors to be normally distributed, with variance that is on the order of 0.5\%. AMI sensors typically have errors between 0.07\% and 4\% depending on the power quality of the load \cite{nist_ami}. In the next section, we will drop the subscript $t$.

%Sometimes, only the voltage magnitudes will be available for some of the $n$ buses, making it difficult to characterize impacts from active and reactive power injections when conducting a power system study. In this paper we study \textit{voltage sensitivity matrix}
% Similar to \cite{lin_enhancing_2021,lin_data-driven_2021-1}, we structure this data as a spatio-temporal load matrix $\mathbf{X} \triangleq [\mathbf{P},\mathbf{Q}]^T \in \mathbb{R}^{2N \times M}$ 

\subsection{The Newton-Raphson Power Flow}
Consider the power balance equations for a bus $i \in \mathcal{N}$:
\begin{align}
    \label{AC_mismatch}
        p_i &= v_i \sum_{k=1}^{n}  v_k \big(G_{ik} \cos \left( \theta_i- \theta_k \right) + B_{ik}\sin (\theta_i - \theta_k)\big), \\
    \label{AC_mismatch_2}
        q_i &= v_i \sum_{k=1}^{n}  v_k \big(G_{ik} \sin (\theta_i - \theta_k) - B_{ik}\cos \left( \theta_i- \theta_k \right)\big),
\end{align}
where $v_i,v_k$ are the voltage magnitudes at buses $i$ and $k$ and $G_{ik},B_{ik}$ are the real and imaginary parts of the $ik$-th entry of the bus admittance matrix, $Y_{ik} = G_{ik} + j B_{ik}$. In order to solve the systems \eqref{AC_mismatch} and \eqref{AC_mismatch_2}, a classical approach is the Newton-Raphson (NR) algorithm, which iteratively solves the system of equations \eqref{eq:classical_NR_iteration}:
\begin{equation}\arraycolsep=1.4pt\def\arraystretch{1.75}
\label{eq:classical_NR_iteration}
    \underbrace{
    \begin{bmatrix}
        \Delta \boldsymbol{p}\\
        \Delta \boldsymbol{q}
    \end{bmatrix}
    }_{(2n \times 1)} = 
    \underbrace{\left[
        \begin{array}{c|c}
        \dpdth & \dpdv   \\
        \hline 
        \dqdth & \dqdv \\
        \end{array}
    \right]}_{(2n \times 2n)}
      \underbrace{
        \begin{bmatrix}
        \Delta \boldsymbol{\theta}\\
        \Delta \boldsymbol{v}
        \end{bmatrix}}_{(2n \times 1)} 
      = \mathbf{J}
        \begin{bmatrix}
        \Delta \boldsymbol{\theta}\\
        \Delta \boldsymbol{v}
        \end{bmatrix},
\end{equation}
where $\Delta \boldsymbol{p},\Delta \boldsymbol{q} \in \mathbb{R}^{n}$ are vectors of small deviations in active and reactive power, respectively. The power flow Jacobian $\mathbf{J}$ is known to be relatively constant with respect to small changes in power injections \cite{chen_measurement-based_2014,chen_measurement-based_2016}.
% \begin{equation}
% \label{jacobian_domain}
%     \textbf{dom}(\mathbf{P},\mathbf{Q}) = \{ \mathbf{P},\mathbf{Q} \ \in \mathbb{R}^{N} : \norm{\boldsymbol{J}}_F \leq D\}
% \end{equation}
Consider the block submatrices of the inverse power flow Jacobian. Hereafter, we refer to blocks of the Jacobian as sensitivity matrices and their elements as sensitivity coefficients. Denote the blocks of the inverse as $\mathbf{S}^{x}_y \in \mathbb{R}^{n \times n}$.  The inverse problem of \eqref{eq:classical_NR_iteration} can be written as:
\begin{equation}\arraycolsep=1.4pt\def\arraystretch{1.75}
\label{nr_powerflow}
    \underbrace{
        \begin{bmatrix}
        \Delta \boldsymbol{\theta} \\
        \Delta \boldsymbol{v} 
        \end{bmatrix}}_{(2n \times 1)}
    = \underbrace{\left[
        \begin{array}{c|c}
        \mathbf{S}^{\theta}_{p}  & \mathbf{S}^{\theta}_q   \\
        \hline 
        \mathbf{S}^{v}_p  & \mathbf{S}^{v}_q 
        \end{array}
    \right]}_{(2n \times 2n)}
     \underbrace{
    \begin{bmatrix}
        \Delta \boldsymbol{p} \\
        \Delta \boldsymbol{q} 
    \end{bmatrix}
    }_{ (2n \times 1)}= \mathbf{J}^{-1}
    \begin{bmatrix}
        \Delta \boldsymbol{p}\\
        \Delta \boldsymbol{q}
    \end{bmatrix}.
\end{equation}

%We are concerned with recovering the full inverse of the power flow Jacobian, $\boldsymbol{J}^{-1}_t$ over time, with accuracy such that $\norm{\boldsymbol{J}_t- \boldsymbol{J}}_F \leq \delta$. 

\subsection{Phaseless Approximation of the Power Flow Equations}
Sensing devices that provide phase angle measurements have well-known benefits and applications. However, the large-scale deployment and application of such measurements continues to be heterogeneous and challenging due to infrastructure costs and communication requirements. 

Particularly in distribution systems, access to phase angle information $\Delta \boldsymbol{\theta}$ may be unavailable due to low penetrations of phasor measurement units (PMUs) \cite{naspi_pmus_2021_tech_report}, making it impossible to realistically solve the system of equations \eqref{eq:classical_NR_iteration} and \eqref{nr_powerflow}. Furthermore, the system of equations needs to be solved in real time. The sensitivity matrices are relatively constant inter-temporally, allowing for model behavior to be linearly approximated \cite{chen_measurement-based_2014,chen_measurement-based_2016}. The voltage magnitude of bus $i$, $v_i$, can be written as a first-order linear approximation around a given operating condition:
% \begin{equation}
%     \label{linear_sensitivty_model}
%         V_i \approx  V_i^0 + \sum_{l=1}^{n} \left({\frac{\partial v_i}{ \partial p_l} \Delta P_l + \frac{\partial v_i}{\partial q_l}, \Delta Q_l}\right), %\frac{\partial v_i}{\partial p_l} \Delta P_1 + \frac{\partial v_i}{\partial Q_1} + \dots \frac{\partial v_i}{\partial p_l} \Delta P_l + \frac{\partial v_i}{\partial q_l} \Delta Q_l\\
%          %\bar{v}_i = \bar{v}_i = 
% \end{equation}
\begin{equation}
    \label{vectorized_linear_sensitivity_model}
    v_i \approx v_i^0 + \frac{\partial v_i}{\partial \boldsymbol{p} } \Delta \boldsymbol{p} + \frac{\partial v_i}{\partial \boldsymbol{q} } \Delta  \boldsymbol{q},
    %v_i \approx v_i^0 + \frac{\partial v_i}{\partial \boldsymbol{p} } \Delta \boldsymbol{p} + \frac{\partial v_i}{\partial \boldsymbol{q} } \Delta  \boldsymbol{q},
\end{equation}
where $v_i^0$ is the voltage magnitude of bus $i$ in the given operating condition and 
$\frac{\partial v_i}{\partial \boldsymbol{p}},\frac{\partial v_i}{\partial  \boldsymbol{q}} \in \mathbb{R}^{1 \times n}$
%$\frac{\partial v_i}{\partial \boldsymbol{p}},\frac{\partial v_i}{\partial  \boldsymbol{q}}$ 
are the $i$-th rows of the matrices describing voltage magnitude sensitivities with respect to the power injections. From \eqref{nr_powerflow}, we can write a rectangular linearized system which relates voltage magnitude variations to active and reactive power variations: 
\begin{equation}\arraycolsep=1.4pt\def\arraystretch{1.5}
\label{eq:underdetermined_nr_inv_system}
    \underbrace{\Delta\boldsymbol{v} }_{(n \times 1)} = \underbrace{
    \begin{bmatrix}
         \mathbf{S}^{v}_p  & \mathbf{S}^{v}_q 
    \end{bmatrix}
    }_{(n\times 2n)}
    \underbrace{
    \begin{bmatrix}
       \Delta \boldsymbol{p} \\
        \Delta \boldsymbol{q} 
    \end{bmatrix}
    }_{(2n \times 1)} = 
    \underbrace{\tilde{\mathbf{S}} \Delta \boldsymbol{x}}_{(n \times 1)} .
\end{equation}
The matrix $\tilde{\mathbf{S}}$ describes the sensitivities of the $n$ voltage magnitudes to $n$ active and $n$ reactive power injections and is the main quantity of interest in this paper. Note that $\tilde{\mathbf{S}} \in \mathbb{R}^{n \times 2n}$, thus, $\operatorname{rank}(\tilde{\mathbf{S}}) \leq n,$ and $\operatorname{rank}(\tilde{\mathbf{S}}) + \dim(\operatorname{null}(\tilde{\mathbf{S}})) = n$, which implies that $\dim(\operatorname{null}(\tilde{\mathbf{S}})) >0$. Thus, in general, there are \textit{infinitely many solutions} $\Delta \boldsymbol{x}$ to the system of equations \eqref{eq:underdetermined_nr_inv_system}. Throughout the next section, we will show how and when we can circumvent this mathematical problem by exploiting knowledge of power system physics.

\section{Analysis of Voltage Sensitivities}
\label{sec:analytical_results}
\newcommand{\Vph}{\bar{v}}
\newcommand{\Sph}{\tilde{\mathbf{S}}}
\newcommand{\Ybus}{\boldsymbol{Y}}
Prior numerical results have empirically indicated that the voltage magnitude sensitivities for active and reactive power injections are distinct \cite{lin_data-driven_2021,talkington_power_2021}. Very recently, under the assumption that $\theta_i =0$, \cite{chang_mutual_dependence} showed that $\frac{\partial v_i}{\partial p_i}$ is correlated to $\frac{\partial v_i}{\partial q_i}$. In this section, we extend \cite{christakou_efficient_2013} to show that the voltage magnitude sensitivities to active and reactive power injections are uncorrelated in general in unbalanced networks.

\subsection{Review of Voltage Phasor Sensitivities}
\label{sec:review_vph_sens}
The net complex power injection at bus~$i$,~$p_i + j q_i$, is related to the network's phasor voltages via the power flow equations:
\begin{equation}
    \label{eq:power_flow_equations}
  p_i + j q_i = \bar{v}_i \Bigg( \sum_{k \in \mathcal{N} \cup \mathcal{S}} Y_{ik} \bar{v}_k \Bigg)^* \quad \forall  i \in \mathcal{N},
\end{equation}
where $\bar{v}_i$ is the voltage phasor of bus $i$ and $(\cdot)^*$ denotes the complex conjugate. Following \cite{christakou_efficient_2013}, differentiating \eqref{eq:power_flow_equations} with respect to active and reactive power individually yields the systems of equations \eqref{eq:vph_sensitivities_definition}, whose solutions are the sensitivities of \textit{phasor} voltages to active and reactive power injections:
\begin{subequations}
\label{eq:vph_sensitivities_definition}
\begin{align}
    \Ind{i=l} =\frac{\partial \bar{v}_i^*}{\partial p_l} \sum_{k \in \mathcal{S} \cup \mathcal{N}} Y_{ik}  \bar{v}_k + \bar{v}_i^* \sum_{k \in \mathcal{N}}Y_{ik} \frac{\partial  \bar{v}_k}{\partial p_l}, \\
    \label{complex_sensitivities_2}
    -j\Ind{i=l}=\frac{\partial \bar{v}_i^*}{\partial q_l} \sum_{k \in \mathcal{S} \cup \mathcal{N}} Y_{ik} \bar{v}_k + \bar{v}_i^* \sum_{k \in \mathcal{N}}Y_{ik} \frac{\partial  \bar{v}_k}{\partial q_l},
\end{align}
\end{subequations}
where $\Ind{i=l}$ is the indicator function, defined as
\begin{equation}
    \Ind{i=l} \triangleq \left\{ {\begin{array}{*{20}l}
   {1} & {\text{if} \; \; i = l},  \\
   {0} & \text{otherwise}.  \\
\end{array}} \right.
\end{equation}
The voltage phasor sensitivities to active and reactive power injections, $\frac{\partial \bar{v}_i}{\partial p_l}$ and $\frac{\partial\bar{v}_i}{\partial q_l}$, are of particular interest in distribution systems since they have a unique solution.

\begin{remark}
\label{remark:unique_complex_vph_sens}
\cite{christakou_efficient_2013} In a radial distribution network, the nontrivial solutions of the equations in the systems of \eqref{eq:vph_sensitivities_definition}, i.e., where $\Ind{i=l} \neq 0$, the unknowns $\frac{\partial\bar{v}_i}{\partial q_l}$ and $\frac{\partial \bar{v}_i}{\partial p_l}$ achieve distinct complex values.
\end{remark}

For the next lemmas, we use $\operatorname{Re} \{\cdot\}$ and $\operatorname{Im} \{\cdot\}$ to represent the real and imaginary part of a complex number, respectively.

\begin{lemma}
\label{lemma:representation}
The voltage magnitude sensitivity coefficients of a network can be written as \eqref{magnitude_sens} and \eqref{mag_sens_2}. %\footnote{Details are available in [5], Theorem 1.}
\begin{align}
\label{magnitude_sens}
   \frac{\partial v_i }{\partial q_l} &= \frac{1}{v_i} \operatorname{Re} \left\{\bar{v}_i^*\frac{\partial\bar{v}_i}{\partial q_l}\right\}, \\
    \label{mag_sens_2}
   \frac{\partial v_i }{\partial p_l} &= \frac{1}{v_i} \operatorname{Re} \left\{\bar{v}_i^*\frac{\partial \bar{v}_i}{\partial p_l}\right\}.
\end{align}
\end{lemma}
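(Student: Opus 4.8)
The plan is to work directly from the definition of the voltage magnitude as the modulus of the voltage phasor, $v_i = \abs{\bar{v}_i}$, and differentiate. The cleanest route avoids the square root entirely by squaring first: since $v_i^2 = \bar{v}_i \bar{v}_i^*$, both sides are smooth functions of the real injection $p_l$ (and likewise $q_l$), so I would differentiate this identity rather than $v_i$ itself and only divide by $v_i$ at the end.

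First I would apply the product rule to the right-hand side, obtaining $2 v_i \frac{\partial v_i}{\partial p_l} = \frac{\partial \bar{v}_i}{\partial p_l} \bar{v}_i^* + \bar{v}_i \frac{\partial \bar{v}_i^*}{\partial p_l}$. The key observation—and essentially the only point requiring care—is that because $p_l$ is a real-valued parameter, differentiation commutes with complex conjugation, so $\frac{\partial \bar{v}_i^*}{\partial p_l} = \left(\frac{\partial \bar{v}_i}{\partial p_l}\right)^*$. This makes the two summands on the right-hand side complex conjugates of one another: the first is $\bar{v}_i^* \frac{\partial \bar{v}_i}{\partial p_l}$ and the second is precisely its conjugate.

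Then I would invoke the elementary identity $z + z^* = 2\operatorname{Re}\{z\}$ to collapse the right-hand side into $2\operatorname{Re}\left\{\bar{v}_i^* \frac{\partial \bar{v}_i}{\partial p_l}\right\}$. Dividing through by $2 v_i$, which is legitimate since $v_i > 0$ at any physical operating point, yields \eqref{mag_sens_2}. Repeating the identical argument with $q_l$ substituted for $p_l$ gives \eqref{magnitude_sens}, completing the proof. As an independent check one can instead insert the polar form $\bar{v}_i = v_i \e^{\j\theta_i}$ into $\bar{v}_i^*\frac{\partial \bar{v}_i}{\partial p_l}$, which separates into a real part $v_i\frac{\partial v_i}{\partial p_l}$ and a purely imaginary angle-sensitivity part $\j v_i^2 \frac{\partial \theta_i}{\partial p_l}$; taking the real part recovers the same formula.

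I do not expect any genuine obstacle here, as the result is a direct consequence of the product rule applied to $\abs{\bar{v}_i}^2$. The single subtlety worth stating explicitly is the commutation of the real-parameter derivative with conjugation, which is exactly what folds the product-rule expansion into a real part and thereby links the scalar magnitude sensitivities $\frac{\partial v_i}{\partial p_l}, \frac{\partial v_i}{\partial q_l}$ to the complex phasor sensitivities $\frac{\partial \bar{v}_i}{\partial p_l}, \frac{\partial \bar{v}_i}{\partial q_l}$ whose existence and uniqueness are guaranteed by Remark~\ref{remark:unique_complex_vph_sens}.
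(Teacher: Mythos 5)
Your proof is correct, but your primary route differs from the paper's. You differentiate the identity $v_i^2 = \bar{v}_i\bar{v}_i^*$, use the fact that differentiation with respect to the real parameter $p_l$ (or $q_l$) commutes with conjugation so the two product-rule terms are conjugates, collapse them via $z + z^* = 2\operatorname{Re}\{z\}$, and divide by $2v_i$. The paper instead starts from the polar form $\bar{v}_i = v_i e^{\j\theta_i}$, differentiates to get $\frac{\partial \bar{v}_i}{\partial x} = \frac{\partial v_i}{\partial x}e^{\j\theta_i} + \j v_i \frac{\partial\theta_i}{\partial x}e^{\j\theta_i}$, multiplies by $e^{-\j\theta_i}$, takes the real part to isolate $\frac{\partial v_i}{\partial x}$, and then substitutes $e^{-\j\theta_i} = \bar{v}_i^*/v_i$ --- this is precisely the ``independent check'' you mention in passing at the end. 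The two arguments are equally short, but they trade slightly different hypotheses: the paper's polar route implicitly requires the phase $\theta_i$ to be a differentiable function of the injection, whereas your squared-modulus route needs only differentiability of $\bar{v}_i$ itself plus $v_i > 0$ for the final division, and it makes the conjugation-commutes-with-real-differentiation step explicit rather than burying it in the polar decomposition. Both are valid under the operating conditions assumed in the paper.
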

\begin{proof}
See Appendix \ref{apdx:proof_of_vmag_representation}.
\end{proof}

\subsection{Unique Voltage Magnitude Sensitivities}
\label{sec:unique_vmag_sens}
Next, we will show that if $\frac{\partial\bar{v}_i}{\partial q_l}$ and $\frac{\partial \bar{v}_i}{\partial p_l}$ have unique solutions, then we can say the same for the voltage magnitudes.

\begin{lemma}
\label{lemma:distinct_vmag_sens}
Let the rectangular form of the complex sensitivities be  $\frac{\partial\bar{v}_i}{\partial q_l} = a+jb$ and $\frac{\partial \bar{v}_i}{\partial p_l}=c+jd$ respectively.~If
\begin{align}
\label{zero_con_1}
    (a,b) \not\in \{(a,b) : \operatorname{Re}\{\bar{v}_i\}a+\operatorname{Im}\{\bar{v}_i\}b = 0 \},\\
    \label{zero_con_2}
    (c,d) \not\in \{(c,d) : \operatorname{Re}\{\bar{v}_i\}c+\operatorname{Im}\{\bar{v}_i\}d = 0\},
\end{align}
then $ \frac{\partial v_i}{\partial q_l} \neq \frac{\partial v_i}{\partial p_l} \ \forall i, l$.
\end{lemma}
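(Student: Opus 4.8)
The plan is to reduce the claim to a statement about the real inner product of the voltage phasor with the complex sensitivities, using the representation already established in Lemma \ref{lemma:representation}. Writing $\bar{v}_i = \operatorname{Re}\{\bar{v}_i\} + j\operatorname{Im}\{\bar{v}_i\}$ and substituting the rectangular forms $\frac{\partial \bar{v}_i}{\partial q_l} = a+jb$ and $\frac{\partial \bar{v}_i}{\partial p_l} = c+jd$, I would first expand the products $\bar{v}_i^* \frac{\partial \bar{v}_i}{\partial q_l}$ and $\bar{v}_i^* \frac{\partial \bar{v}_i}{\partial p_l}$ and extract their real parts. A direct computation gives $\operatorname{Re}\{\bar{v}_i^*\frac{\partial \bar{v}_i}{\partial q_l}\} = \operatorname{Re}\{\bar{v}_i\}a + \operatorname{Im}\{\bar{v}_i\}b$ and $\operatorname{Re}\{\bar{v}_i^*\frac{\partial \bar{v}_i}{\partial p_l}\} = \operatorname{Re}\{\bar{v}_i\}c + \operatorname{Im}\{\bar{v}_i\}d$, so that Lemma \ref{lemma:representation} yields the two scalar expressions $\frac{\partial v_i}{\partial q_l} = \frac{1}{v_i}(\operatorname{Re}\{\bar{v}_i\}a + \operatorname{Im}\{\bar{v}_i\}b)$ and $\frac{\partial v_i}{\partial p_l} = \frac{1}{v_i}(\operatorname{Re}\{\bar{v}_i\}c + \operatorname{Im}\{\bar{v}_i\}d)$.

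The two hypotheses \eqref{zero_con_1} and \eqref{zero_con_2} are exactly the statements that the numerators of these expressions are nonzero: condition \eqref{zero_con_1} says $(a,b)$ does not lie on the line $\operatorname{Re}\{\bar{v}_i\}s + \operatorname{Im}\{\bar{v}_i\}t = 0$, hence $\frac{\partial v_i}{\partial q_l}\neq 0$, and symmetrically \eqref{zero_con_2} gives $\frac{\partial v_i}{\partial p_l}\neq 0$. I would record this non-vanishing first, since it already excludes the degenerate decoupled situation in which one of the magnitude sensitivities vanishes identically.

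To obtain the strict inequality I would argue by contradiction. Suppose $\frac{\partial v_i}{\partial q_l} = \frac{\partial v_i}{\partial p_l}$. Multiplying through by $v_i$ and subtracting, linearity of $\operatorname{Re}\{\cdot\}$ gives $\operatorname{Re}\{\bar{v}_i^*(\frac{\partial \bar{v}_i}{\partial q_l}-\frac{\partial \bar{v}_i}{\partial p_l})\}=0$, i.e. $\operatorname{Re}\{\bar{v}_i\}(a-c)+\operatorname{Im}\{\bar{v}_i\}(b-d)=0$, so the difference vector $(a-c,\,b-d)$ is orthogonal to $(\operatorname{Re}\{\bar{v}_i\},\operatorname{Im}\{\bar{v}_i\})$. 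By Remark \ref{remark:unique_complex_vph_sens} the complex sensitivities take distinct values, so $(a-c,\,b-d)\neq(0,0)$ is a genuine nonzero vector lying on that line. The contradiction must then be closed by showing this orthogonality is incompatible with the hypotheses, i.e. that $(a,b)$ and $(c,d)$ both avoiding the line $\operatorname{Re}\{\bar{v}_i\}s+\operatorname{Im}\{\bar{v}_i\}t=0$ forces their projections onto $(\operatorname{Re}\{\bar{v}_i\},\operatorname{Im}\{\bar{v}_i\})$ to differ.

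I expect this last step to be the main obstacle, because non-membership of $(a,b)$ and $(c,d)$ in the line does not by itself prevent the two projections from coinciding—two distinct points can share the same nonzero projection. Bridging \emph{distinct complex sensitivities} (Remark \ref{remark:unique_complex_vph_sens}) to \emph{distinct real projections} is therefore the crux, and I would attack it by feeding the diagonal ($i=l$) equations of the systems \eqref{eq:vph_sensitivities_definition} back in: their right-hand sides differ by the factor $-j$, which constrains the relative phase of $\frac{\partial \bar{v}_i}{\partial q_l}$ and $\frac{\partial \bar{v}_i}{\partial p_l}$ and should obstruct equality of their $\bar{v}_i^*$-projections once \eqref{zero_con_1}--\eqref{zero_con_2} hold. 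If that phase constraint proves insufficient in full generality, the honest fallback is that the hypotheses guarantee non-vanishing of both sensitivities, with strict distinctness holding outside a measure-zero coincidence set.
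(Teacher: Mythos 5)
Your setup is identical to the paper's: expand $\bar{v}_i^*\frac{\partial \bar{v}_i}{\partial q_l}$ and $\bar{v}_i^*\frac{\partial \bar{v}_i}{\partial p_l}$, apply Lemma \ref{lemma:representation} to obtain $\frac{\partial v_i}{\partial q_l} = \frac{1}{v_i}(\operatorname{Re}\{\bar{v}_i\}a + \operatorname{Im}\{\bar{v}_i\}b)$ and $\frac{\partial v_i}{\partial p_l} = \frac{1}{v_i}(\operatorname{Re}\{\bar{v}_i\}c + \operatorname{Im}\{\bar{v}_i\}d)$, and read \eqref{zero_con_1}--\eqref{zero_con_2} as non-vanishing of the two numerators. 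Where you stop short is the last step, and you are right to flag it: knowing that $(a,b)\neq(c,d)$ (from Remark \ref{remark:unique_complex_vph_sens}) and that neither point lies on the line $\operatorname{Re}\{\bar{v}_i\}s+\operatorname{Im}\{\bar{v}_i\}t=0$ does not by itself imply that the two projections onto $(\operatorname{Re}\{\bar{v}_i\},\operatorname{Im}\{\bar{v}_i\})$ differ. A two-line counterexample: take $\bar{v}_i=1$, $(a,b)=(1,1)$, $(c,d)=(1,2)$; both hypotheses hold and the points are distinct, yet both projections equal $1$. So your proposal, as written, does not prove the lemma: the phase argument via the $-j$ on the right-hand side of \eqref{complex_sensitivities_2} is the right instinct but is only sketched, and the measure-zero fallback establishes a strictly weaker statement than the one claimed.

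You should know, however, that the paper's own proof makes exactly the jump you declined to make. After deriving the same two scalar expressions, it concludes from ``either $a\neq c$ or $b\neq d$'' together with the non-vanishing conditions that the magnitude sensitivities must differ, with no intermediate argument. In that sense your attempt is not behind the paper --- it reproduces the paper's reasoning faithfully and then correctly isolates the implication that is asserted rather than proved. Closing the gap genuinely requires importing additional structure from the systems \eqref{eq:vph_sensitivities_definition}, such as the relative phase constraint you point to, which neither your proposal nor the paper's appendix carries out explicitly.
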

%Similarly, let $\bar{v}_i^* \triangleq \operatorname{Re}\{\bar{v}_i\}-j\operatorname{Im}\{\bar{v}_i\}$ be the complex conjugate of the phasor voltage. 
\begin{proof}
    See Appendix \ref{apdx:proof_of_distinct_vmag_sens}.
\end{proof}

Lemma \ref{lemma:distinct_vmag_sens} implies that the matrices $\mathbf{S}^v_p$, $\mathbf{S}^v_q$ in \eqref{nr_powerflow} are full rank for a radial distribution network, and furthermore, that the matrix $\tilde{\mathbf{S}}$ has full column rank for any subset of the columns whose cardinality is less than $\frac{n}{2}$. Essentially, the voltage magnitude sensitivities to active and reactive power injections will always yield a unique solution, and it is possible to quantify changes in both active and reactive power injections using only voltage magnitudes.%Essentially, the voltage magnitude sensitivities to active and reactive power injections in a radial network take distinct values. Practically, this indicates that it is possible to quantify changes in both active and reactive power injections using only voltage magnitudes.

% We note that, if these assumptions hold, it is therefore possible to estimate \textit{complex} power dynamics -- i.e., the deviations of both active and reactive power injections into the network -- using just the voltage \textit{magnitude} deviations and the linear sensitivity model. 

\begin{remark}
Consider a bus $l\in \mathcal{N}$ in a radial distribution network with unknown complex power injections. Given a vector of voltage magnitude perturbations $\Delta \boldsymbol{v} \in \mathbb{R}^n$ and a tall matrix of sensitivities of the network voltage magnitudes to the active and reactive power injections at bus $l$,
\begin{equation}
   \mathbf{S}_{\perp} \triangleq \begin{bmatrix}
       \frac{\partial v_1}{\partial p_l} & \dots & \frac{\partial v_n}{\partial p_l}\\
       \frac{\partial v_1}{\partial q_l} & \dots & \frac{\partial v_n}{\partial q_l} 
    \end{bmatrix}^T \in \mathbb{R}^{n \times 2},
\end{equation}
then there is a unique least squares solution for the rectangular complex power perturbation $\Delta \boldsymbol{x} = \left[ \Delta p_l, \Delta q_l \right]^T$ such that
\begin{equation}
\label{eq:lsq}
    \Delta \boldsymbol{x} = \big(\mathbf{S}_{\perp}^T \mathbf{S}_{\perp} \big)^{-1} \mathbf{S}_{\perp}^T \Delta \boldsymbol{v},
\end{equation}
because, by Lemma \ref{lemma:distinct_vmag_sens}, the system $\mathbf{S}_{\perp} \Delta \boldsymbol{x} = \boldsymbol{0}$
   has a solution if and only if $\Delta \boldsymbol{x} = \boldsymbol{0}$. Therefore, $\mathbf{S}_{\perp}$ has full rank and \eqref{eq:lsq} will always exist.
\end{remark}

\subsection{Relating Active and Reactive Power Perturbations}
\label{sec:encoding_reactive_power}
In this section, we describe how to use the bus power factors to encode the impact of reactive power injections on voltage magnitudes as an equivalent active power injection.
\begin{lemma}
\label{lemma:K_matrix}
    Let $\valpha \triangleq [ \alpha_1,\dots,\alpha_n]^T \in \R^n$ be the bus power factors. Let $\Delta \vp\triangleq\vp - \vp_0 \in \R^n$ and $\Delta \vq \triangleq \vq - \vq_0 \in \R^n$  be vectors of active and reactive power perturbations around an operating point, where $\Delta \vp,\Delta \vq \neq \vct{0}$. Then:
    \begin{equation}
        \alpha_i \in (0,1) \  i =1,\dots,n \iff \exists \mK(\valpha): \R^n \mapsto  \R^{n \times n},
    \end{equation}
    such that $k(\alpha_i) \triangleq K_{ii} \triangleq \pm \frac{1}{\alpha_i} \left(1- \alpha_i^2\right)^{\frac{1}{2}}$ for $i=1,\dots,n$, and where we define $\mK(\valpha) \triangleq \operatorname{diag}(k(\boldsymbol{\alpha}))$ so that %This matrix has an inverse $\mathbf{K}^{-1}$ with entries $K^{-1}_{ii} = \pm \alpha_i (1-\alpha_i^2)^{-1/2}$. These matrices satisfy 
                %\begin{subequations}
                \begin{align}
                    \Delta \vq(\Delta \vp| \valpha) = \mathbf{K}(\valpha)\Delta \vp,\\
                    \Delta \vp(\Delta \vq | \valpha) = \mathbf{K}^{-1}(\valpha)\Delta \vq,\\
                    \mathbf{K}(\valpha) \mathbf{K}^{-1}(\valpha) = \mK^{-1}(\valpha) \mK(\valpha) = \mId,
                \end{align}
                %\end{subequations}
                where $\mId$ is the identity matrix.
                \end{lemma}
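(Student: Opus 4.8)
The plan is to prove both directions of the equivalence by reducing everything to the elementwise analysis of the scalar map $k(\alpha_i) = \pm\frac{1}{\alpha_i}(1-\alpha_i^2)^{1/2}$, since $\mathbf{K}(\valpha)$ is diagonal and therefore inherits all of its structural properties (well-definedness, invertibility) entrywise. First I would establish the physical meaning of $k(\alpha_i)$: starting from the power factor definition \eqref{eq:pf_def}, the identity $\tan(\theta_i-\phi_i) = \pm\sqrt{1-\cos^2(\theta_i-\phi_i)}/\cos(\theta_i-\phi_i)$ gives $q_i/p_i = k(\alpha_i)$, with the sign $\pm$ encoding whether the bus has a leading or lagging power factor. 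Holding the power factor fixed across the perturbation about the operating point then yields the elementwise relation $\Delta q_i = k(\alpha_i)\,\Delta p_i$, which is exactly the $i$-th component of $\Delta\vq = \mathbf{K}(\valpha)\Delta\vp$.

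For the forward direction, I would assume $\alpha_i\in(0,1)$ and verify that each diagonal entry $k(\alpha_i)$ is a finite, nonzero real number: $\alpha_i>0$ guarantees $1/\alpha_i$ is finite, while $\alpha_i<1$ guarantees $1-\alpha_i^2>0$, so the square root is real and $k(\alpha_i)\neq 0$. Consequently $\mathbf{K}(\valpha)=\operatorname{diag}(k(\valpha))$ is a diagonal matrix with no zero on its diagonal, hence invertible with $\mathbf{K}^{-1}(\valpha)=\operatorname{diag}(1/k(\alpha_i))$; the relations $\Delta\vp = \mathbf{K}^{-1}(\valpha)\Delta\vq$ and $\mathbf{K}\mathbf{K}^{-1}=\mathbf{K}^{-1}\mathbf{K}=\mId$ then follow immediately from diagonality.

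For the converse, I would assume such a $\mathbf{K}(\valpha)$ exists with the stated invertibility properties and run the same entrywise requirements backwards. Requiring $k(\alpha_i)$ to be a well-defined real number forces $1-\alpha_i^2\ge 0$, i.e. $|\alpha_i|\le 1$; finiteness forces $\alpha_i\neq 0$; and invertibility (a nonzero diagonal entry) forces $1-\alpha_i^2\neq 0$, i.e. $\alpha_i\neq\pm 1$. Together these give $|\alpha_i|\in(0,1)$, and the standing convention that the power factor is nonnegative ($\alpha_i = p_i/\sqrt{p_i^2+q_i^2}>0$) selects the positive branch $(0,1)$. The hypothesis $\Delta\vp,\Delta\vq\neq\vct{0}$ ensures the perturbations are genuine, so that the entrywise ratios are actually constrained rather than vacuously satisfied.

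The main obstacle I anticipate is not the algebra but pinning down the two conventions cleanly: the sign choice in $\pm$ must be tied consistently to the sign of $q_i/p_i$ at each bus rather than chosen freely, and the restriction to $(0,1)$ rather than $(-1,0)\cup(0,1)$ relies on the sign convention built into the power-factor definition. I would state the constant-power-factor assumption explicitly, since the linear map $\Delta\vq=\mathbf{K}\Delta\vp$ holds only when $\alpha_i$ is held fixed as the injection varies about the operating point.
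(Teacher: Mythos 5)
Your proof is correct and follows essentially the same route as the paper's: derive the entrywise relation $q_i = \pm\alpha_i^{-1}(1-\alpha_i^2)^{1/2}p_i$ from the power-factor definition (the paper does this algebraically from $p_i=\alpha_i\sqrt{p_i^2+q_i^2}$, you via the equivalent tangent identity), and handle the converse by observing that $\alpha_i=1$ forces $K_{ii}=0$ while $\alpha_i=0$ breaks $K_{ii}$ or its inverse. Your explicit remarks on the constant-power-factor assumption needed to pass from $q_i=k(\alpha_i)p_i$ to $\Delta\vq=\mathbf{K}(\valpha)\Delta\vp$ and on tying the $\pm$ sign to the sign of $q_i/p_i$ are careful additions that the paper leaves implicit.
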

                \begin{proof}
                    If $\alpha_i \in (0,1) \ \forall i, \ $ then we can write the net nodal active power injection as \eqref{eq:net-q-representation-impl}:
                    \begin{subequations}
                        \label{eq:net-q-representation-impl}
                         \begin{align}
                            p_i \triangleq \Re{p_i + j q_i} \triangleq \alpha_i \sqrt{p_i^2+q_i^2},
                            \\
                            \iff p_i^2 = \alpha_i^2 \left(p_i^2 + q_i^2\right)
                            \iff p_i^2 \left(1-\alpha_i^2\right) = \alpha_i^2 q_i^2,
                            \\
                            \iff q_i = \pm \alpha_i^{-1} (1-\alpha_i^2)^{\frac{1}{2}} p_i.%\frac{\pm p_i}{\left(1-\alpha_i^2\right)^{\frac{1}{2}}}.
                    \end{align}
                    \end{subequations}
                    For the only if condition, consider that if $\exists i \in \{1,\dots,n\}$ such that $\alpha_i \notin (0,1)$; this implies that there exists an  $\alpha_i =1$ such that $K_{ii} = 0$ or $\alpha_i =0$ and $K_{ii}^{-1}=0$.
                    
                %     In either case, $0$ must be an eigenvalue of $\mK$, as $\min_i \vlambda(\mK)[i] \triangleq \min_i $. Recall that for any matrices $\mA,\mB$, $\det(\mA\mB) = \det(\mA)\det(\mB),$ therefore
                %     \begin{equation}
                %         \det\left(
                %         \stildedag
                %         \right) \triangleq \det\left(
                % A
                %         \right)
                %     \end{equation}
                \end{proof}
\begin{lemma}
\label{lemma:implicit_representation}
Let $\valpha \triangleq [ \alpha_1,\dots,\alpha_n]^T \in (0,1]^n$ be the bus power factors. The network voltage deviation vector $\Delta \boldsymbol{v} \triangleq \vv - \vv_0$ can be written as:
$\Delta \boldsymbol{v} = \stildedag(\valpha)\Delta \boldsymbol{p},$
 where we define %$k(\alpha_i) \triangleq \pm \frac{1}{\alpha_i} \sqrt{1- \alpha_i^2}$, and
\begin{subequations}
\label{eq:implicit_representation_def}
\begin{align}
     \stildedag(\valpha) : \R^n \mapsto \R^{n \times n} \triangleq \left(  \mathbf{S}_p^{v}+ \mathbf{S}_q^v \mathbf{K}(\valpha)\right),\\
     \mathbf{K}(\valpha): \R^n \mapsto \R^{n \times n} \triangleq \operatorname{diag}\big( k(\boldsymbol{\alpha})\big),
\end{align}
\end{subequations}
% and we define $\mathbf{K}$ as
    % \begin{equation}
        %\mathbf{K} \triangleq \text{diag}\left( \frac{Q_1}{P_1},\dots,\frac{Q_{n}}{P_{n}}\right),
        %$\mathbf{K} \triangleq \operatorname{dg}\bigg( \frac{\sqrt{1-\alpha_i^2}}{\alpha_i}\bigg) \in \mathbb{R}^{n \times n }$,
    % \end{equation}
and where $\operatorname{diag}(\vx) \in \R^{n \times n}$ denotes a diagonal matrix constructed from a vector $ \vx \in \R^n$ such that $\operatorname{diag}(\vx)[i,j] = 0 \ \forall i \neq j$ and  $\operatorname{diag}(\vx)[i,i] \triangleq x_i \ \forall \ i=1\dots,n$.
\end{lemma}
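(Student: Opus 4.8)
The plan is to obtain the claimed representation by a direct substitution that chains together the underdetermined linearized system \eqref{eq:underdetermined_nr_inv_system} with the power-factor relation established in Lemma \ref{lemma:K_matrix}. First I would expand \eqref{eq:underdetermined_nr_inv_system} into its two additive blocks, writing $\Delta\boldsymbol{v} = \mathbf{S}_p^{v}\,\Delta\boldsymbol{p} + \mathbf{S}_q^{v}\,\Delta\boldsymbol{q}$, which simply reads off the row partition of the inverse Jacobian in \eqref{nr_powerflow}. This reduces the task to eliminating the free variable $\Delta\boldsymbol{q}$ in favor of $\Delta\boldsymbol{p}$.

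Next I would invoke Lemma \ref{lemma:K_matrix}, whose hypothesis $\valpha \in (0,1]^n$ guarantees that the diagonal operator $\mathbf{K}(\valpha) = \operatorname{diag}\big(k(\boldsymbol{\alpha})\big)$ exists and satisfies $\Delta\boldsymbol{q} = \mathbf{K}(\valpha)\,\Delta\boldsymbol{p}$. Substituting this identity for $\Delta\boldsymbol{q}$ in the expanded system and factoring the common right factor $\Delta\boldsymbol{p}$ yields $\Delta\boldsymbol{v} = \big(\mathbf{S}_p^{v} + \mathbf{S}_q^{v}\mathbf{K}(\valpha)\big)\,\Delta\boldsymbol{p} = \stildedag(\valpha)\,\Delta\boldsymbol{p}$, which is exactly the asserted form with $\stildedag(\valpha)$ as defined in \eqref{eq:implicit_representation_def}. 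The argument is purely algebraic once $\mathbf{K}$ is in hand, so no estimates are required.

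The one place that warrants care---and the only genuine obstacle---is the mismatch between the domains in the two lemmas: Lemma \ref{lemma:K_matrix} states its biconditional over the open interval $\alpha_i \in (0,1)$, whereas the present lemma permits the closed endpoint $\alpha_i = 1$. I would resolve this by observing that the construction of $\mathbf{K}$ only uses the forward direction of Lemma \ref{lemma:K_matrix}, and that the entry $k(\alpha_i) = \pm\alpha_i^{-1}(1-\alpha_i^2)^{1/2}$ extends continuously to $\alpha_i = 1$, where it takes the value $0$. Physically this is the unity-power-factor bus, at which $\Delta q_i = 0$, so the relation $\Delta\boldsymbol{q} = \mathbf{K}(\valpha)\,\Delta\boldsymbol{p}$ still holds componentwise. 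The endpoint $\alpha_i = 1$ is admissible precisely because this lemma invokes only $\mathbf{K}(\valpha)$ and never its inverse---the inverse is what fails at $\alpha_i = 1$, since $K_{ii} = 0$ there. With this observation the substitution goes through verbatim, and no further case analysis is needed.
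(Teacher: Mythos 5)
Your proof is correct and follows essentially the same route as the paper's: expand $\Delta\boldsymbol{v} = \mathbf{S}_p^{v}\Delta\boldsymbol{p} + \mathbf{S}_q^{v}\Delta\boldsymbol{q}$, substitute $\Delta\boldsymbol{q} = \mathbf{K}(\valpha)\Delta\boldsymbol{p}$ from Lemma \ref{lemma:K_matrix}, and factor. Your additional remark about the endpoint $\alpha_i = 1$ (where $k(\alpha_i)=0$ and only $\mathbf{K}$, never $\mathbf{K}^{-1}$, is needed) is a genuine improvement in care over the paper's proof, which passes over this domain mismatch silently.
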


\begin{proof}
    %For any power factor $\alpha_i \in (0,1]$ at bus $i \in \mathcal{N}$, via the implicit function theorem we can express the reactive power injection as a function of the active power injection:
    By Lemma \ref{lemma:K_matrix}, we can express the reactive power injection as a function of the active power injection parameterized by the bus power factor:
        % \begin{equation}
        %     q_i =\frac{\Delta q_i}{\Delta {p}_i} p_i \implies  \Delta q_i = \frac{q_i}{p_i} \Delta p_i,
        % \end{equation}
        % where we require that power factors $\alpha_i, i \in \mathcal{N}$ satisfy
        % \begin{equation}
        %     \alpha_i = \cos \left( \arctan \left( \frac{\Delta q_i}{\Delta p_i} \right) \right) \ \forall i \in \mathcal{N}.
        % \end{equation}
    %In other words, for each power factor $\alpha_i, i \in \mathcal{N}$, we can write the reactive power injection at bus $i$ as 
    \begin{equation}
        q_i(p_i|\alpha_i) = k(\alpha_i) p_i =  %-\operatorname{sign}(q_i)\frac{\sqrt{1-\alpha_i^2}}{\alpha_i} p_i,
       % \pm \frac{p_i}{\alpha_i} \sqrt{1- \alpha_i^2}.  
       \pm \frac{p_i}{\alpha_i} \left( 1- \alpha_i^2\right)^{\frac{1}{2}}.
    \end{equation}
    Therefore, we can express the voltage deviation vector $\Delta \boldsymbol{v} \in \mathbb{R}^n$ as \eqref{eq:voltage-dev-representation-impl}:
    \begin{subequations}
        \label{eq:voltage-dev-representation-impl}   
        \begin{align}
            \Delta \boldsymbol{v} &= \mathbf{S}_p^{v} \Delta \boldsymbol{p} + \mathbf{S}_q^v \mathbf{K}(\valpha) \Delta \boldsymbol{p},\\
            \label{eq:introducing_stilde_dagger}
        &=     \left(  \mathbf{S}_p^{v}+ \mathbf{S}_q^v \mathbf{K}(\valpha)\right) \Delta \boldsymbol{p} \triangleq \stildedag\Delta \boldsymbol{p},
        \end{align}
    \end{subequations}
    which is what we wanted to show. 
\end{proof}
\begin{figure}
    \centering
    \includegraphics[width=\linewidth,keepaspectratio]{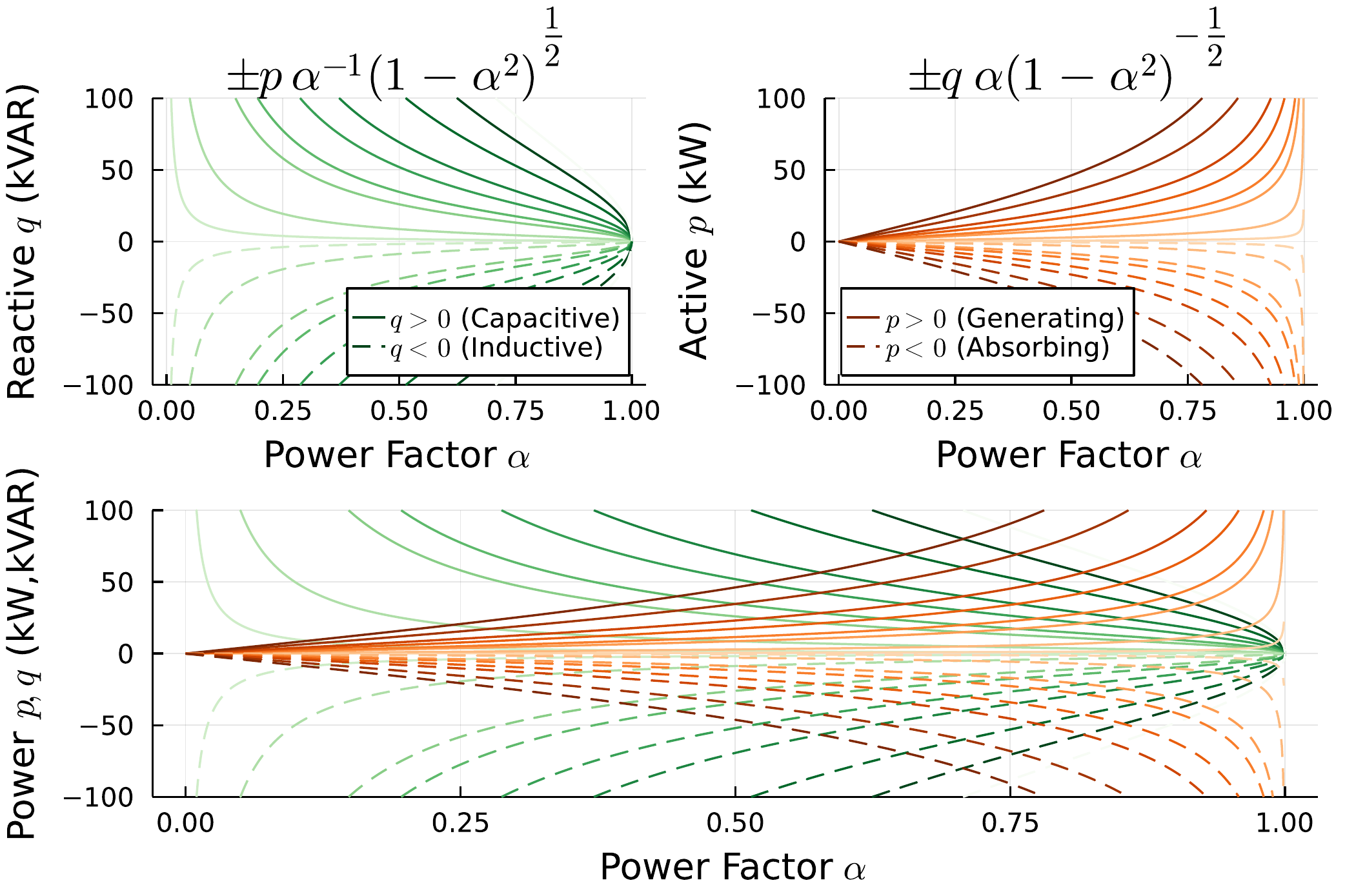}
    \caption{Representing the net reactive (green) and active (orange) power injections as parameterized functions of the power factor. Color intensity represents the size of the injection opposite that shown on the vertical axis.} 
    
    %which ranges from 1 to 80 kW/kVAR in this figure.}
    
    %\caption{Representing a net reactive power injection $q_i$ as a parameterized function of the power factor $\alpha_i$ as the net active power injection $p_i$.}
    
    %\caption{Representing $\boldsymbol{q}$ and $\boldsymbol{p}$ as functions of one another given power factors $\boldsymbol{\alpha}$. Color intensity represents the size of the active power injection $p_i$.}
    %\includegraphics[width=0.9\linewidth,keepaspectratio]{}
    %\caption{Representing reactive power as an implicit function of the power factor and active power injection. Color intensity represents the size of the active power injection $p_i$, which ranges from 1 to 80 kW in this figure.}
    \label{fig:reactive_power_representation}
\end{figure}
% \begin{figure}
%     \centering
%      %\includegraphics[width=0.9\linewidth,keepaspectratio]{}
%     %\includegraphics[width=\linewidth,keepaspectratio]{}
%     \includegraphics[width=\linewidth,keepaspectratio]{figs/beautiful_implicit_representation_cap_ind.pdf}
%     \caption{Allowing for flexibility in sign convention, representing the net reactive (green) and active (orange) power injections as parameterized functions of the power factor.}
%     \label{fig:pos-neg-implicit}
% \end{figure}
This representation of reactive power is illustrated in Fig.~\ref{fig:reactive_power_representation}. Lemma \ref{lemma:implicit_representation} is useful because it allows us to represent the impacts of changes in reactive power on the voltage magnitudes as equivalent changes in active power by exploiting the relationship between active and reactive power for known bus power factors, which can be efficiently estimated from historical AMI data \cite{talkington_recovering_2021,talkington_power_2021}.

\section{Estimation conditions}
\label{sec:main_results}
    This section contains the main analytical results of this paper. Section \ref{sec:thm1_neumann} develops a sufficient condition for a unique solution to the underdetermined system $\Delta \vv = \tilde{\mS} \Delta \vx$. Section \ref{sec:power_factor_bound} develops bounds to analyze the feasible power factors that satisfy the sufficient condition in Section \ref{sec:thm1_neumann}. Section \ref{sec:thm2_eigval} develops a necessary and sufficient condition to solve the system $\Delta \vv = \tilde{\mS} \Delta \vx$ based on the singularity conditions of two linear operators derived from the bus power factors and the Newton-Raphson power flow Jacobian.
    \subsection{Neumann series-based sufficient conditions}
        \label{sec:thm1_neumann}
        Below, two Assumptions are stated that allow us to develop a sufficient condition for when the inverse of $\stildedag$ as defined in  \eqref{eq:introducing_stilde_dagger} is guaranteed to exist.
        \begin{assumption}
        \label{assum:nonsingular_jacobian}
        The power flow Jacobian is nonsingular and the unknown angle sensitivity submatrix $\dpdth$ is positive definite. %symmetric part of the unknown angle sensitivity submatrix $\dpdth$, i.e., $\frac{1}{2}(\dpdth + \dpdth^T)$, is positive definite.
        \end{assumption}
        % \begin{table*}[h]
        % \renewcommand{\arraystretch}{1.45}
        % \caption{Validity of Assumption 1 on PQ Buses or PQ and PV Buses of Test cases}
        % %\vspace{-3mm}
        % \label{tab:assumption_1_validity}
        % \begin{center}
        % \begin{tabular}{|c|c||c||c||c||c||c||c||c||c||c||c|}
        % \hline
        % % & &  \multicolumn{8}{|c|}{Test Case} \\
        % %\hline
        % % &  \textbf{Test Case: } & case3 & case4\_dist& case5 & case5\_x & case6 & case9 & case14 & case18 & case24 & case30 & case118zh\\
        %  & \textbf{Test Case: } & case4\_dist & case10ba & case15da  & case18 & case22 & case33bw &  case51ga & case85 & case118zh & case141\\
        % \hline
        %  \textbf{Assumption} & $\dpdth \succ 0$ & Yes & Yes & Yes &  Yes & Yes & Yes & Yes & Yes&  Yes &  Yes\\
        % \cline{2-12}
        %  \textbf{True?}& $\mathbf{J}$ Invertible & Yes & Yes & Yes & Yes & Yes & Yes & Yes & Yes & Yes& Yes\\
        % \hline
        % % \textbf{Invertibility} & $k_{\text{max}}-k_{\text{min}}$ & \\
        % % \cline{2-10}
        % % \textbf{Bound:} & $\alpha_{\text{max}}-\alpha_{\text{min}}$ &  0.055 & 0.090 & 0.037 &  0.0004 & 0.027 & 0.021 & 0.012 & 0.0025\\ 
        % % \hline
        % \end{tabular}
        % \end{center}
        % %\vspace{-6mm}
        % \end{table*}
        The non-singularity of the Jacobian is a reasonable assumption, as power systems typically operate far from the point of voltage collapse. While counterexamples do exist (for example, as described in \cite{grijalva_singularity,hiskens_singularity,baker_jacobian_2013}), the full power flow Jacobian can typically be expected to be nonsingular if normal network operating conditions are assumed. The assumption of $\dpdth$ being positive definite is not restrictive, as it holds in most practical cases (see \cite{farivar_equilibrium_2013,zhu_fast_2016,admittance_invertibility}).

        \begin{assumption}
        \label{assum:k_difference_bound}
        The difference between the maximum and minimum elements of $\mathbf{K}(\valpha)$, 
        \begin{equation}
            \Delta k \triangleq k_{\rm max}-k_{\rm min} = \frac{\sqrt{1- \alpha_{\rm min}^2}}{\alpha_{\rm min}} - \frac{\sqrt{1-\alpha_{\rm max}^2}}{\alpha_{\rm max}},
        \end{equation}
        is sufficiently small relative to an expression that depends on the power-to-voltage-phase-angle sensitivity matrices $\dpdth$ and $\dqdth$, which will be defined explicitly in \eqref{eq:inv_condition_1} and \eqref{eq:inv_condition_2}.
        \end{assumption}
        
        %If Assumption 1 and Assumption 2 hold, we yield the following result in the form of Theorem \ref{thm1:suff_cond}. In our empirical tests, which are discussed further in Section \ref{sec:case_studies}, we show results with several electric power distribution network models that indicate that these Assumption 1 and Assumption 2 can hold in practice.
        
        %The above assumption will often hold in practice. While counterexamples do exist \cite{grijalva_singularity,kyrib_singularity,hiskens_singularity}, the full power flow Jacobian can often be expected to be nonsingular in normal network operating conditions. 

            \begin{theorem}
            \label{thm1:suff_cond}
                %Assume the difference between the maximum and minimum power factors in the network is small, the power flow Jacobian is invertible, and $\dpdth \succ 0, \dqdth \prec 0 $. Then, given any voltage magnitude perturbations $\Delta \boldsymbol{v} \in \mathbb{R}^n$, there exists unique rectangular complex power perturbations $\Delta \boldsymbol{x} \triangleq [\Delta \boldsymbol{p}^T, \Delta \boldsymbol{q}^T]^T \in \mathbb{R}^{2n}$ such that $\Delta \boldsymbol{v} = \tilde{\mathbf{S}} \Delta \boldsymbol{x}.$
                
                Let $\Delta \boldsymbol{v} \in \mathbb{R}^n$ be a vector of voltage magnitude perturbations. If Assumptions \ref{assum:nonsingular_jacobian} and \ref{assum:k_difference_bound} hold, there exists unique complex power perturbations $\Delta \boldsymbol{x} \triangleq [\Delta \boldsymbol{p}^T, \Delta \boldsymbol{q}^T]^T \in \mathbb{R}^{2n}$ in rectangular coordinates such that $\Delta \boldsymbol{v} = \tilde{\mathbf{S}} \Delta \boldsymbol{x}.$
            \end{theorem}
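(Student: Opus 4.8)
The plan is to collapse the underdetermined $n\times 2n$ system into a square $n\times n$ one using the power-factor representation of Lemma~\ref{lemma:implicit_representation}, and then certify invertibility of the resulting square operator $\stildedag$ by a Neumann-series (small-perturbation) argument driven by Assumption~\ref{assum:k_difference_bound}. Concretely, restricting $\Delta\boldsymbol{v} = \tilde{\mathbf{S}}\Delta\boldsymbol{x}$ to the power-factor-consistent subspace $\{\Delta\boldsymbol{x} : \Delta\boldsymbol{q} = \mathbf{K}(\valpha)\Delta\boldsymbol{p}\}$, Lemma~\ref{lemma:implicit_representation} rewrites the system as $\Delta\boldsymbol{v} = \stildedag(\valpha)\Delta\boldsymbol{p}$ with $\stildedag = \mathbf{S}^v_p + \mathbf{S}^v_q\mathbf{K}(\valpha)$. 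If I can show $\stildedag$ is nonsingular, there is a unique $\Delta\boldsymbol{p} = \stildedag^{-1}\Delta\boldsymbol{v}$, and Lemma~\ref{lemma:K_matrix} supplies the matching unique $\Delta\boldsymbol{q} = \mathbf{K}(\valpha)\Delta\boldsymbol{p}$, hence a unique $\Delta\boldsymbol{x}$. The whole theorem therefore reduces to invertibility of the single matrix $\stildedag$.

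Next I would use Assumption~\ref{assum:nonsingular_jacobian} to re-express the inverse-Jacobian blocks $\mathbf{S}^v_p,\mathbf{S}^v_q$ in terms of the forward Jacobian blocks. Since $\dpdth$ is positive definite it is invertible, so I can form the Schur complement $\mM \triangleq \dqdv - \dqdth(\dpdth)^{-1}\dpdv$, which is nonsingular because $\det\mathbf{J} = \det(\dpdth)\,\det\mM \neq 0$ under Assumption~\ref{assum:nonsingular_jacobian}. The block-inversion formula then gives $\mathbf{S}^v_q = \mM^{-1}$ and $\mathbf{S}^v_p = -\mM^{-1}\dqdth(\dpdth)^{-1}$, so that
\[
\stildedag = \mM^{-1}\big(\mathbf{K}(\valpha) - \dqdth(\dpdth)^{-1}\big) = \mM^{-1}\big(\mathbf{K}(\valpha)\dpdth - \dqdth\big)(\dpdth)^{-1}.
\]
Because $\mM$ and $\dpdth$ are invertible, $\stildedag$ is nonsingular if and only if the central factor $\mathbf{K}(\valpha)\dpdth - \dqdth$ is nonsingular — an object that depends only on the power factors and the two angle-sensitivity blocks $\dpdth,\dqdth$, exactly the quantities referenced in Assumption~\ref{assum:k_difference_bound}.

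Finally I would certify nonsingularity of $\mathbf{K}(\valpha)\dpdth - \dqdth$ by splitting the diagonal into a constant part and a bounded fluctuation, $\mathbf{K}(\valpha) = \bar{k}\,\mId + \Delta\mathbf{K}$, where $\bar{k}$ is the midpoint of $[k_{\min},k_{\max}]$ and $\norm{\Delta\mathbf{K}} \le \Delta k/2$. This gives $\mathbf{K}(\valpha)\dpdth - \dqdth = (\bar{k}\,\dpdth - \dqdth) + \Delta\mathbf{K}\,\dpdth$, a perturbation of the fixed matrix $\bar{k}\,\dpdth - \dqdth$. Provided that base matrix is invertible (the content of the condition to be stated explicitly in \eqref{eq:inv_condition_1}) and that $\Delta k$ is small enough that $\norm{(\bar{k}\dpdth - \dqdth)^{-1}\Delta\mathbf{K}\,\dpdth} \le \tfrac{\Delta k}{2}\,\norm{(\bar{k}\dpdth-\dqdth)^{-1}}\,\norm{\dpdth} < 1$ (the explicit smallness bound of \eqref{eq:inv_condition_2}, i.e. Assumption~\ref{assum:k_difference_bound}), the Neumann series $\sum_{j\ge 0}\big(-(\bar{k}\dpdth-\dqdth)^{-1}\Delta\mathbf{K}\,\dpdth\big)^{j}$ converges and furnishes the inverse. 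Chaining back, $\stildedag$ is invertible and $\Delta\boldsymbol{x}$ is unique.

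The hard part will be the last two steps: verifying that the centered base matrix $\bar{k}\dpdth - \dqdth$ is genuinely invertible (this is where the positive-definiteness half of Assumption~\ref{assum:nonsingular_jacobian} is leveraged, since $\dqdth$ need not be symmetric) and pinning down the precise operator-norm inequality so that the Neumann series provably contracts. The reduction and block-inversion steps are routine; essentially all of the delicate accounting lies in matching Assumption~\ref{assum:k_difference_bound} to a sharp-enough bound on $\Delta k$.
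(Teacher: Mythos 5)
Your proposal follows essentially the same route as the paper's proof: reduction to invertibility of $\stildedag$ via Lemma~\ref{lemma:implicit_representation}, a Schur-complement expression of $\mathbf{S}^v_p,\mathbf{S}^v_q$ in terms of the forward Jacobian blocks leading to the factor $\mathbf{K}(\valpha)\dpdth - \dqdth$, and a Neumann-series perturbation argument driven by the smallness of $\Delta k$. The only deviation is cosmetic: you center $\mathbf{K}(\valpha)$ at the midpoint of $[k_{\rm min},k_{\rm max}]$ whereas the paper centers at $k_{\rm max}$ and takes $\mathbf{M}\triangleq k_{\rm max}\dpdth-\dqdth\succ 0$ as the base matrix (asserting, rather than proving, the positive definiteness you correctly flag as the delicate point), which yields the bound $\Delta k < \|\mathbf{M}^{-1}\|_2^{-1}\|\dpdth\|_2^{-1}$ in place of your factor-of-two variant.
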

            \begin{proof}
               Using Lemma \ref{lemma:implicit_representation}, it now suffices to show that
            \begin{equation}
            \label{s_dagger}
                \stildedag \triangleq \left(  \mathbf{S}_p^{v}+ \mathbf{S}_q^v \mathbf{K}(\valpha) \right)
            \end{equation} 
            is invertible to complete the proof.
            
            If Assumption 1 holds, then both $\mathbf{J}$ and $\dpdth$ are invertible. Thus, we can apply the Schur Complement to write the reactive power voltage sensitivity matrix in terms of the blocks of the Jacobian in \eqref{nr_powerflow} as
            \begin{equation}
                \label{sqv_schur}
                \mathbf{S}_q^v = %\left(\boldsymbol{J} \bigg/ \dpdth \right)^{-1} = 
                \left( \dqdv - \dqdth \left({\dpdth}\right)^{-1} \dpdv \right)^{-1}, 
            \end{equation}
            and the active power voltage sensitivity matrix is then:
            {\small
            \begin{align}
                \mathbf{S}_p^v &= - %\left(\boldsymbol{J} \bigg/ \dpdth \right)^{-1} 
                 \left( \dqdv - \dqdth \left({\dpdth}\right)^{-1} \dpdv \right)^{-1}\dqdth \left( \dpdth \right)^{-1},\\
                \label{spv_schur}
                &= - \mathbf{S}_q^{v} \dqdth \left( \dpdth \right)^{-1}.
            \end{align}
            }
            Combining \eqref{s_dagger}, \eqref{sqv_schur}, and \eqref{spv_schur}, we can express $\stildedag$ as: 
            \begin{align}
                \stildedag &= \mathbf{S}_q^v\left( \mathbf{K} - \dqdth \left( \dpdth\right)^{-1}\right).
            \end{align}
            Thus, we have that
            \begin{equation}
                (\mathbf{S}_q^v)^{-1} \stildedag \dpdth =  \mathbf{K} \dpdth -\dqdth.
            \end{equation}
            Let $k_{\rm max}$ and $k_{\rm min}$ denote the maximum and minimum entries of $\mathbf{K}$, respectively. Recall that we defined $\Delta k \triangleq k_{\rm max} - k_{\rm min}$ and let $\Delta \mathbf{K} \triangleq k_{\rm max} \mathbf{I} - \mathbf{K}$. Then we can write
            \begin{align}
                (\mathbf{S}_q^v)^{-1} \stildedag \dpdth &=  \underbrace{k_{\rm max} \dpdth -\dqdth}_{\triangleq \mathbf{M} \succ {\bf 0}} - \Delta \mathbf{K} \dpdth, \\
                (\mathbf{S}_q^v)^{-1} \stildedag \dpdth &= \mathbf{M} \left({\mathbf{I} - \mathbf{M}^{-1} \Delta \mathbf{K} \dpdth}\right). \label{eq:neumann}
            \end{align}
            The inverse of the term in parentheses in \eqref{eq:neumann} can be computed using Neumann series. According to \cite[Ch. 22, Lemma 1]{royden_real_2017}, this inverse is guaranteed to exist if:
            
        \begin{equation}
            \left\|{\mathbf{M}^{-1} \Delta \mathbf{K}\dpdth}\right\|_2 < 1,
        \end{equation}
        where $\left\|{\cdot}\right\|_2$ is the largest singular value\textemdash also known as the spectral norm or operator norm\textemdash of the argument. The sub-multiplicative property of this norm allows us to use the stronger inequality \eqref{eq:inv_condition_1}, where $\left\|{\Delta \mathbf{K}}\right\|_2 = k_{\rm max} - k_{\rm min} = \Delta k$:
        \begin{equation}
        \label{eq:inv_condition_1}
            \left\|{\mathbf{M}^{-1}}\right\|_2 \left\|{\Delta \mathbf{K}}\right\|_2 \left\|{\dpdth}\right\|_2 < 1.
        \end{equation}
        Therefore, the inverse is also guaranteed to exist if
        \begin{equation}
        \label{eq:inv_condition_2}
              %\norm{\Delta \mathbf{K}}_{\infty} = 
              \Delta k < \left\|{\mathbf{M}^{-1}}\right\|_2^{-1}  \left\|{\dpdth}\right\|_2^{-1},
        \end{equation}
        which holds for close enough power factors. In conclusion, the right hand side of \eqref{eq:neumann} is invertible, so the left hand side is invertible too. Under condition \eqref{eq:inv_condition_1} or  \eqref{eq:inv_condition_2}, both $(\mathbf{S}_q^v)^{-1}$ and $\dpdth$ are invertible. Then, $\stildedag$ is invertible too. This means that for any $\Delta \boldsymbol{v}$ we have a unique $\Delta \boldsymbol{p}$ and a unique $\Delta \boldsymbol{x}$.
            \end{proof}

    \subsection{Analyzing the power factor bound}
    \label{sec:power_factor_bound}
    \newcommand{\alphamin}{\alpha_{\rm min}}
    \newcommand{\alphamax}{\alpha_{\rm max}}
    \newcommand{\deltakmax}{\Delta k_{\text{max}}}
    For a given operating condition, if we have access to the full network model, the upper bound on $\Delta k$ can be computed directly. %via Algorithm \ref{alg:pf_bound}.
    %Note that this Algorithm is an experimental or sensor placement tool for when the network model is known. It inherently cannot be used in a ``model-free" setting if measurements of $\theta_1,\dots,\theta_n$ are unavailable.
    %% Consider the upper bound of \eqref{eq:inv_condition_2}, i.e., the case where $\Delta k = \norm{\mM^{-1}}^{-1}_2 \norm{\dpdth}_2^{-1}$ holds. Define the constant
            % \begin{equation}
            %     \label{eq:constant_pf_upper_bound}
            %     A\triangleq \norm{\mM^{-1}}^{-1}_2 \norm{\dpdth}_2^{-1}.
            % \end{equation}
            % Under the technical conditions of the inverse function theorem, any $\alpha \in (0,1]$ will satisfy
            % \begin{subequations}
            % \begin{align}
            %     k_{\rm max} - k_{\rm min} = k(\alphamin) - k(\alphamax) = A,\\
            %     \implies \alphamin = k^{-1}\left(A + k(\alphamax) \right).
            % \end{align}
            % \end{subequations}
    To achieve this, recall that we defined the quantity $\Delta k$ as $\Delta k \triangleq k_{\rm max} - k_{\rm min} =  k(\alpha_{\rm min}) - k(\alpha_{\rm max})$, %where the function $k$ is $k(\alpha) \triangleq \frac{\sqrt{1-\alpha^2}}{\alpha}$.
    \begin{lemma}
    \label{lemma:k_inverse_func}
    The inverse function of $k(\alpha)$, denoted as $k^{-1}(\alpha)$, can be written as
    \begin{equation}
        \label{eq:k_inv}
        k^{-1}(\alpha) \triangleq \sqrt{\frac{1}{k^2(\alpha) +1 }}, \quad \alpha \in (0,1].
    \end{equation}
    \end{lemma}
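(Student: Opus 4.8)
The plan is to treat this as a one-variable algebraic inversion, since by Lemma~\ref{lemma:K_matrix} the scalar map is $k(\alpha) = \frac{1}{\alpha}(1-\alpha^2)^{1/2}$ once a branch of the $\pm$ sign is fixed. First I would fix the nonnegative branch $k \ge 0$ and argue that $k$ is a genuine bijection on the stated domain, so that an inverse is well defined. Squaring gives $k(\alpha)^2 = \alpha^{-2}-1$, and since $\alpha \mapsto \alpha^{-2}-1$ is strictly decreasing on $(0,1]$ and $k \ge 0$, the map $k$ is strictly decreasing from $(0,1]$ onto $[0,\infty)$, hence invertible with $k^{-1}:[0,\infty) \to (0,1]$.

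Next I would carry out the inversion explicitly. Writing $y = k(\alpha)$ and using the squared identity $y^2 = \alpha^{-2} - 1$, I would solve for $\alpha$: rearranging gives $\alpha^{-2} = y^2 + 1$, hence $\alpha^2 = (y^2+1)^{-1}$, and taking the positive square root (legitimate because $\alpha \in (0,1]$) yields $\alpha = \sqrt{1/(y^2+1)}$. This is precisely $k^{-1}(y)$; substituting back $y = k(\alpha)$ reproduces the stated formula, so that $k^{-1}(k(\alpha)) = \sqrt{1/(k^2(\alpha)+1)} = \alpha$, which confirms that the displayed expression is indeed the compositional inverse and clarifies the slightly compressed notation in the statement.

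The computation itself is routine, so the only real care lies in the bookkeeping rather than in any analytic difficulty. The main obstacle—if it can be called that—is justifying the branch and sign choices: the definition of $k$ in Lemma~\ref{lemma:K_matrix} carries a $\pm$, and both the elimination of the radical when squaring and the extraction of the final root must be made consistent with $\alpha > 0$ and with the nonnegative branch adopted for $k$. I would therefore state the restriction $\alpha \in (0,1]$ and $k \ge 0$ at the outset, record that the range is $[0,\infty)$, and close with the endpoint consistency check that the formula evaluated at $k(\alpha)=0$ (i.e.\ $\alpha=1$) correctly returns $1$.
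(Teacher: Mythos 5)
Your proposal is correct and follows essentially the same route as the paper: square $k(\alpha)$ to get $k^2(\alpha)=\frac{1-\alpha^2}{\alpha^2}$, rearrange to $(k^2(\alpha)+1)\alpha^2=1$, and discard the negative root because $\alpha\in(0,1]$. The additional monotonicity/bijectivity discussion and the endpoint check are extra care the paper omits, but they do not change the argument.
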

    \begin{proof}
        See Appendix \ref{apdx:proof_of_k_inv}.
    \end{proof}
    Thus, using \eqref{eq:k_inv}, we can express $\alpha_{\rm min} \in (0,1]$ such that \eqref{eq:inv_condition_2} is satisfied as a function of $\alpha_{\text{max}} \in (0,1]$ as
    \begin{subequations}
    \label{eq:alpha_min_v_alpha_max}
        \begin{align}
            \alphamin(\alphamax) = k^{-1}(k(\alpha_{\text{max}}) + \Delta k_{\text{max}}),\\
            = k^{-1}\bigg(
                k(\alphamax) + \left\|{\mathbf{M}^{-1}}\right\|_2^{-1}  \left\|{\dpdth}\right\|_2^{-1}
            \bigg),
            %= \sqrt{\frac{1}{\left(\frac{\sqrt{1-\alpha^2_{\text{max}}}}{\alpha_{\text{max}}}\right)^2 + 1}}
        \end{align}
    \end{subequations}
    where $\Delta k_{\rm max}$ is the upper bound of \eqref{eq:inv_condition_2} computed at the operating point. Note that if $\alpha_{\text{max}}=1$, i.e., we set a bus to have unity power factor, then $k(\alpha_{\rm max})$ is zero. 
    
    In Fig.~\ref{fig:radial_pfmin_v_pfmax}, we plot the expression for $\alpha_{\rm min}$, \eqref{eq:alpha_min_v_alpha_max}, as a function of $\alpha_{\rm max}$ for radial test cases, and the same is done for meshed test cases in Fig.~\ref{fig:mesh_pfmin_v_pfmax}. This visualizes the conditions implied by Theorem \ref{thm1:suff_cond} for various \textsc{Matpower} \cite{zimmerman2010matpower} test cases at their default, instantaneous operating point. 
    \begin{figure}[t!]
        \centering
        \includegraphics[width=0.85\linewidth,keepaspectratio]{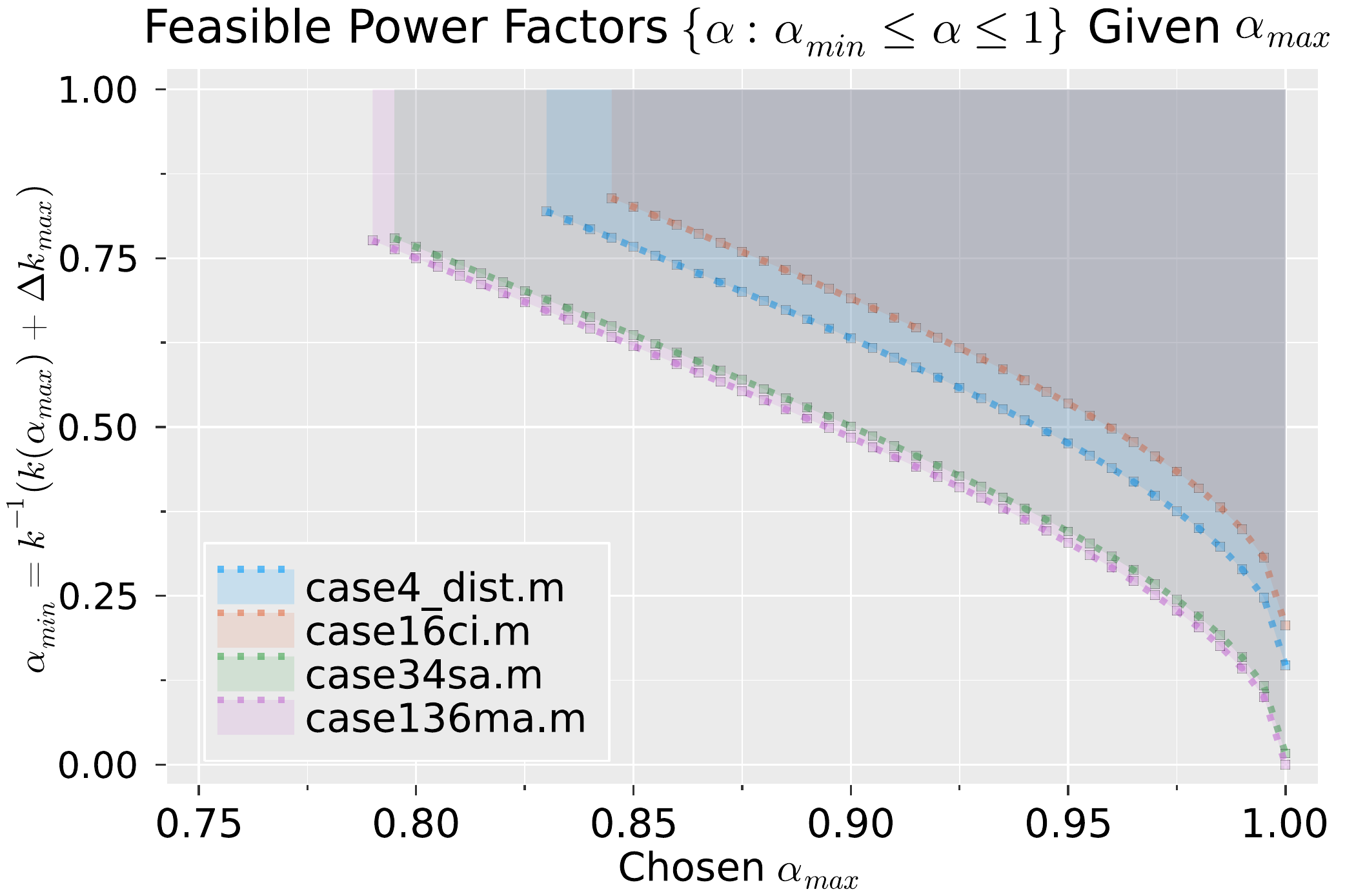}
        \caption{Radial cases: feasible bus power factors such that \eqref{eq:inv_condition_2} holds as a function of the maximum bus power factor $\alpha_{\rm max}$ using \eqref{eq:alpha_min_v_alpha_max}.}
        \label{fig:radial_pfmin_v_pfmax}
    \end{figure}
    \begin{figure}[t!]
        \centering
        \includegraphics[width=0.85\linewidth,keepaspectratio]{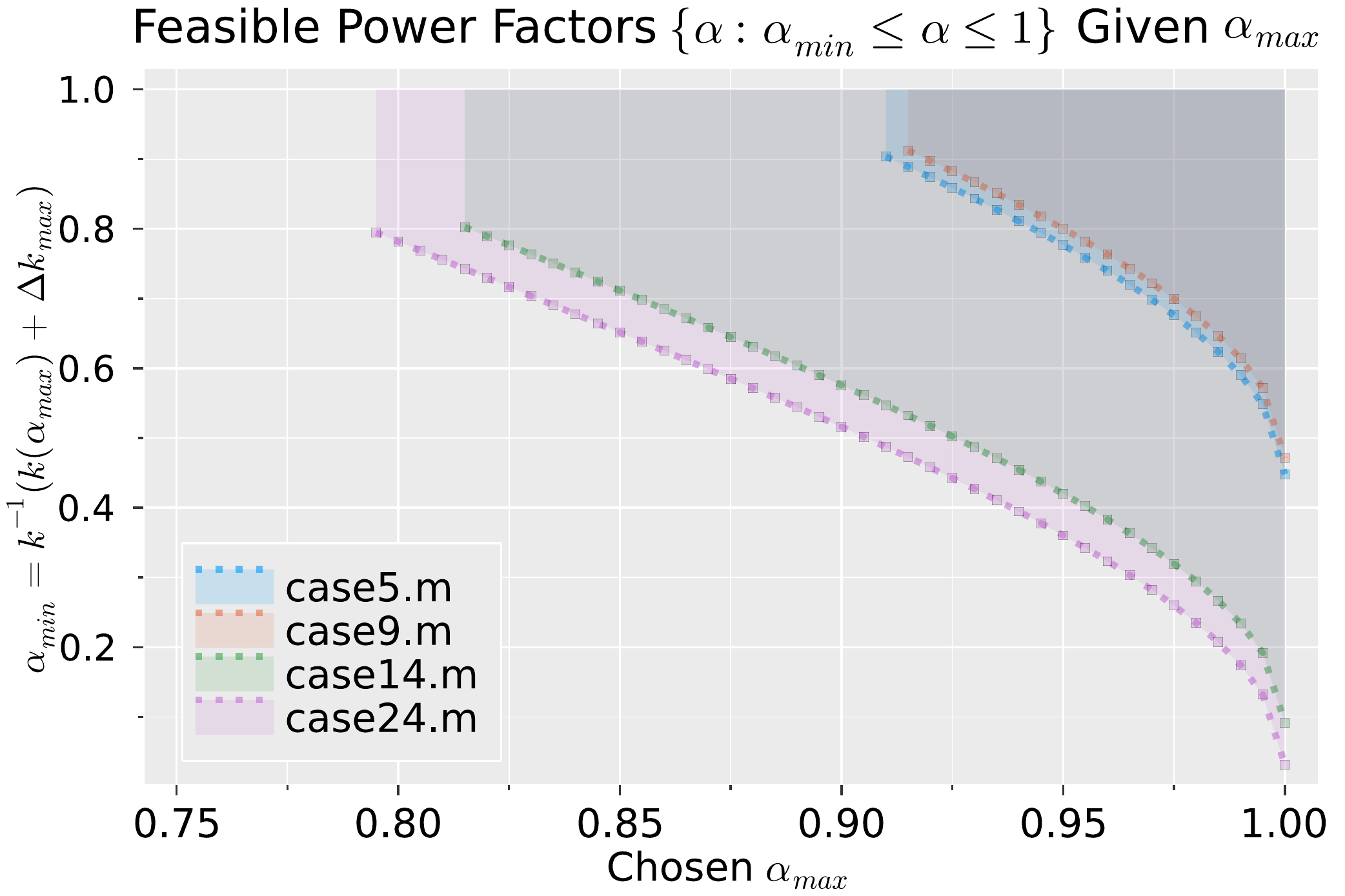}
        \caption{Meshed cases: feasible bus power factors such that \eqref{eq:inv_condition_2} holds as a function of the maximum bus power factor $\alpha_{\rm max}$ using \eqref{eq:alpha_min_v_alpha_max}.}
        \label{fig:mesh_pfmin_v_pfmax}
    \end{figure}
    
    % \begin{algorithm}
    % \caption{Compute the upper power factor difference bound to satisfy Theorem 1}
    % \label{alg:pf_bound}
    % \begin{algorithmic}
    % \Require $\text{Circuit model}$
    % \State $y \gets 1$
    % \State $X \gets x$
    % \State $N \gets n$
    % \While{$N \neq 0$}
    % \If{$N$ is even}
    %     \State $X \gets X \times X$
    %     \State $N \gets \frac{N}{2}$
    % \ElsIf{$N$ is odd}
    %     \State $y \gets y \times X$
    %     \State $N \gets N - 1$
    % \EndIf
    % \EndWhile
    % \end{algorithmic}
    % \end{algorithm}
    
    % \subsection{Gradient of power factor}
    % The gradient of the bus power factors $\alpha_i$ are given as
    % \begin{subequations}
    % \begin{align}
    %     \nabla \alpha(p,q) \triangleq \left[ \frac{\partial \alpha}{\partial p}, \frac{\partial \alpha }{\partial q}\right]^T = \left[ \frac{-q}{q^2 + p^2}, \frac{p}{q^2 + p^2}\right]^T.
    % \end{align}
    % \end{subequations}
    
    % the Jacobians of the bus power factor mappings $\valpha(\vp,\vq) : \R^n \times \R^n \mapsto \R^n$ are $n \times n$ matrices
    % \begin{equation}
    %     \frac{\partial \valpha}{\partial \vp} = 
    % \end{equation} 
    
    \subsection{Necessary and sufficient condition}
    \label{sec:thm2_eigval}
         We propose the term \emph{phaseless observability} for the concept that is studied in this paper\textemdash a favorable grid operating condition where active and reactive power injection state estimates are well-conditioned from observations of the voltage magnitudes, \emph{without observing the phase angles.}
         
         Theorem \ref{thm1:suff_cond} establishes a \emph{sufficient} condition (bound) on the injection variables that, \emph{if} satisfied, guarantees that phaseless observability will hold in the network. In this section, we use bounds on the eigenvalues of linear operators derived from the Newton-Raphson power flow model \eqref{eq:classical_NR_iteration} that are both \textit{necessary and sufficient} for the phaseless observability condition to hold. 
         
         Since Theorem \ref{thm:eigenvalue_checks} is satisfied \emph{if and only if} phaseless observability holds, we propose that Theorem \ref{thm:eigenvalue_checks} captures the full scope of the property loosely characterized by the sufficient condition in Theorem \ref{thm1:suff_cond}. 
        
        Nonetheless, we note that the sufficient conditions developed in Section \ref{sec:thm1_neumann} remain valuable because of their intuitive connections to the bus power factors and the power flow Jacobian phase angle sensitivity submatrices. The robustness of the sufficient condition in the numerical results relative to the necessary and sufficient condition, coupled with the physical interpretation, motivates our elevation of Theorem \ref{thm1:suff_cond} as a primary contribution of our paper.
        
        This section introduces a stronger necessary and sufficient condition for the existence of a unique complex power injection solution in rectangular coordinates $\Delta \vx \in \R^{2n}$ given a wide voltage magnitude sensitivity matrix $\mS \in \R^{n \times 2n}$ and measurements of voltage magnitude perturbations $\Delta \vv \in \R^n$.  
        
        While this condition may have less physical interpretability than the sufficient conditions of Theorem \ref{thm1:suff_cond}, it remains valuable from both a theoretical and engineering perspective for evaluating the ``tightness" of the sufficient conditions in Section \ref{sec:thm1_neumann}, which we later include in the numerical results.
        
        %     \begin{assumption}
        %     \label{assum:ac_pf_sol_existence}
        %     Let $\vxi \triangleq [\vv^T, \vtheta^T]^T \in \R^{2n}$ denote the power flow solution or grid state. Assume that there exists a composition of functions $\valpha(\vxi) = \vg \circ \vf$ that maps the power flow solution $\vxi$ to a vector of bus power factors $\valpha \in \R^n$, i.e.,  $\valpha(\vxi) = \vg: \left(\vf: [\vv^T, \ventriestheta^T]^T \mapsto [\vp^T,\vq^T]^T\right) \mapsto \R^n$.  
        %     \end{assumption}
            
        % Let $\setS \subset \C^{n}$ denote the set of feasible power flow mismatches for a network under study. Recall the standard definition of the domain of the power factor for a fixed bus $i$ is 
        % \begin{equation}
        %     \dom(\alpha) \triangleq \{ x \in \R : x \in (0,1]\};
        % \end{equation}
        % thus, 
        % \begin{equation}
        %     \dom(\valpha) \triangleq \big\{ \vx \in \setS : \vx[i] \in (0,1] \ \forall i \in \{1,\dots,n\} \big\}.
        % \end{equation}
        
        %     \begin{assumption}
        %     \label{assum:tight_pf_bound}
        %          Assume that $\valpha(\vxi) \in \dom(\valpha) \quad \forall \vxi  \in \setS.$
        %     \end{assumption}
        %     The physical interpretation of Assumption \ref{assum:tight_pf_bound} is that all bus power factors lie in the partially-closed interval $(0,1]$ at the operating point of the network. Effectively, we are assuming 1.) that no buses studied have exactly 0 net active power injections, and thus no bus has exactly 0 bus power factor, and 2.) that any bus with a power factor of 1 has a net reactive power injection of exactly 0.
            
            The value of understanding these physical conditions is that it allows us to take a principled, physics-informed approached to estimating power factor parameters, through the form of a matrix $\mK$ described in Lemma \ref{lemma:implicit_representation}. 
            % \begin{lemma}[Power Factor Encoding Matrix]
            % \label{lemma:K_matrix}
            %     If Assumption \ref{assum:tight_pf_bound} holds, there exists a unique power factor encoding matrix $\mathbf{K}(\valpha) \triangleq \operatorname{diag}(k(\boldsymbol{\alpha}))$ where $K_{ii} \triangleq \pm\frac{1}{\alpha_i}\sqrt{1-\alpha_i^2}$. This matrix has an inverse $\mathbf{K}^{-1}$ with entries $K^{-1}_{ii} = \pm \alpha_i (1-\alpha_i^2)^{-1/2}$. These matrices satisfy 
            %     \begin{subequations}
            %     \begin{align}
            %         \Delta \vq(\Delta \vp| \valpha) = \mathbf{K}(\valpha)\Delta \vp,\\
            %         \Delta \vp(\Delta \vq | \valpha) = \mathbf{K}^{-1}(\valpha)\Delta \vq,\\
            %         \mathbf{K} \mathbf{K}^{-1} = \mK^{-1} \mK = \mathbf{I},
            %     \end{align}
            %     \end{subequations}
            %     where $\mathbf{I}$ is the identity.
            %     \end{lemma}
            %     \begin{proof}
            %         See Appendix \ref{apdx:proof_of_k_inv}.
            %     \end{proof}
        The engineering value of $\mK$ is that this matrix allows us to examine the eigenvalues of special \emph{projected} forms of the voltage sensitivity matrices\textemdash to be defined in Theorem \ref{thm:eigenvalue_checks}\textemdash to verify an estimate existence guarantee that is analogous to the ones provided by the conditions \eqref{eq:inv_condition_1} and \eqref{eq:inv_condition_2} of Theorem \ref{thm1:suff_cond}.

        %More generally, we can define a reactive power encoding ``operator" $\loK(\vv) : \R^n \mapsto\R^n$ that takes voltages and/or active powers, e.g. $\loK(\vv,\vp): \R^n \times \R^n \mapsto \R^n$
        
        \begin{theorem}[Phaseless observability]
        \label{thm:eigenvalue_checks}
        Let $\Delta \boldsymbol{v} \triangleq \vv - \vv_0 \in \mathbb{R}^n$ be a vector of voltage magnitude perturbations around an operating point $\vv_0 \in \R^n$ and let $\valpha \in (0,1)^n \subset \R^n$ be the bus power factors. Define the $n\times n$ special sensitivity matrices $ \stildedag(\valpha) : \R^n \mapsto \R^{n \times n} $ and $\stildestar(\valpha): \R^n \mapsto \R^{n \times n}$ as \eqref{eq:encoded-matrices}:
        \begin{subequations}
        \label{eq:encoded-matrices}
        \begin{align}
             %\Sqtop \triangleq \left(\dvdp + \dvdq \mathbf{K} \right),
            %\quad \Sptoq \triangleq \left( \dvdp \mathbf{K}^{-1} + \dvdq \right).
           \stildedag(\valpha) \triangleq \left(\dvdp + \dvdq \mathbf{K}(\valpha) \right),\\
            \stildestar(\valpha) \triangleq \left( \dvdp \mathbf{K}^{-1}(\valpha) + \dvdq \right),
        \end{align}
        \end{subequations}
        such that 
        \begin{equation}
        \Delta \vv(\valpha) = \stildedag(\valpha) \Delta \vp = \stildestar(\valpha) \Delta \vq.%= \stildedag(\valpha) \mK^{-1}(\valpha)\Delta \vq .
        \end{equation}
    Then, there are unique complex power perturbations in rectangular coordinates $\Delta \boldsymbol{x} = [\Delta \vp^T, \Delta \vq^T]^T \in \R^{2n}$ such that $\Delta \boldsymbol{v} = \tilde{\mathbf{S}} \Delta \boldsymbol{x}, \ \tilde{\mS} \in \R^{n \times 2n},$ if and only if
        \begin{equation}
            \label{eq:thm2_nonsing_condition}
             \stildedag(\valpha) \succ 0.
            %= \sup_{\vx : \vx \neq \boldsymbol{0}} \frac{\|\stildedag \vx\|_2}{\|\vx\|_2}
        \end{equation}
        
    % $\stildedag$ or $\stildestar$ are non-singular, i.e., if the largest of the spectral norms of the matrices is positive:
    % \begin{equation}
    % \label{eq:thm2_nonsing_condition}
    %     \max\Bigg\{ \min_{\vx: \vx \neq \boldsymbol{0}} \vx^T \stildedag \vx, \ \min_{\vx: \vx \neq \boldsymbol{0}} \vx^T \stildestar\vx  \Bigg\} >0.
    % \end{equation}
    % Equivalently,
    
    % there exists a unique estimate for the complex power perturbations in rectangular coordinates $\Delta \boldsymbol{x} = [\Delta \vp^T, \Delta \vq^T]^T \in \R^{2n}$ such that $\Delta \boldsymbol{v} = \tilde{\mathbf{S}} \Delta \boldsymbol{x}$ if and only if 
    % $\stildedag$ is non-singular and $\alpha_i \in (0,1) \ \forall i \in \llbracket 1,n\rrbracket$. 
    \end{theorem}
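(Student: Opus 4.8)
The plan is to collapse the wide system $\Delta\vv = \tilde{\mathbf{S}}\Delta\vx$ into a square one by enforcing the physical power-factor coupling, then to identify uniqueness with invertibility of $\stildedag(\valpha)$, and finally to upgrade invertibility to the stated definiteness condition. First I would note that because $\valpha \in (0,1)^n$, Lemma \ref{lemma:K_matrix} guarantees $\mathbf{K}(\valpha)$ is diagonal with finite, nonzero entries, hence invertible, so that $\Delta\vq = \mathbf{K}(\valpha)\Delta\vp$ and $\Delta\vp = \mathbf{K}^{-1}(\valpha)\Delta\vq$ are in bijection. The uniqueness claimed in the theorem is understood within this constraint manifold; substituting the coupling into $\tilde{\mathbf{S}}\Delta\vx = \dvdp\,\Delta\vp + \dvdq\,\Delta\vq$ and invoking Lemma \ref{lemma:implicit_representation} reduces the problem to $\Delta\vv = \stildedag(\valpha)\Delta\vp = \stildestar(\valpha)\Delta\vq$. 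Since $\Delta\vp$ and $\Delta\vq$ determine one another through the invertible $\mathbf{K}(\valpha)$, a unique $\Delta\vx$ exists if and only if $\stildedag(\valpha)$ is nonsingular; moreover $\stildestar = \stildedag\mathbf{K}^{-1}$, so the single operator $\stildedag$ controls both representations, which is why the theorem states a condition on $\stildedag$ alone.

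The sufficiency direction is immediate. If $\stildedag \succ 0$, then $\Delta\vp^{\T}\stildedag\Delta\vp > 0$ for every $\Delta\vp \neq \vzero$, so $\stildedag$ has trivial kernel and $\Delta\vp = \stildedag^{-1}\Delta\vv$ is the unique preimage, which then fixes $\Delta\vq$ and hence $\Delta\vx$ uniquely.

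The necessity direction is where the real work lies. Uniqueness immediately yields nonsingularity of $\stildedag$, but I must rule out the indefinite and negative-definite cases that a bare nonsingularity statement would permit. To do this I would exploit the Schur-complement identity derived in the proof of Theorem \ref{thm1:suff_cond}, namely $(\mathbf{S}_q^v)^{-1}\stildedag\,\dpdth = \mathbf{K}(\valpha)\dpdth - \dqdth$, together with Assumption \ref{assum:nonsingular_jacobian}, under which $\dpdth \succ 0$ and the Schur-complement factor $\mathbf{S}_q^v$ exists. The idea is to track the spectrum of $\stildedag(\valpha)$ as the power factors are deformed continuously toward unity, where $\mathbf{K}(\valpha) \to \vzero$ and $\stildedag \to \mathbf{S}_p^v = \dvdp$ furnishes a positive-definite anchor under the standing assumptions. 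Because the eigenvalues of $\stildedag(\valpha)$ vary continuously with $\valpha$ and nonsingularity forbids any eigenvalue from crossing the origin, the sign pattern of the spectrum cannot change along the deformation; hence invertibility over the admissible region is equivalent to retaining the positive definiteness inherited from the anchor, giving $\stildedag(\valpha) \succ 0$.

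I expect the main obstacle to be precisely this last equivalence\textemdash promoting nonsingularity to positive definiteness rather than merely asserting it. A generic nonsingular matrix may be indefinite, so the argument must lean on the structured factorization $\stildedag = \mathbf{S}_q^v\big(\mathbf{K}(\valpha) - \dqdth(\dpdth)^{-1}\big)$ and on the positive definiteness of the building blocks guaranteed by Assumption \ref{assum:nonsingular_jacobian} (in the spirit of the matrix $\mathbf{M} \succ 0$ used in Theorem \ref{thm1:suff_cond}). Making the continuity/eigenvalue-crossing argument rigorous\textemdash in particular establishing that the relevant spectrum stays real and that the admissible power-factor set is connected to the unity-power-factor anchor\textemdash is the technically delicate step. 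The transfer of definiteness to $\stildestar$ then follows for free by the congruence $\stildestar = \stildedag\mathbf{K}^{-1}$ with $\mathbf{K} \succ 0$.
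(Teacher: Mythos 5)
Your sufficiency argument coincides with what the paper actually proves: $\stildedag(\valpha) \succ 0$ forces a trivial kernel, hence $\Delta\vp = \stildedag^{-1}\Delta\vv$ is unique and $\Delta\vq = \mathbf{K}(\valpha)\Delta\vp$, $\Delta\vx$ follow. You are also right that the hard part is the ``only if'' direction, and you correctly diagnose that a bare uniqueness hypothesis yields only nonsingularity of $\stildedag(\valpha)$, not definiteness. Be aware that the paper's own proof never closes this gap either---it argues only the sufficiency direction---so your instinct to supply a necessity argument goes beyond the source.

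However, the homotopy argument you sketch for necessity would not go through. First, the hypothesis gives nonsingularity of $\stildedag(\valpha)$ only at the single operating point $\valpha$; it says nothing about $\stildedag(\valpha')$ for $\valpha'$ along a path toward unity power factor, so an eigenvalue is free to cross the origin at an intermediate point and the sign pattern at $\valpha$ need not match that of the anchor. Second, $\stildedag$ is built from generally non-symmetric blocks of $\mathbf{J}^{-1}$, so its eigenvalues need not be real, and positive definiteness in the quadratic-form sense ($\vx^{\T}\stildedag\vx > 0$ for all $\vx\neq\vzero$) is a condition on the symmetric part that is not determined by the spectrum; tracking eigenvalues therefore cannot deliver \eqref{eq:thm2_nonsing_condition} even if the crossing argument were available. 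Third, the proposed anchor needs $\dvdp \succ 0$, which is not among the paper's assumptions (Assumption \ref{assum:nonsingular_jacobian} concerns $\dpdth$), and the anchor $\valpha = \vone$ lies on the boundary excluded from the admissible set $(0,1)^n$ on which $\mathbf{K}^{-1}$ exists. As stated, unique solvability of the power-factor-constrained system is equivalent to nonsingularity of $\stildedag(\valpha)$, and neither your argument nor the paper's shows this can be upgraded to $\stildedag(\valpha)\succ 0$.
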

    \begin{proof}
        If $\stildedag \succ 0$ or $\stildestar \succ 0$ the linear systems of equations
        \begin{align}
            \stildedag \vxbar = \lambda \vxbar = \vct{0},\\
            \stildestar \vxbar' = \lambda \vxbar' = \vct{0},
        \end{align} 
        have solutions $\vxbar,\vxbar' \in \R^{n}$ if and only if $\vxbar,\vxbar' = \vct{0}$. If $\stildedag \succ 0$, then $\stildestar \triangleq \stildedag \mK^{-1}$ must be invertible as well, given the assumption that $\valpha \in (0,1)^n$.

    \end{proof}
    In the numerical results developed in Section \ref{sec:case_studies}, we will compare the application of Theorem \ref{thm1:suff_cond} and Theorem \ref{thm:eigenvalue_checks} to many test cases.

    % \begin{figure}[htbp]
    
    %  %\centerline{\includegraphics[width=0.95\linewidth,height=.3\textheight,keepaspectratio]{figures/sens_heatmap.pdf}}
    %   \centerline{\includegraphics[width=0.95\linewidth,height=.3\textheight,keepaspectratio]{figures/ieee123_coef_stats.pdf}}
        
    %     \caption{For the voltage magnitude sensitivity coefficients of the IEEE 123 bus test feeder, a) distributions of coefficients, b) scatterplot of active vs. reactive coefficients colored by euclidean difference of the columns (pu).}
    % \end{figure}

\section{Algorithms and Applications}
\label{sec:algorithms_application}
This section develops algorithms to reconstruct or update an estimate $\tilde{\mathbf{S}}^{\#}$ of the wide voltage magnitude-power sensitivity matrix $\tilde{\mathbf{S}}$ that we have studied in this paper, defined in  \eqref{eq:underdetermined_nr_inv_system}. Since we will assume that we do not have access to a network model, we use finite differences of the signals in \eqref{ami_signal} as the data inputs to these algorithms. Let $m'\triangleq m-1$ be the number of finite differences, and define  $\Delta \mathbf{V},\Delta \mathbf{P},\Delta \mathbf{Q} \in \mathbb{R}^{m' \times n}$ as matrices whose rows are the transpose of the finite difference vectors. Define a matrix of complex power perturbations in rectangular coordinates as $\Delta \mathbf{X} \triangleq [\Delta \mathbf{P}^T,\Delta \mathbf{Q}^T]^T \in \mathbb{R}^{m' \times 2n}$.

%From solely this information, we are interested in recovering $\tilde{\mathbf{S}}$, estimating $\Delta \boldsymbol{p},\Delta \boldsymbol{q}$ from $\Delta \boldsymbol{v}$, and vice versa.

%In this section, we develop applications of the voltage magnitude sensitivity matrices for low-observability distribution system modeling settings. We review known regression techniques and develop matrix completion techniques to compute an estimate $\tilde{\mathbf S}^{\#}$ of the sensitivity matrix $\tilde{\mathbf S}$.

\subsection{Review of Least Squares Sensitivity Matrix Estimation}
\label{sec:least_squares}
Assuming $m' \geq 2n$, it is well-known \cite{nowak_measurement-based_2020,mugnier_model-lessmeasurement-based_2016} that a least-squares estimate for $\tilde{\mathbf{S}}$ can be found via the Moore-Penrose Pseudoinverse as $\big(\tilde{\mathbf{S}}^{\#}\big)^T = (\Delta \mathbf{X}^T\Delta \mathbf{X} + \lambda\mathbf{I})^{-1} \Delta \mathbf{X} ^T \Delta \mathbf{V}$, where $\lambda$ is a Tikhonov regularization parameter. The resulting estimate gives us $\Delta \mathbf{V}^T \approx \tilde{\mathbf{S}}^{\#} \Delta \mathbf{X}^T$.

% \subsection{The Least Squares Solution}
% A closed form solution for the linear sensitivity matrix can be readily obtained by constructing a linear system of equations. This serves as the benchmark for the methods we will develop in the latter portions of the paper. We note that the observed changes in voltage magnitude can be related to the quantities we have developed in the following manner:
% \begin{equation}
%     \underbrace{\boldsymbol{\Delta V}_t}_{n \times 1} \triangleq \underbrace{\tilde{\mathbf{S}}}_{n \times 2L} \underbrace{\mathbf{D}}_{2L \times M} + \underbrace{\boldsymbol{\epsilon}}_{N \times M} 
% \end{equation}

% As formulated in \cite{talkington_power_2021,lin_data-driven_2021}, the linearized system can readily be cast as a canonical least-squares problem:
% \begin{equation}
%     \minimize_{\mathbf{x}\in \mathbb{R}^{2L}} ||
% \end{equation}

% \section{Recovering Voltage Sensitivity Models}
% This section develops methods for recovering the voltage magnitude sensitivity matrices in low-observability modeling settings, when AMI sensors may not be available, (e.g., buses without customers). We compute the best-fit estimate $\tilde{\boldsymbol S}^{\#}$ of the sensitivity matrix $\tilde{\boldsymbol S}$  via matrix completion techniques.

We propose that the wide $\tilde{\mathbf{S}}$ will often have rapidly decreasing singular values. In this case, $\tilde{\mathbf{S}}$ can be well approximated via a truncated singular value decomposition (SVD) as:
\begin{equation}
\label{truncated_svd}
    \tilde{\mathbf{S}} \approx \sum_{k=1}^R \sigma_k \boldsymbol{u}_k \boldsymbol{v}_k^T,
\end{equation}
where $\sigma_k,\boldsymbol{u}_k,$ and $\boldsymbol{v}_k$, $k=1,\dots,R$ are the $R$ largest singular values and corresponding singular vectors. The assumption of rapidly decreasing singular values is well-motivated, as can be verified empirically in Fig. \ref{fig:spectral_analysis_ieee13}, which shows a spectral analysis of the voltage sensitivities for the IEEE 13-bus test feeder.

\begin{remark}
\label{remark:approx_low_rank}
The approximate low-rank structure of $\tilde{\mathbf {S}}$ results from the columns belonging to a union of low-rank subspaces. Empirically, we have found these are related to groupings of the injection type (P/Q) and phase (A/B/C).
\end{remark}
% The empirically observed structure described in Remark \ref{remark:approx_low_rank} and the IEEE 123-bus test feeder in Fig. X.

\subsection{Sensitivity Matrix Completion}
\label{sec:matrix_completion_update}
Suppose that we have an incomplete sensitivity matrix $\tilde{\mathbf {S}}_0 \triangleq [\mathbf{S}^v_{p,0},\mathbf{S}^v_{q,0}]$ where the set $\Omega = \{(i,j) : [\tilde{\mathbf {S}}_0]_{i,j} = 0\}$ represents $|\Omega|$ entries of $\tilde{\mathbf{S}}_0$ for which we do not have access to voltage sensitivity relationships. %and the coefficients cannot be estimated using linear regression.% In practice, phase identification techniques such as \cite{bariya_guaranteed_2021} can be used to establish membership of a bus in its low-rank subspace. Matrix completion then allows us to recover the unknown low-rank subspaces within the $\tilde{\mathbf{S}}$ matrix from the incomplete $\tilde{\mathbf{S}}_0$. 
%The matrix describing the sensitivities of voltage magnitudes to active and reactive power injections 
The full matrix $\tilde{\mathbf{S}}$, which contains entries for all buses, can be recovered as the solution to the following program:
\begin{equation}
    \tilde{\mathbf S}^\# = \mathop{\rm arg \, min}_{{\mathbf {S}} \in \mathbb{R}^{n \times 2n}} ||\tilde{\mathbf {S}}_0 - {\mathbf {S}}||_F^2 \quad \text{subject to:} \quad \text{rank}({\mathbf {S}}) = R,
\label{matrix_completion}
\end{equation}
where $||\cdot||^2_F$ is the squared Frobenius norm, which is defined for a matrix $\mathbf{X} \in \mathbb{R}^{d_1 \times d_2}$ as $||\mathbf{X}||_F^2 = \sum_{i=1}^{d_1} \sum_{j=1}^{d_2} |X_{i,j}|^2$. 

\begin{figure*}[htb]
\centerline{\includegraphics[width=0.975\linewidth,height=.25\textheight,keepaspectratio]{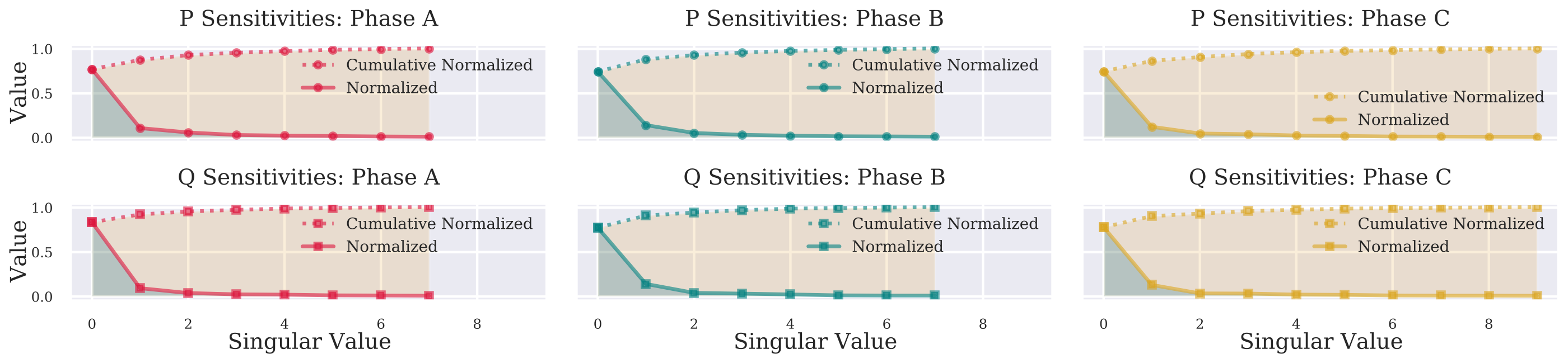}}
    \caption{Spectral analysis of the $\tilde{\mathbf{S}}$ matrix \eqref{eq:underdetermined_nr_inv_system} by phase and injection type for the IEEE 13-bus test case, showing approximate low-rank structure.}
    \label{fig:spectral_analysis_ieee13}
    %\vspace{-2mm}
    %\vspace*{-1em}
\end{figure*}

The program \eqref{matrix_completion} is non-convex, but a closed form solution can be tractably found by truncating the SVD  as in \eqref{truncated_svd}. Choosing $R$ is equivalent to tuning a real-valued hyperparameter  $\lambda \geq 0$ in the Lagrangian of this program,
\begin{equation}
    \label{matrix_recovery_problem_lagrangian}
    \mathop{\rm arg \, min}_{{\mathbf {S}}} %\in \mathbb{R}^{N \times 2L}} 
    {||\tilde{\mathbf {S}}_0- {\mathbf {S}}||_F^2 + \lambda \left( \text{rank}({\mathbf {S}}) \right)}.
\end{equation}
% \begin{equation}
% \label{original_mat_complete}
%     \minimize_{\tilde{\mathbf{S}} \in \mathbb{R}^{N \times 2L}} ||\tilde{\mathbf{S}}||_* \ \text{subject to}: \text{rank}(\tilde{\mathbf{S}}) = R
% \end{equation}
The rank constraint on the optimization variable $\mathbf{S}$ is also non-convex, and the solution requires \textit{hard-thresholding}, i.e., selecting an integer $R$ in \eqref{truncated_svd}. Additionally, we cannot solve \eqref{matrix_recovery_problem_lagrangian} in this way, as we cannot take the truncated SVD of a matrix with unknown values. Following \cite{yao_nuclear_norm,davenport_overview_2016,ospina_sensitivity_estimation}, this leads to the convex relaxation \eqref{matrix_recovery_nuclear_problem_lagrangian}, which replaces the rank penalty term with the nuclear norm of the decision matrix:
\begin{equation}
\label{matrix_recovery_nuclear_problem_lagrangian}
\begin{split}
    &\mathop{\rm arg \, min}_{{\mathbf {S}}} %\in \mathbb{R}^{N \times 2L}} 
    {||\tilde{\mathbf {S}}_0- {\mathbf {S}}||_F^2 + \lambda ||{\mathbf {S}}||_*},\\
    &\quad \text{s.t.} \quad ||\mathbf{S} - {\mathbf {S}}_{\Omega}||_{F} \leq \delta,
    %\left(\tilde{\mathbf{S}}_P^V\right)_{i,l} \neq \left(\tilde{\mathbf{S}}_Q^V\right)_{i,l} \ \forall i,l
\end{split}
\end{equation}
where $[{\mathbf{S}}_{\Omega}]_{i,j} = 0 \ \forall (i,j) \in \Omega$, and $[{\mathbf{S}}_{\Omega}]_{i,j} = [\tilde{\mathbf{S}}_0]_{i,j}$ if $(i,j) \notin \Omega$. The operator $||\cdot||_*$ denotes the nuclear norm, which is the sum of the singular values of ${\mathbf {S}}$. %The operator $||\cdot||_\infty$ denotes the infinity (maximum) norm, which is defined for any vector ${\bf x} \in \mathbb{R}^d$ as $||{\bf x}||_\infty = \max_{i=1,\cdots,d} |x_i|$. 
The hyperparameter $\delta$ reflects how accurately we wish to match the coefficients that are known beforehand in $\tilde{\mathbf {S}}_0$. The program \eqref{matrix_recovery_problem_lagrangian} promotes solutions with skewed singular values, which are ``approximately low-rank". %This is also known as ``soft-thresholding". 

\subsection{Iterative Estimation and Completion}
In contrast with the well-studied least-squares method in Section \ref{sec:least_squares}, in this section we assume that  $m' \ll 2n$, and that we have access to a number of precomputed local sensitivity coefficients.
In this setting, given a small chunk of the finite differences of the AMI measurements described at the beginning of this section, we can solve
    \begin{subequations}
    \label{eq:partial_information_nuc_reg_lsq}
    \begin{align}
        \tilde{\mathbf {S}}^{\#}_t = & \argmin_{\mathbf{S}} \| \mathbf{S} \Delta \mathbf{X} - \Delta \mathbf{V} \|_F^2 + \lambda \| \mathbf{S} \|_*,\\
        & \text{subject to:} \quad \| \mathbf{S}_{\Omega} - \mathbf{S}\|_F \leq \delta,
    \end{align} 
    \end{subequations}
or, alternatively, we can use the measurements in \eqref{ami_signal} sequentially to perform a similar iterative estimation of the sensitivity matrix at time $t$, $\tilde{\mathbf {S}}^{\#}_t$, by solving the online convex optimization problem \eqref{eq:reg_dyn_problem}:
{\small
\begin{align}\nonumber
    \tilde{\mathbf {S}}^{\#}_t = & \mathop{\rm arg \, min}_{{\mathbf{S}}} ||\Delta \boldsymbol{v}_t- {\mathbf S} \Delta {\boldsymbol x}_t||_2^2 + \lambda ||{\mathbf {S}}||_*  + c \sum_{s=1}^{t-1} \gamma^{s} || \tilde{\mathbf {S}}^{\#}_{t-s} - {\mathbf {S}}||_F^2, \\
    & \quad\; \text{subject to:} \quad ||\mathbf{S} - {\mathbf {S}_{\Omega}}||_{F} \leq \delta. \label{eq:reg_dyn_problem}
\end{align}
}%
The summation term in the optimization is a penalty term: if we consider $\tilde{\mathbf {S}}^{\#}_t$ for all $t$ as a time series, then the summation is equivalent to an exponential smoother. The time constant of the smoother is $\gamma \in (0,1)$ and the strength of this penalty term is given by the hyperparameter $c$. The purpose of this term is to smooth out any sharp difference between the various $\tilde{\mathbf {S}}^{\#}_t$ at contiguous time steps. The voltage and power perturbations at time $t$ are the vectors $\Delta {\boldsymbol v}_t \in \mathbb{R}^{n}, \ {\boldsymbol x}_t \in \mathbb{R}^{2 n}$. 

\section{Case Studies}
\label{sec:case_studies}
    This section provides numerical case studies of the theory and algorithms developed in this paper. In Section \ref{sec:misallignments_theory}, we outline the preprocessing steps we use. In Section \ref{sec:comp_test_power_factor_bound}, we compute the analytical upper bound on $\Delta k$ derived in \eqref{eq:inv_condition_2} of Theorem \ref{thm1:suff_cond} and test the validity of Assumption \ref{assum:nonsingular_jacobian} for numerous radial and meshed networks. This is done in \texttt{PowerModels.jl} \cite{powermodels} by using and extending the \texttt{{calc\_basic\_jacobian\_matrix}} function, the results of which are shown in Table \ref{table:invertibility_bound_radial} and Table \ref{table:invertibility_bound_meshed}. In Section \ref{sec:sens_matrix_recovery}, we apply the estimation techniques developed in Section \ref{sec:algorithms_application} to the IEEE 13-bus and 123-bus radial distribution test cases \cite{kersting_radial_1991} using the \texttt{OpenDSS} distribution network simulator \cite{dugan_opendss}.

 \subsection{Misalignments with Theoretical Assumptions}
    \label{sec:misallignments_theory}
    To test Theorem \ref{thm1:suff_cond} numerically in Section \ref{sec:comp_test_power_factor_bound}, we take the following practical preprocessing steps to generate the results in Table \ref{table:invertibility_bound_radial} and Table \ref{table:invertibility_bound_meshed}. We study buses that:
    \begin{enumerate}
        \item are a PQ bus,
        \item have a nonzero net active and apparent power injection.
      %  \item have a lagging power factor.
    \end{enumerate}
    For the results generated in Table \ref{table:invertibility_bound_radial} and Table \ref{table:invertibility_bound_meshed}, we assume that buses with zero net power injection within a tolerance of $\epsilon=1\times10^{-6}$ will have the corresponding entry in the matrix $\mathbf{K}$ at this bus replaced with the sample mean of the nonzero elements of $\operatorname{diag}(\mathbf{K})$. 
    % We also document which buses were removed from the analysis in \textbf{Make this} Table X.

\subsection{Test of Analytical Results}
    \label{sec:comp_test_power_factor_bound}
    In the results of this computation, we maintain the default operating points specified in the network data.
    
    \subsubsection{Theorem 1}
    In Table \ref{table:invertibility_bound_radial} and Table \ref{table:invertibility_bound_meshed}, the quantity $\norm{\mathbf{M}^{-1} \Delta \mathbf{K} \dpdth}_2$ is shown for numerous radial and meshed \textsc{Matpower} test cases, respectively. The quantity $\norm{\mathbf{M}^{-1} \Delta \mathbf{K} \dpdth}_2$ must be strictly less than 1 for the sufficient condition \eqref{eq:inv_condition_1} to hold, which implies that there is a unique estimate for complex power perturbations from the voltage magnitudes. We also report the stricter bound, $\norm{\mathbf{M}^{-1}}_2^{-1} \norm{\dpdth}^{-1}_2$, which is useful for its physical interpretation via the bus power factors. 
   Note that Theorem \ref{thm1:suff_cond} holds for all case 5 variants provided by \texttt{PowerModels.jl} except for: 1.) \texttt{case5\_db}, as its Jacobian is singular, as well as 2.) \texttt{case5\_sw} and 3.) \texttt{case5\_tnep}, as they have a single PQ bus.
 \begin{table*}[htb]
        \renewcommand{\arraystretch}{1.4}
        \caption{Numerical validation of Theorems for radial \textsc{Matpower} test cases at default operating point}
         \label{table:invertibility_bound_radial}
        \tabcolsep=0.11cm
      
        \vspace{-3mm}
        \begin{center}
        \begin{tabular}{|c||c||c||c||c||c||c||c||c|c|}
        \hline
          & \multicolumn{2}{c||}{\textbf{Assumption Holds?}} &  \multicolumn{5}{c||}{\textbf{Quantity}}  &  \multicolumn{2}{c|}{\multirow{1}{*}{\textbf{Thm. Holds?}}}\\
        \cline{1-10}
        \textbf{Case } 
        &  $\dpdth \succ 0$ %& $\frac{1}{2}(\dpdth + \dpdth^T) \succ 0$ 
        & $\mathbf{J}^{-1}$  Exists 
        & $\lambda_{\rm min}(\mathbf{J})$
        & $\alpha_{\text{max}}- \alpha_{\text{min}}$ 
        & $k_{\text{max}}-k_{\text{min}}$ 
        & $\|\mathbf{M}^{-1} \|^{-1}_2 \big\lVert \dpdth \big\rVert_2^{-1}$ 
        & $\big \lVert \mathbf{M}^{-1} \Delta \mathbf{K} \dpdth \big \rVert_2$ 
        & Thm. \ref{thm1:suff_cond} 
        & Thm. \ref{thm:eigenvalue_checks}\\
        \hline
        2 & Yes & Yes &   17.3726   &  0.0 & 0.0 & 0.403 & 0.0 & \textcolor{ForestGreen}{Yes} & \textcolor{ForestGreen}{Yes}\\
        \hline
        5\_tnep & Yes & Yes &  104.01    &  0.0 & 0.0 & 0.4553 & 0.0 & \textcolor{ForestGreen}{Yes} & \textcolor{ForestGreen}{Yes}\\
        \hline
        % \footnotesize{frankenstein} & Yes & Yes&  82.828     &  0.0 & 0.0 & 0.4714 & 0.0 & \textcolor{ForestGreen}{Yes} & \textcolor{ForestGreen}{Yes}\\
        % \hline
        4\_dist & Yes & Yes &   50.426   &  0.0 & 0.0 & 0.1472 & 0.0 & \textcolor{ForestGreen}{Yes} & \textcolor{ForestGreen}{Yes}\\
        \hline
        % 10ba & Yes & Yes &      &  0.34293 & 2.833 & 0.5001 & 6.21 & \textcolor{Sepia}{No}\\
        10ba & Yes & Yes &     0.646  &  0.34293 & 2.833 & $6.596 \times 10^{-3}$ & 6.21 & \textcolor{Sepia}{No} & \textcolor{Sepia}{No}\\
        \hline
        % 12da & Yes & Yes &      &  0.0929 & 0.250  & 0.5103 & 5.38 & \textcolor{Sepia}{No}\\
        12da & Yes & Yes &    0.322   &  0.0929 & 0.250  & 0.01072 & 5.38 & \textcolor{Sepia}{No} & \textcolor{ForestGreen}{Yes}\\
        \hline
        15da & Yes & Yes &   1.772   &  $7.14 \times 10^{-8}$ & $2.04 \times 10^{-7}$& 0.6299 & $1.74 \times 10^{-6}$ & \textcolor{ForestGreen}{Yes}
        & \textcolor{ForestGreen}{Yes}\\
        \hline
        % 15nbr  & Yes & Yes &      &  $7.14 \times 10^{-8}$ & $2.05 \times 10^{-7}$ & 0.266 & $1.76 \times 10^{-6}$ & \textcolor{ForestGreen}{Yes}\\
        15nbr  & Yes & Yes &  0.0262    &  $7.14 \times 10^{-8}$ & $2.04 \times 10^{-7}$ & 0.0156 & $1.76 \times 10^{-6}$ & \textcolor{ForestGreen}{Yes}
        & \textcolor{ForestGreen}{Yes}\\
        \hline
        16am   & Yes & Yes &   6.117   &  0.248 & 0.767 & 0.198 & $0.993$ & \textcolor{ForestGreen}{Yes}
        & \textcolor{ForestGreen}{Yes}\\
        \hline
        16ci  & Yes & Yes &   9.968   &  0.198 & 0.54 & 0.206 & $0.40$ & \textcolor{ForestGreen}{Yes}
        & \textcolor{ForestGreen}{Yes}\\
        \hline
        %17me   & Yes & Yes &      &  0.319 & 0.967 & 0.0062 & $17.55$& \textcolor{Sepia}{No}\\
        17me   & Yes & Yes &   0.0651   &  0.248 & 0.767 & 0.01204 & $17.55$& \textcolor{Sepia}{No}
        & \textcolor{ForestGreen}{Yes}\\
        \hline
        18     & Yes & Yes &    0.542  &  0.0588 & 0.1533 & 0.00573 & $3.0485$& \textcolor{Sepia}{No} & \textcolor{ForestGreen}{Yes}\\
        \hline
        % 18nbr & Yes & Yes &       &  $1.43\times 10^{-4}$  & $4.08 \times 10^{-4}$ & 0.262 & $6.2 \times 10^{-3}$ & \textcolor{ForestGreen}{Yes}\\
        18nbr & Yes & Yes &    0.0189  &   $1.43\times 10^{-4}$  & $4.08 \times 10^{-4}$ & 0.0115 & $6.2 \times 10^{-3}$ & \textcolor{ForestGreen}{Yes} & \textcolor{ForestGreen}{Yes}\\
        \hline
        % 22     & Yes & Yes &       &  $0.164$    &  $0.495$      & 0.387 & $6.39$& \textcolor{Sepia}{No}\\
        22     & Yes & Yes &   1.0580    &  $0.164$    &  $0.495$      & $7.182 \times 10^{-3}$ & $6.39$& \textcolor{Sepia}{No} & \textcolor{ForestGreen}{Yes}\\
        \hline
        % 28da   & Yes & Yes &       &  $6.25 \times 10^{-6}$ & $1.79 \times 10^{-5}$ & 0.328 & $5.0 \times 10^{-4}$ & \textcolor{ForestGreen}{Yes}\\
        28da   & Yes & Yes &   0.274 &  $6.25 \times 10^{-6}$ & $1.79 \times 10^{-5}$ & $2.157 \times 10^{-3}$ & $5.0 \times 10^{-4}$ & \textcolor{ForestGreen}{Yes} & \textcolor{ForestGreen}{Yes}\\
        \hline
        %33bw   & Yes & Yes &      &  $0.670$ & $2.833$ & 0.202 & $27.82$& \textcolor{Sepia}{No}\\
        33bw   & Yes & Yes &    0.0870   &  $0.670$ & $2.833$ & $2.732 \times 10^{-3}$ & $27.82$& \textcolor{Sepia}{No} & \textcolor{Sepia}{No}\\
        \hline
        % 33mg  & Yes & Yes &   0.784   &  $0.670$  & $2.83$ & 0.160 & $28.024$& \textcolor{Sepia}{No}\\
        33mg  & Yes & Yes &   0.784   &  $0.670$  & $2.83$ & $2.50 \times 10^{-3}$ & $28.024$& \textcolor{Sepia}{No} & \textcolor{ForestGreen}{Yes}\\
        \hline
        %34sa  & Yes & Yes &    1.120  &   $0.0534$  & $0.238$ & 0.221 & $0.573$ & \textcolor{ForestGreen}{Yes}\\
        34sa  & Yes & Yes &   1.120    &  $0.0534$  & $0.238$ & 0.0166 & $0.573$ & \textcolor{ForestGreen}{Yes} & \textcolor{ForestGreen}{Yes}\\
        \hline
        38si   & Yes & Yes &   0.708   &   $0.670$  & $2.833$ & 0.160 & $17.792$& \textcolor{Sepia}{No} & \textcolor{ForestGreen}{Yes}\\
        \hline
        % 51ga   & Yes & Yes &   0.575   &  $0.251$  & $0.701$ & 0.209 & $15.484$& \textcolor{Sepia}{No}\\
        51ga   & Yes & Yes &  0.575  &  $0.251$  & $0.701$ & $1.105 \times 10^{-3}$ & $15.484$& \textcolor{Sepia}{No} & \textcolor{Sepia}{No}\\
        \hline
        %51he   & Yes & Yes &    1.552   &  $0.251$  & $0.701$ & 0.259 & $1.720$& \textcolor{Sepia}{No}\\
        51he   & Yes & Yes &   1.552    &  $0.119$  & $0.321$ & 0.0112 & $1.720$& \textcolor{Sepia}{No} & \textcolor{Sepia}{No}\\
        \hline
        69     & Yes & Yes &    0.0489   &  $0.100$   & $0.263$ & $0.123 \times 10^{-3}$ & $0.518$ & \textcolor{ForestGreen}{Yes} & \textcolor{ForestGreen}{Yes}\\
        \hline
        70da    & Yes & Yes &   0.737   &   $0.194$   & $0.523$ & $2.879 \times 10^{-3}$ & $9.820$& \textcolor{Sepia}{No} & \textcolor{ForestGreen}{Yes}\\
        \hline
        74ds    & Yes & Yes &     1.0596  &  $0.161$   & $0.429$ & $0.344 \times 10^{-3}$ & $30.25$& \textcolor{Sepia}{No} & \textcolor{ForestGreen}{Yes}\\
        \hline
        85     & Yes & Yes &   0.1835    &  $1.25 \times 10^{-6}$ & $3.57 \times 10^{-6}$   & 0.0191 & $1.28 \times 10^{-5}$ & \textcolor{ForestGreen}{Yes} & \textcolor{ForestGreen}{Yes}\\
        \hline
        %94pi    & Yes & Yes &   0.2748    &  $5.86 \times 10^{-3}$ & $16.7 \times 10^{-3}$ & 0.030 & $0.0870$ & \textcolor{ForestGreen}{Yes}\\
        94pi    & Yes & Yes &    0.2748   &  $5.86 \times 10^{-3}$ & $16.7 \times 10^{-3}$ & $1.082\times 10^{-3}$  & $0.0864$ & \textcolor{ForestGreen}{Yes} & \textcolor{ForestGreen}{Yes}\\
        \hline
        % 118zh   & Yes & Yes &     &   $0.445$   & $1.412$ & $0.056$ & $22.53$& \textcolor{Sepia}{No}\\
        118zh   & Yes & Yes &    0.1438  &   $0.445$   & $1.412$ & $0.506 \times 10^{-3}$ & $22.53$& \textcolor{Sepia}{No} & \textcolor{Sepia}{No}\\
        \hline
        136ma   & Yes & Yes &   0.1854    &  $0.0309$ &  $0.09137$ & $0.135 \times 10^{-3}$ & $0.158$ & \textcolor{ForestGreen}{Yes} & \textcolor{ForestGreen}{Yes}\\
        \hline
        141   & Yes & Yes &    0.1067   &  $1.493 \times 10^{-5}$   & $3.92 \times 10^{-9}$ & $5.496 \times 10^{-6}$ & $5.48 \times 10^{-7}$ & \textcolor{ForestGreen}{Yes} & \textcolor{ForestGreen}{Yes}\\
        % 141   & Yes & Yes &      &  $1.493 \times 10^{-5}$   & $3.92 \times 10^{-9}$ & $5.64 \times 10^{-5}$ & $5.48 \times 10^{-7}$ & \textcolor{ForestGreen}{Yes}\\
        \hline
        
        % &  \\
        % \cline{2-10}
        % \textbf{Bound} &&  0.055 & 0.090 & 0.037 &  0.0004 & 0.027 & 0.021 & 0.012 & 0.0025\\ 
        % \hline
        \end{tabular}
        \end{center}
        \vspace*{-1em}
    \end{table*}
    \begin{table*}[ht]
        \renewcommand{\arraystretch}{1.4}
        \caption{Numerical validation of Theorems for meshed \textsc{Matpower} test cases at default operating point}
        \label{table:invertibility_bound_meshed}
        \vspace{-3mm}
        \tabcolsep=0.11cm
        \begin{center}
        \begin{tabular}{|c||c||c||c||c||c||c||c||c|c|}
        \hline
          & \multicolumn{2}{c||}{\textbf{Assumption Holds?}} &  \multicolumn{5}{c||}{\textbf{Quantity}}  &  \multicolumn{2}{c|}{\multirow{1}{*}{\textbf{Thm. Holds?}}}\\
        \cline{1-10}
        \textbf{Case } 
        &  $\dpdth \succ 0$ 
        & $\mathbf{J}^{-1}$ Exists 
        & $\lambda_{\rm min}(\mathbf{J})$
        & $\alpha_{\text{max}}- \alpha_{\text{min}}$ 
        & $k_{\text{max}}-k_{\text{min}}$
        & $\norm{\mathbf{M}^{-1}}^{-1}_2 \big\lVert \dpdth \big\rVert_2^{-1}$ 
        & $\|\mathbf{M}^{-1} \Delta \mathbf{K} \dpdth \|_2$ 
         & Thm. \ref{thm1:suff_cond} 
        & Thm. \ref{thm:eigenvalue_checks}\\
        \hline
        case5 & Yes & Yes& 1.0 & 0.0 & 0.0 & 0.448 & 0.0 & \textcolor{ForestGreen}{Yes} & \textcolor{ForestGreen}{Yes}\\
        \hline
        % case6 & Yes & Yes&  0.0 & 0.0 & 0.1472 & 0.0 & \textcolor{ForestGreen}{Yes}\\
        % \hline
        case9 & Yes & Yes&   0.766   &  0.0563 & 0.194 & 0.471 & 0.280 & \textcolor{ForestGreen}{Yes} & \textcolor{ForestGreen}{Yes}\\
        \hline
        case14 & Yes & Yes&   0.549   &   0.138 & 0.434 & 0.0915 & 0.474 & \textcolor{ForestGreen}{Yes} & \textcolor{ForestGreen}{Yes}\\
        \hline
        case24 & Yes & Yes& 1.0     &   $1.65 \times 10^{-3}$ & $8.70\times 10^{-3}$ & $0.0317$ & 0.0230 & \textcolor{ForestGreen}{Yes} & \textcolor{ForestGreen}{Yes}\\
        \hline
        case30 & Yes & Yes&    0.235  &   0.192 & 0.591 & 0.1472 & 1.335 & \textcolor{Sepia}{No} & \textcolor{ForestGreen}{Yes}\\
        \hline
        \end{tabular}
        \end{center}
    \end{table*}

    Tables \ref{table:invertibility_bound_radial} and \ref{table:invertibility_bound_meshed} verify that when all buses in a network have constant, nonunity power factors, as in \texttt{case4\_dist}, which has $\alpha_i = 0.894$ for all buses $i$, the condition of Theorem \ref{thm1:suff_cond} is trivially satisfied. The results also verify that the sufficient condition \eqref{eq:inv_condition_1} and the stronger, physically interpretable condition \eqref{eq:inv_condition_2} are satisfied for many test cases with differing, non-unity power factors at their default operating points. This is observed for both radial and meshed cases, as shown in Table \ref{table:invertibility_bound_radial} and \ref{table:invertibility_bound_meshed}, respectively.  %specifically, \texttt{case12da, case15nbr, case15da, case18nbr, case28da, case85,} and \texttt{case94pi}. 
    
    Additionally, cases that do not satisfy the condition typically have large variations between the bus power factors. Therefore, we hypothesize that future research could potentially leverage Theorem \ref{thm1:suff_cond} to design control algorithms, such as those that follow the formulation of \cite{baker_network-cognizant_2018}. Another future research direction is the design of power factor controller that manages the net injection power factors of the loads such that the complex power injections remain observable from the voltage magnitude deviations. We propose that this would have applications in distribution grid sensor placement and expansion planning problems, where the costs and benefits of increasing penetration of PMUs must be considered. %Exceptions include \texttt{case34sa} and \texttt{case18}, which do not follow this trend.
    
    % Consequently, the results in Table \ref{table:invertibility_bound_radial} indicate that the sufficient condition could be used to constrain or incentive load reactive power behavior in areas without phase measurements to enhance load observability.

    % \subsection{Enforcing Observability Constraints}
    %     For the cases where the bound \eqref{eq:inv_condition_2} is not satisfied, we can design an OPF problem that achieves the conditions of Theorem \ref{thm1:suff_cond}.

    \subsubsection{Theorem 2}
    \label{sec:case-study-thm2}
      Results for Theorem \ref{thm:eigenvalue_checks} are also listed in Table \ref{table:invertibility_bound_radial} and Table \ref{table:invertibility_bound_meshed}. Additional  numerical results for testing Theorem \ref{thm:eigenvalue_checks} on the radial cases that were listed as satisfying Theorem \ref{thm:eigenvalue_checks} are provided in Table \ref{table:thm2_radial_test} in Appendix \ref{apdx:additional-numerical-results} Note that all $\mathbf{K}$ matrices discussed in Table \ref{table:thm2_radial_test} are nonsingular.

\subsection{Complex Power and Sensitivity Matrix Estimation}
\label{sec:sens_matrix_recovery}
In this section, we present two case studies for the voltage sensitivity matrix completion problems outlined in the second part of the paper, which can complement existing regression-based methods for estimating the matrices.
    
    \subsubsection{IEEE 13-Bus Test Case}
    We compute the voltage sensitivities to active and reactive power injections for the IEEE 13-bus test case using OpenDSS as a baseline for comparison. The default loadshape is used for all loads.  CVXPY \cite{diamond2016cvxpy} is used to implement the matrix completion algorithms. To represent a varying degree of sensor penetration, we change the number of observed sensitivity coefficients $|\Omega|$, from 20\% to 90\% of the total number of entries. We vary the nuclear norm penalty $\lambda$ between $1\times 10^{-6}$ and $8 \times 10^{-6}$. We fix $\delta = 6 \times 10^{-3}$. We reconstruct the active and reactive power sensitivities with a mean absolute error below $1.25 \times 10^{-6}$ for all sensor levels. 
    
     %as shown in Fig. \ref{sens_recovery_results}. 
    
    % \begin{figure}[tb]
    %  %\centerline{\includegraphics[width=0.95\linewidth,height=.3\textheight,keepaspectratio]{figures/sens_heatmap.pdf}}
    %   %\centerline{\includegraphics[width=0.85\linewidth,height=.3\textheight,keepaspectratio]{figures/lambda_comparison.pdf}}
    %     \centerline{\includegraphics[width=0.75\linewidth,height=.3\textheight,keepaspectratio]{figures_v2/mae_by_phase_and_inj_r2.pdf}}
    %     %\centerline{\includegraphics[width=0.95\linewidth,height=.3\textheight,keepaspectratio]{figures_v2/mape_by_phase_and_inj.pdf}}
    %     \caption{Mean absolute error for the entries of the IEEE 13-bus feeder sensitivity matrices by phase and injection type vs. percentage of entries known a priori. Shaded region represents a 95\% bootstrap confidence interval.}
    %     \label{sens_recovery_results}
    % \end{figure}
    
    % As seen in Fig. \ref{observability_matrix}, the IEEE 13 sensitivities range between $-3 \times 10^{-6}$ and $3 \times 10^{-6}$ in our experiments. Therefore, the reconstruction error in Fig. \ref{sens_recovery_results} indicates the method performs reasonably well even for a large number of unknown entries. 
    
    At the sensor observability level of 20\%, we used the online model update \eqref{eq:reg_dyn_problem} to estimate $\tilde{\mathbf{S}}^{\#}_t$ in real time. With a smoothing factor of $\gamma=0.9$, a nuclear norm penalty of $\lambda=1.25 \times 10^{-4}$, and a gain of $c=1 \times 10^{-8}$ for the smoothing term, we run the online optimization problem for 15-minute time steps at 10 different noise levels for the IEEE 13-bus test case, as shown in Fig.~\ref{online_results}. The errors are approximately an order of magnitude smaller than the values of $\norm{\Delta \boldsymbol{v}_t}$ and $\norm{\Delta\hat{\boldsymbol{v}}_t}$ themselves at all noise levels, which indicates the predictive performance of the method. %This shows that $\tilde{\mathbf{S}}$ can accurately and precisely model the voltage impact of complex power injections.

    %The full paper will include simulation results for larger test circuits and provide further empirical validation of the results in Section III.

    \begin{figure}[tb]
     %\centerline{\includegraphics[width=0.95\linewidth,height=.3\textheight,keepaspectratio]{figures/sens_heatmap.pdf}}
      %\centerline{\includegraphics[width=0.85\linewidth,height=.3\textheight,keepaspectratio]{figures/lambda_comparison.pdf}}
      \centerline{\includegraphics[width=0.8\linewidth,height=.3\textheight,keepaspectratio]{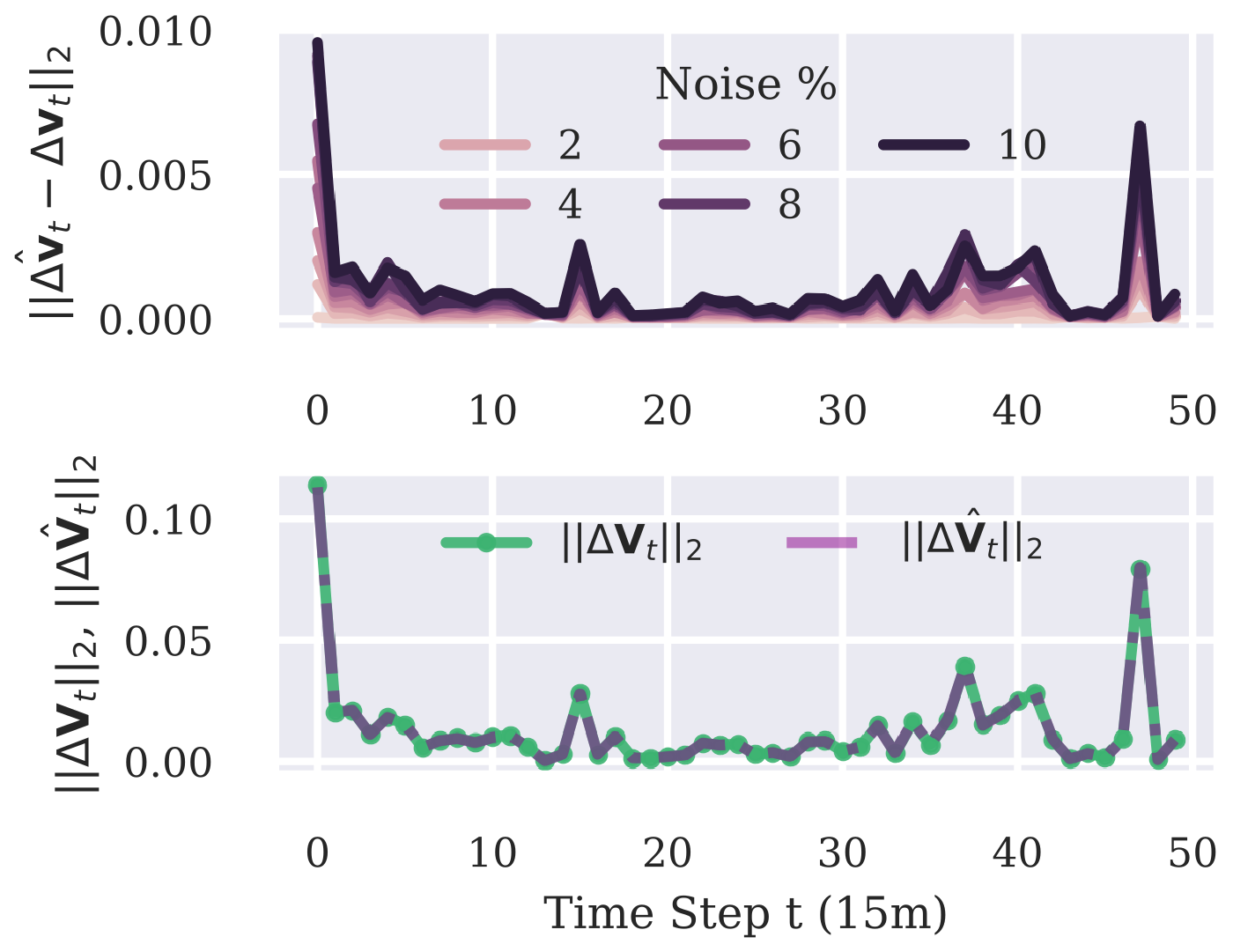}}
        %\centerline{\includegraphics[width=0.75\linewidth,height=.3\textheight,keepaspectratio]{figures_v2/online_sensitivity_v2.pdf}}
        %\centerline{\includegraphics[width=0.95\linewidth,height=.3\textheight,keepaspectratio]{figures_v2/mape_by_phase_and_inj.pdf}}
        \caption{Predictive performance of the $\tilde{\mathbf{S}}$ matrix found via \eqref{eq:reg_dyn_problem} for the IEEE 13-bus test feeder. The top figure shows the Euclidean distance vs. time between the observed and predicted voltage deviations at all buses for varying noise levels, shown as percentages of the mean measured voltage. The bottom figure shows the Euclidean norm vs. time of the predicted (purple) and the true (green) voltage deviations for all buses at the lowest noise level.}
        \label{online_results}
        %\vspace*{-1.5em}
    \end{figure}
    
    \subsubsection{IEEE 123-Bus Test Case}
    This section extensively tests the methods developed in this paper on the IEEE 123-bus test feeder. We verify the reactive power representation in \eqref{eq:implicit_representation_def} through two load data inputs. First, we set all loads to fixed power factor control seeking to maintain a value of 0.9. Second, we allow the load power factor settings to vary over time between 0.8 and 0.9. The actual power factors reported by OpenDSS after simulation were 0.795 to 0.906. 
    
    For both data inputs, we compute a time-series of $\mathbf{S}^v_p,\mathbf{S}^v_q$ using the perturb-and-observe method, which computes these matrices by adding a small static active or reactive power injection iteratively to each bus and recording the normalized change in voltage magnitudes relative to the voltages at the base case solution. More information is available in \cite{chen_measurement-based_2016}. We then estimate $\Delta \boldsymbol{p}_t,\Delta \boldsymbol{q}_t$ using the voltage magnitudes and the $\stildedag$ and $\mathbf{K}$ matrices defined in \eqref{eq:implicit_representation_def} for $t=1,\dots,m'$ as
    \begin{equation}
        \label{eq:basic_estimation_computation}
        (\Delta \hat{\boldsymbol{p}}_t,\Delta \hat{\boldsymbol{q}}_t) = (\stildedag^{-1} \Delta \boldsymbol{v}_t, \mathbf{K} \stildedag^{-1} \Delta \boldsymbol{v}_t).
    \end{equation}

    The results of this computation are shown for 10 buses of the IEEE 123-bus test feeder in Fig. \ref{fig:implicit_complex_estimation_results_ieee123_constant_pf} and Fig. \ref{fig:implicit_complex_estimation_results_ieee123_vary_pf}. The overall root mean squared error (RMSE) of the estimation results over the 24-hour time horizon for both power factor data input scenarios are shown for all buses in Fig. \ref{fig:rmse_by_node}.
    
    \begin{figure}
        \centering
        \includegraphics[width=0.85\linewidth,keepaspectratio]{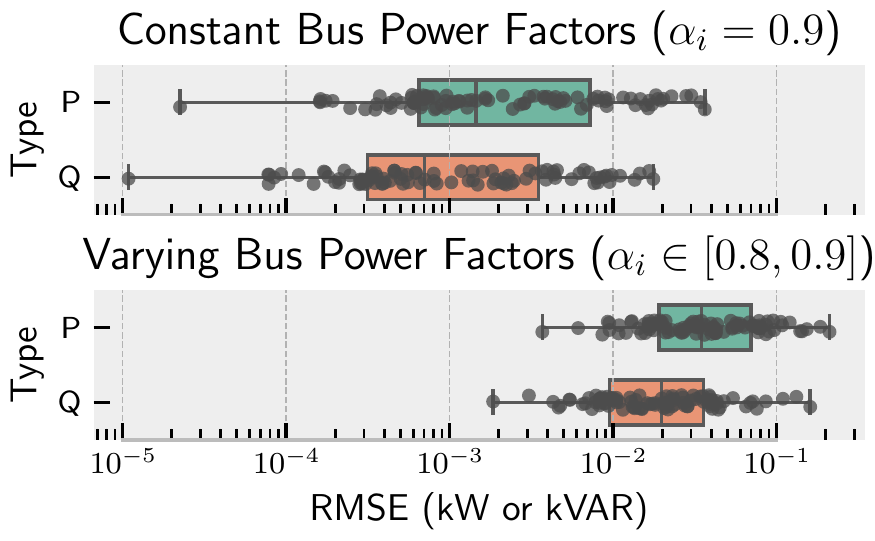}
        \caption{Root mean squared error (RMSE) of the active/reactive power deviation time-series estimated over a 24-hour time horizon for the IEEE 123-bus test case using \eqref{eq:basic_estimation_computation} with both constant (top) and varying (bottom) bus power factors. Each dot represents a single bus with a load.}
        \label{fig:rmse_by_node}
    \end{figure}
    
    \begin{figure*}
        \centering
        \includegraphics[width=0.985\linewidth,keepaspectratio]{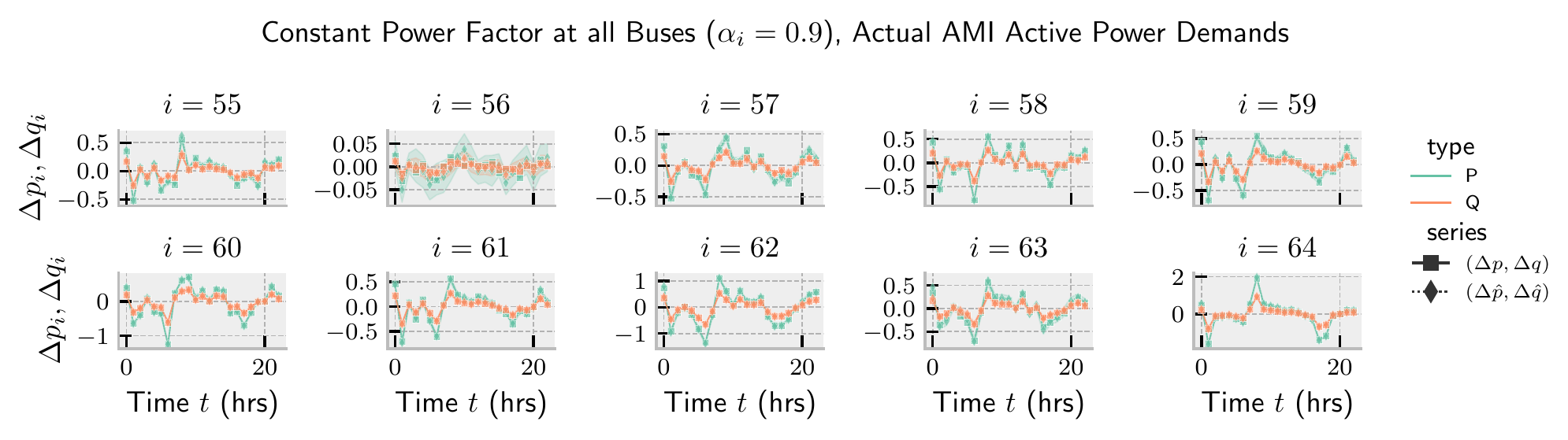}
         \vspace{-3mm}
        \caption{Estimating complex power using voltage magnitudes for the IEEE 123-bus case with fixed bus power factors of 0.9 using the basic implicit representation in \eqref{eq:basic_estimation_computation} from \eqref{eq:implicit_representation_def}. The measured perturbations are shown with squares/solid lines and the estimated values are shown with diamonds/dashed lines. The shaded regions denote a 95\% bootstrap CI found by repeatedly injecting noise into the AMI data with variance of 0.1\% of the mean power.}
        \label{fig:implicit_complex_estimation_results_ieee123_constant_pf}
    \end{figure*}
    \begin{figure*}
        \centering
        \includegraphics[width=0.985\linewidth,keepaspectratio]{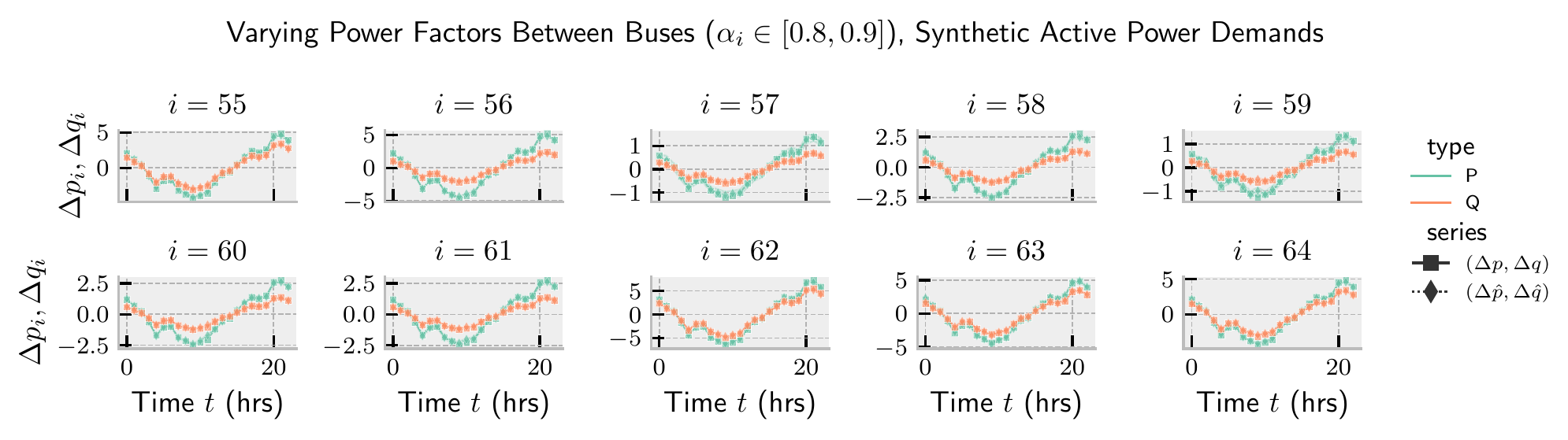}
        \vspace{-3mm}
        \caption{Estimating complex power using voltage magnitudes for the IEEE 123-bus case with varying bus power factors using  \eqref{eq:basic_estimation_computation}.}
        \label{fig:implicit_complex_estimation_results_ieee123_vary_pf}
    \end{figure*}
    
    Subsequently, we evaluate the $\tilde{\mathbf{S}}$ matrix recovery technique using the IEEE 123-bus test case with multiple reactive power behaviors. The bus power factors are set to 0.9 for all loads. We initialize a random $\tilde{\mathbf{S}}_0$ with 90\% to 25\% of the entries unknown. Applying the matrix recovery algorithm with hyperparameters $\lambda = 0.125$ and $\delta = 0.06$, we estimate a wide $\tilde{\mathbf{S}}^{\#}$ with a relative percentage error, 
    $
        (\| \tilde{\mathbf{S}} - \tilde{\mathbf{S}}^{\#} \|
    \big/ \| \tilde{\mathbf{S}} \|) \times 100$ of 7.62\% when 20\% of the entries are known. The recovered matrix is illustrated in Fig. \ref{fig:my_label}. The performance of the estimated matrix as the system evolves across time is shown in Fig. \ref{fig:ieee123_rel_err_series}. Note that this does not depict a time-varying estimate of the matrix, in contrast with Fig. \ref{online_results}.
    \begin{figure}
        \centering
        \includegraphics[width=0.95\linewidth, keepaspectratio]{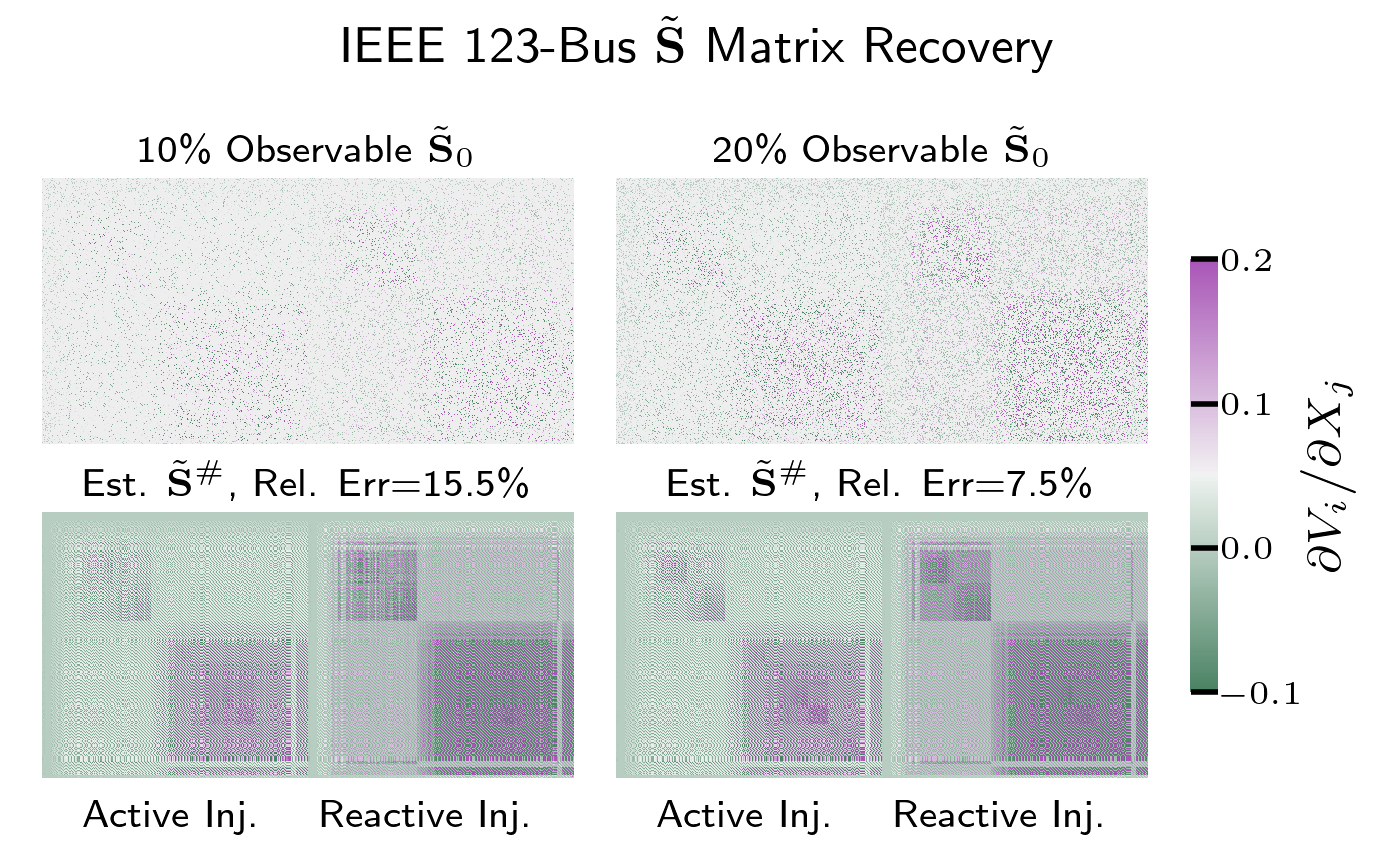}
    \caption{Recovering the $\tilde{\mathbf{S}}$ matrix for the IEEE 123-bus test feeder using \eqref{eq:partial_information_nuc_reg_lsq} with 80\% and 90\% of the 150,152 coefficients unknown. Hyperparameters are $\lambda = 0.125$ and $\delta = 0.06$. Rel. Fro. error $(\| \tilde{\mathbf{S}} - \tilde{\mathbf{S}}^{\#} \|
    \big/ \| \tilde{\mathbf{S}} \|) \times 100 = 7.62\%$.}
        \label{fig:my_label}
    \end{figure}

    \begin{figure}[t]
     \centerline{\includegraphics[width=0.95\linewidth,height=.3\textheight,keepaspectratio]{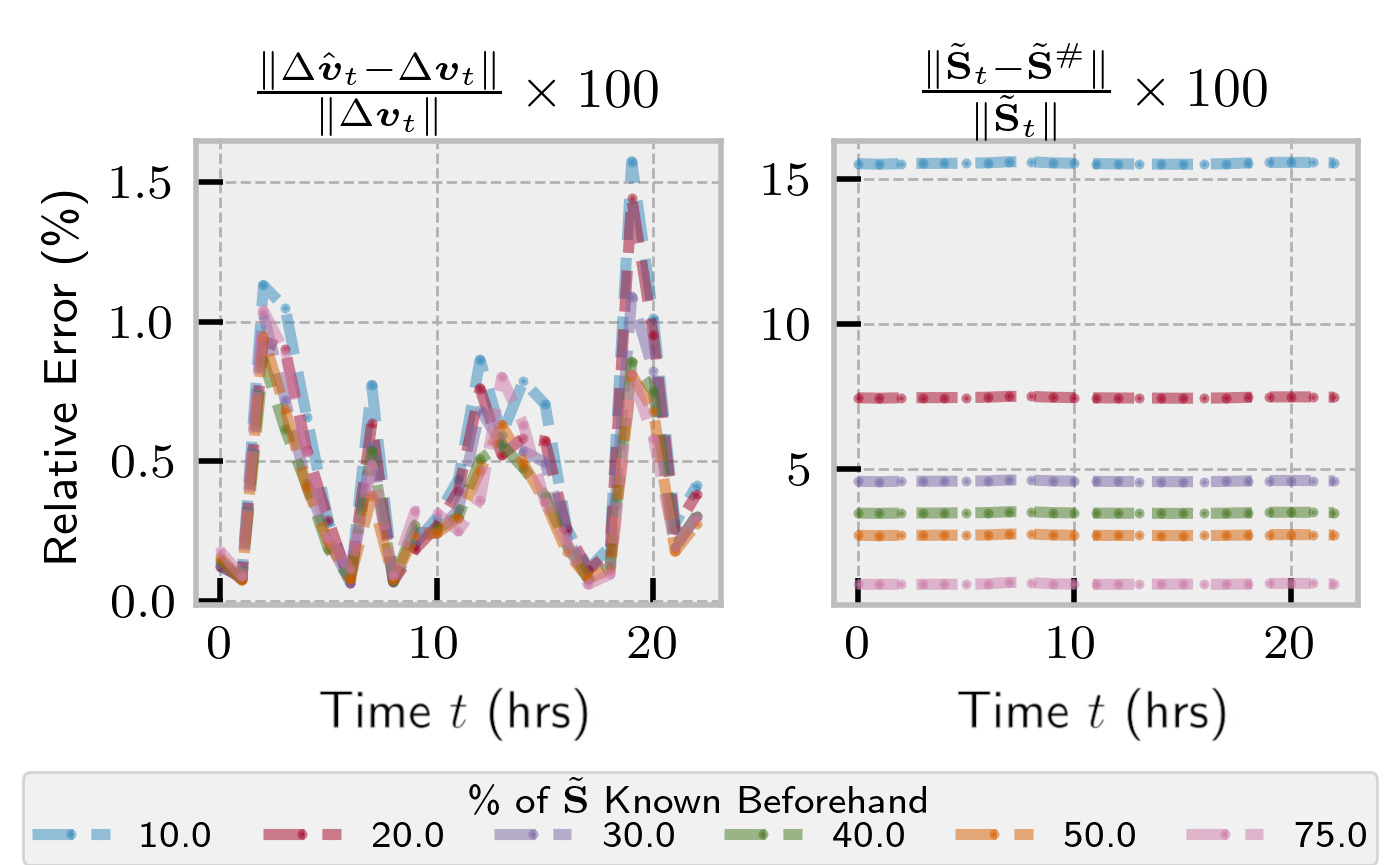}}
        \caption{Performance of the recovery of the $\tilde{\mathbf{S}}$ matrix for the IEEE 123-bus test feeder with 90\%-25\% of the 150,152 coefficients unknown, using \eqref{eq:partial_information_nuc_reg_lsq} with hyperparameters $\lambda = 0.125$ and $\delta = 0.06$.}
        \label{fig:ieee123_rel_err_series}
        %\vspace*{-1.5em}
    \end{figure}

    \subsection{Tests on large-scale meshed networks}
    \label{sec:large_scale_experiment}
        This section details the results of experimentally verifying the proposed conditions on large-scale, more realistic, meshed test networks. The proposed conditions appear to hold for larger test networks at the AC power flow solution.

        Specifically, experiments mirroring those done in Section \ref{sec:comp_test_power_factor_bound} and Section \ref{sec:sens_matrix_recovery} are conducted on two larger scale test networks in an effort to more realistically capture the behavior of true electric power networks.
        
        First, we use the 73-bus IEEE Reliability Test System-Grid Modernization Lab Consortium (RTS-GMLC) network model \cite{barrows_rts_gmlc},  a publicly available and open source.
        This network model seeks to be a standard and open test case for electric power system production cost modeling and reliability calculations. %This network model is depicted as a meshed graph in Fig. \ref{fig:rts-gmlc}. 
        
        Second, we use the 2k-Bus Synthetic Texas Model from the Texas A\&M University/ARPA-E PERFORM  \cite{birchfield-synthetic-grids} synthetic grid dataset. We elect to study the 2k-Bus Synthetic Texas Model for the purpose of ensuring that the experiments can be efficiently reproduced on consumer hardware. In our experiments, we use a ThinkPad T14 laptop computer with a Ryzen 7 PRO 4750-U 8-core 1.7 GHz CPU with 42 GB of RAM. %The 2000-bus Synthetic Texas network model is depicted as a meshed graph in Fig. \ref{fig:rts-gmlc}.
        The results of these experiments are shown in Table \ref{table:invertibility_bound_meshed_large_scale}.
    \begin{table*}[ht]
      \renewcommand{\arraystretch}{1.45}
        \vspace{-3mm}
        \tabcolsep=0.11cm
       \centering
        \caption{Numerical validation of Theorems for large-scale meshed \textsc{Matpower} test cases at default operating point}
        \begin{tabular}{|c||c||c||c||c||c||c||c||c|c|}
          \hline
          & \multicolumn{2}{c||}{\textbf{Assumption Holds?}} &  \multicolumn{5}{c||}{\textbf{Quantity}}  &  \multicolumn{2}{c|}{\multirow{1}{*}{\textbf{Thm. Holds?}}}\\
        \cline{1-10}
        \textbf{Case } 
        &  $\dpdth \succ 0$ 
        & $\mathbf{J}^{-1}$ Exists 
        & $\lambda_{\rm min}(\mathbf{J})$
        & $\alpha_{\text{max}}- \alpha_{\text{min}}$ 
        & $k_{\text{max}}-k_{\text{min}}$
        & $\norm{\mathbf{M}^{-1}}^{-1}_2 \big\lVert \dpdth \big\rVert_2^{-1}$ 
        & $\|\mathbf{M}^{-1} \Delta \mathbf{K} \dpdth \|_2$ 
         & Thm. \ref{thm1:suff_cond} 
        & Thm. \ref{thm:eigenvalue_checks}\\
        \hline
             RTS-GMLC & Yes & Yes&    0.295  &   0.7632 & 0.2923 & 0.0324 & 0.02289 & \textcolor{ForestGreen}{Yes} & \textcolor{ForestGreen}{Yes}\\
        \hline
        Sg2k & Yes & Yes&    0.235  &   0.192 & 0.591 & $0.207 \times 10^{-3}$ & 0.1355 & \textcolor{ForestGreen}{Yes} & \textcolor{ForestGreen}{Yes}\\
        \hline
%        Sg10k & Yes & Yes&    0.235  &   0.192 & 0.591 & 0.1472 & 46.766 & \textcolor{Sepia}{No} & \textcolor{ForestGreen}{Yes}\\
        \end{tabular}
    
        \label{table:invertibility_bound_meshed_large_scale}
    \end{table*}

\section{Limitations and Future Work}
The proposed theory and algorithms, as well as the presented case studies have limitations and opportunity for future work, which we describe throughout this section. 

\subsection{Analytical Results}
The analytical results developed in this paper rely on the definition of the sensitivities of voltage phasors to complex power injections in rectangular coordinates \eqref{eq:vph_sensitivities_definition} developed in \cite{christakou_efficient_2013}. According to \cite{christakou_efficient_2013}, these definitions are valid for radial electric power systems, and we have mirrored this scope in Lemmas \ref{lemma:representation} and \ref{lemma:distinct_vmag_sens}. There is an opportunity for future work to extend those analyses and generalize to non-radial networks. 

Furthermore, we stress that the proposed condition \eqref{eq:inv_condition_2} for the existence of a unique $\Delta \boldsymbol{x}$ is a sufficient but not necessary condition. While this sufficient condition does hold at the default operating point for numerous test cases as shown in Tables \ref{table:invertibility_bound_radial} and \ref{table:invertibility_bound_meshed}, the fact that it does not hold for other test cases does not necessarily imply that the applications are impossible. This does, however, indicate that this problem is not fully solved from a theoretical perspective. Reformulations and tightening of this inequality or the development of necessary conditions is an important direction for future work.

%Parts of the analysis developed in this paper have discussed general radial networks, with measurements occurring on the primary side of service transformers.
There is also an opportunity for future work to explicitly extend the analysis outlined here to a broader range of network topologies. Notably, we have completely neglected the explicit multi-phase analysis of the sensitivities. We hypothesize that the models and theories we have considered are general for multi-phase distribution networks; however, their explicit formulation remains an opportunity for future work. 

In some topologies, additional considerations for details beyond those in the scope of this paper may be needed. On the other hand, some topologies such as secondary distribution networks\textemdash i.e., networks on the secondary side of a distribution network service transformer\textemdash may contain computationally favorable topological structures. In general, prior knowledge of the network topology may admit a host of new extensions to this research. %The recent literature has seen successful application of sensitivity matrix-based methodologies in secondary networks, for instance, in control \cite{zholbaryssov_safe_2021}.

Finally, we note that a valuable direction for future work is to extend the conditions derived in this work to evaluate state estimation feasibility if a subset of the buses have \emph{only} voltage magnitude measurements or \emph{only} power measurements. Related work on topology estimation with partially observable AMI/smart meter measurements, e.g., \cite{lin_data-driven_2021}, may indicate that this open question is promising.

\subsection{Computational Results}
In practice, the quantity $\Delta k$ on the left hand side of the inequality derived in Theorem \ref{thm1:suff_cond}, \eqref{eq:inv_condition_2}, is inherently time-varying, as it depends on the operating point of the network. The right hand side of the inequality is also time-varying, as it depends on the angle sensitivity matrices from the power flow Jacobian, which are themselves functions of the operating point. In the results presented in Table~\ref{table:invertibility_bound_radial} and Table~\ref{table:invertibility_bound_meshed}, we use the default power injections of the test cases; therefore, the results represent a single point in time. There is an opportunity for future work to investigate how these quantities change across time and how this impacts the estimation quality.

The sensitivity matrix completion methods have clear practical applications\textemdash particularly in realistic AMI data modeling scenarios where missing data are prevalent\textemdash however, the nuclear norm regularizer contained in the objective function of these problems, $\lambda \norm{\mathbf{S}}_*$, can be computationally expensive for large circuits when na\"ive implementations are used. Future improvements in the computational efficiency of nuclear norm-regularized optimization problems have applications in some of the results of this work. This could improve the feasibility of these results to large-scale distribution system problems, and thus, a wider range of real-world settings.

\section{Discussion}
\label{sec:discussion}
The theory and algorithms developed in this paper directly relate to one another. The conditions provided by Theorems \ref{thm1:suff_cond} and \ref{thm:eigenvalue_checks} ensure when it is possible to relate $\vv$ to $\vp$ and $\vq$ without knowledge of $\vtheta$ through the underdetermined system \eqref{eq:encoded-matrices}. Correspondingly, the power system applications of the matrix completion/recovery algorithms reviewed in Section \ref{sec:algorithms_application} allows engineers to go further, and indicate that is possible to infer of the voltage sensitivities of nodes that do not have measurements of $\vp,\vq$, or even $\vv$, provided that the following assumptions hold:
\begin{enumerate}
    \item The engineer has access to ``seed" coefficients that form $\tilde{\mathbf{S}}_0$. For example, this could take the form of precomputed voltage sensitivity coefficient estimates computed from historical data of a limited number of measured nodes elsewhere in the distribution network.
    \item Some prior knowledge, intuition, or estimate of the dimensionality of the of the full $\mtx{\Tilde{S}}$ matrix to be estimated. This could take the form of topological information reported from engineers or technicians, GPS data, and/or satellite imagery.
    \item The mild technical assumption described in  Remark \ref{remark:approx_low_rank}.
\end{enumerate}

The results discussed in Section \ref{sec:analytical_results} directly inform the results of Section \ref{sec:algorithms_application} by allowing Algorithms such as those developed in Section \ref{sec:algorithms_application} may be of valuable practical application to solving engineering problems in rural distribution system environments, due to lagging availability of the very fast GPS synchronization and robust communication infrastructure required for some classes of control and optimization algorithms. Thus, we conclude that this research provides valuable engineering insight for enhancing the observability of distribution grids in austere environments.

% %\subsection{Comparing the guarantees}
% \label{sec:guarantee_comparions}
% In Table \ref{tab:bounds_compared}, we see the input data and guarantees provided by each condition developed in the theorems of this paper.

% \begin{table}[h]
%     \centering
    
%     \caption{Comparison of the analyses of relationship between voltage magnitude sensitivities and nodal power factors}
%     \label{tab:bounds_compared}
%     \begin{tabular}{||c||c|c|}
%         \hline
%         Theorem   & 
%         Hypothetical Input & Conditions \\
%         \hline
%         \ref{thm1:suff_cond}         
        
%         & 
        
%         $p,q,\alpha$ 
        
%         & 
        
%         Upper bound $k_{\sf max}(p,q,\alpha)$.
%         \\\hline
%          \ref{thm:eigenvalue_checks}
%         & 
%         $p,q,\alpha$
%         & 
%         3
%         \\\hline
%     \end{tabular}
% \end{table}

\section{Conclusion}
%The presented methods approximate the response of the voltage magnitude perturbations of a radial network as a function of power perturbation states via a compact linear system of equations. This allows for an iterative linearization of the bus voltages in the network using AMI data, including at buses that may not have direct measurements. %This is helpful for improving today's electric distribution network models, which are often error-prone. These errors can cause inaccuracies in the model's simulation of underlying system physics, which the proposed method can help correct. 

%We developed an analysis of the voltage magnitude sensitivity matrices for radial electrical networks and showed that these matrices achieve distinct values. We developed methods for recovering them from measurement data with varying observability. The methods introduced allow networks to be analyzed with significantly reduced data input requirements. Problems that typically require energy measurements can be solved using voltage magnitude data by exploiting the difference in the voltage sensitivities created by active and reactive power injections. The presented results are useful for updating a linearization of the bus voltages in low-observability settings, or if the circuit model is of questionable quality. 

This paper developed theory and algorithms for analyzing sensitivity matrices relating voltage magnitudes to active and reactive power injections, and correspondingly, analyzing complex power injections with voltage magnitudes. 

First, we showed that these matrices achieve distinct values in radial distribution networks. Then, we developed a sufficient condition based on the bus power factors for arbitrary networks that guarantees a solution to the underdetermined linear system formed by these matrices. In summary, the condition shows that there exists a \textit{sufficient margin of distance} between the bus power factors within the network that, if satisfied, allows for the estimation of unique active and reactive power injection vectors that explains the vector of voltage magnitude perturbations, without observation of the voltage phase angles. 

Finally, several algorithms were proposed for estimating voltage sensitivity matrices or updating an existing estimate from measurement data in diverse network models. Such algorithms may help enable the application of voltage sensitivities to decision-making problems in real-world electric power distribution systems.

The results of this paper indicate that some distribution networks may be able to be modeled with significantly reduced data input requirements. Problems that seemingly require phase angle measurements may be able to be solved using voltage magnitude measurements by exploiting the voltage sensitivities created by active and reactive power injections. Further, existing linear sensitivity models for the bus voltages can be updated and used to provide information about other buses in the network, assisting the identification and model-free prediction of the behavior of distribution grids in settings where measurements are limited in both number and quality.

\begin{appendices}

    \section{Proofs of Lemmas}
    
        %\subsection{Lemma \ref{lemma:representation}}
        \subsection{Proof of Lemma \ref{lemma:representation}}
        \label{apdx:proof_of_vmag_representation}
        \begin{proof}
            Let $v_i$ and $\theta_i$ be the real-valued magnitude and angle of the voltage phasor. Write the voltage phasor sensitivity at bus $i$ to an arbitrary quantity $x$ as:
            \begin{equation}
                \frac{\partial \bar{v}_i }{\partial x } = \frac{\partial}{\partial x }\{ v_i e^{j \theta _i}\} = \frac{\partial v_i}{\partial x } e^{j \theta_i} + j v_i \frac{\partial \theta_i}{\partial x }e^{j \theta_i},
            \end{equation}
            so, we have that:
            {\small
            \begin{align}
            e^{-j \theta_i} \frac{\partial \bar{v}_i }{\partial x }  = \frac{\partial v_i}{\partial x } + j v_i \frac{\partial \theta _i}{\partial x }
            \iff \operatorname{Re}\Bigg\{  
            e^{-j \theta_i} \frac{\partial \bar{v}_i}{\partial x }
            \Bigg\} &= \frac{\partial v_i}{\partial x }.
            \end{align}
            }
            Next, observe that $\bar{v}_i^*/{v_i} = e^{-j \theta}$. Therefore,
            \begin{align}
                \operatorname{Re}\Bigg\{  
            e^{-j \theta_i} \frac{\partial \bar{v}_i}{\partial x }
            \Bigg\} &= \frac{1}{v_i} \operatorname{Re} \Bigg\{ 
                \bar{v}_i^* \frac{\partial\bar{v}_i}{\partial x }
            \Bigg\}, \\
                \frac{\partial v_i}{\partial x } &= \frac{1}{v_i} \operatorname{Re} \Bigg\{ 
                \bar{v}_i^* \frac{\partial\bar{v}_i}{\partial x }
            \Bigg\}.
            \end{align}
            Make $x = p_l$ or $x = q_l$ to get the desired result.
            \end{proof}
        
        \subsection{Proof of Lemma \ref{lemma:distinct_vmag_sens}}
        \label{apdx:proof_of_distinct_vmag_sens}
            \begin{proof}
                Using \eqref{magnitude_sens}, the voltage magnitude sensitivity coefficients for bus $i$ to reactive power at bus $l$ is:
                \begin{equation}
                \label{step1_real}
                     \frac{\partial v_i}{\partial q_l} = \frac{1}{v_i} \operatorname{Re} \left\{{(\operatorname{Re}\{\bar{v}_i\}-j\operatorname{Im}\{\bar{v}_i\})\frac{\partial\bar{v}_i}{\partial q_l}}\right\}, %\\
                    %= \frac{|\partial \bar{v}_i|}{\partial q_l} = \frac{1}{V_i} \operatorname{Re} \{(\operatorname{Re}\{\bar{v}_i\}-j\operatorname{Im}\{\bar{v}_i\})(a+jb)\} \\
                    % =\frac{1}{|\bar{v}_i|} \operatorname{Re} \left[\operatorname{Re}\{\bar{v}_i\}a+j\operatorname{Re}\{\bar{v}_i\}b-j\operatorname{Im}\{\bar{v}_i\}a+\operatorname{Im}\{\bar{v}_i\}b\right]
                \end{equation}
                and in the same way, for active power, we have:
                \begin{equation}
                \label{step1_reactive}
                    \frac{\partial v_i}{\partial p_l} = \frac{1}{v_i} \operatorname{Re} \left\{(\operatorname{Re}\{\bar{v}_i\}-j\operatorname{Im}\{\bar{v}_i\})\frac{\partial\bar{v}_i}{\partial p_l}\right\}. %\\
                \end{equation}
                Simplifying the above results in:
                \begin{align}
                    \frac{\partial v_i}{\partial q_l} &= \frac{1}{v_i}(\operatorname{Re}\{\bar{v}_i\}a+\operatorname{Im}\{\bar{v}_i\}b),\\
                    \frac{\partial v_i}{\partial p_l} &=\frac{1}{v_i}(\operatorname{Re}\{\bar{v}_i\}c+\operatorname{Im}\{\bar{v}_i\}d).
                \end{align}
                From Remark \ref{remark:unique_complex_vph_sens}, if $\frac{\partial\bar{v}_i}{\partial q_l} \neq \frac{\partial\bar{v}_i}{\partial p_l}$ for nonzero solutions this implies that either $a \neq c$ or $b \neq d$. So, provided that the sensitivities are not zero, i.e., \eqref{zero_con_1} and \eqref{zero_con_2} are satisfied, then it must also be true that $ \frac{\partial v_i}{\partial q_l} \neq \frac{\partial v_i}{\partial p_l} \ \forall i, l$.  %regardless of whether the real components $p,q$ of the complex sensitivity coefficients are equivalent or not, the distinct imaginary components $z,x$ of 
                \end{proof}
        
        \subsection{Proof of Lemma \ref{lemma:k_inverse_func}}
        \label{apdx:proof_of_k_inv}
        \begin{proof}
            % Recall that the power factor is defined as the cosine angle difference between voltage and current, as in \eqref{eq:pf_def}. However, for the convention $\alpha_i \in (0,1]$, we can write
            % \begin{subequations}
            % \begin{align}
            %     \alpha_i &\triangleq \cos\left(\arctan\left(\frac{q_i}{[p_i}\right)\right)\\
            %     &= \cos\left(\operatorname{atan2}\left(\frac{q_i}{p_i}\right)\right)    
            % \end{align}    
            % \end{subequations}
            
            Recall that we defined  $k : \alpha \in (0,1] \mapsto \mathbb{R}$ as $k(\alpha) \triangleq \pm \frac{1}{\alpha} \sqrt{1- \alpha^2}$, where $q(p|\alpha) = k(\alpha)p$. We want to show that there exists a $k^{-1}(\cdot): k(\alpha) \mapsto \alpha$, where $k^{-1}(k(\alpha)) = \alpha$ for any $\alpha \in (0,1]$. Note that $k^2(\alpha) = \frac{1- \alpha^2}{\alpha^2}$. Thus we have that
            \begin{subequations}
                \label{eq:final_steps}
                \begin{align}
                    \alpha^2k^2(\alpha) = 1-\alpha^2,\\
                    \iff \left(k^2(\alpha) + 1\right) \alpha^2 -1 = 0,\\
                    \label{eq:final-step-kinv-proof}
                    \iff \alpha = \pm \sqrt{\frac{1}{k^2(\alpha)+1}}.
                \end{align}
            \end{subequations}
            By definition, the power factor must satisfy $\alpha \in (0,1]$, thus, we can discard the negative solution in \eqref{eq:final-step-kinv-proof}. Finally, we have
            \begin{equation}
                k^{-1}(\alpha) = \sqrt{\frac{1}{k^2(\alpha) +1}} \quad \alpha \in (0,1],
            \end{equation}
            which is what we wanted to show.
        \end{proof}

   \section{Additional numerical results for Theorem 2}
   \label{apdx:additional-numerical-results}
    \begin{table}[h!]
        \renewcommand{\arraystretch}{1.25}
                \caption{Additional numerical results for Theorem \ref{thm:eigenvalue_checks} for radial \textsc{Matpower} test cases at default operating point}
                \label{table:thm2_radial_test}
                \vspace{-3mm}
                \tabcolsep=0.07cm
                
            \begin{center}
                \begin{tabular}{|c|l|l|l|l|l|l|}
                \hline
                \multicolumn{1}{|c|}{\multirow{2}{*}{\textbf{Case}}} 
                    & \multicolumn{2}{l|}{$\mathbf{K}$ Matrix}
                    & \multicolumn{2}{l|}{Minimum eigenvalue} & \multicolumn{2}{l|}{Power Factors}
                    \\ 
                    \cline{2-7} 
                    \multicolumn{1}{|c|}{}                  
                    & \multicolumn{1}{l|}{$k_{\rm min}$} 
                    & \multicolumn{1}{l|}{$k_{\rm max}$} 
                    & \multicolumn{1}{l|}{$\stildedag$} & \multicolumn{1}{l|}{$ \stildestar$} 
                    & \multicolumn{1}{l|}{$\alpha_{\rm min}$} 
                    & \multicolumn{1}{l|}{$\alpha_{\rm max}$} 
                    \\ 
                    \hline
                    2 &

                 0.285 & %k_min
                 0.285 & %k_max
                 0.022 & 
                 0.076 &
                 0.932 &
                 0.962 
                 \\
                 \hline
                    4\_dist & %case name
                 0.5 & %k_min
                 0.5 & %k_max
                 $2.3 \times 10^{-3}$ & %eigmin(A)
                 $4.6 \times 10^{-3}$ & %eigmin(B)
                 0.894 & %pf_min
                 0.894  %pf_max
                \\
                \hline
                      5\_tnep & %case
                 0.329 & %k_min
                 0.329 & %k_max
                 $4.57 \times 10^{-3}$ & %eigmin(A)
                 $13.9 \times 10^{-3}$ & %eigmin(B)
                 0.949 & %pf_min
                 0.949  %pf_max
                 \\ 
                \hline
                 12da & %case
                 0.75 & %k_min
                 1.0 & %k_max
                 $3.87 \times 10^{-3}$ & %eigmin(A)
                 $4.59 \times 10^{-3}$ & %eigmin(B)
                 0.707 & %pf_min
                 0.80  %pf_max
                 \\ %k_max
                \hline
                   15da & %case
                 1.020 & %k_min
                 1.020 & %k_max
                 $3.59 \times 10^{-3}$ & %eigmin(A)
                 $3.52 \times 10^{-3}$ & %eigmin(B)
                 0.7 & %pf_min
                 0.7  %pf_max
                 \\ 
                \hline
                  16am & %case
                 0.133 & %k_min
                 0.90 & %k_max
                 $4.265 \times 10^{-10}$ & %eigmin(A)
                 $6.239 \times 10^{-10}$ & %eigmin(B)
                 0.673 & %pf_min
                 1.0  %pf_max
                 \\ 
                \hline
                 22 & %case
                 0.860 & %k_min
                 1.355 & %k_max
                 $0.119 \times 10^{-3}$ & %eigmin(A)
                 $0.137 \times 10^{-3}$ & %eigmin(B)
                 0.594 & %pf_min
                 0.758  %pf_max
                 \\ 
                \hline
                  28da & %case
                 1.020 & %k_min
                 1.020 & %k_max
                 $1.46 \times 10^{-3}$ & %eigmin(A)
                 $1.43 \times 10^{-3}$ & %eigmin(B)
                 0.699 & %pf_min
                 0.700  %pf_max
                 \\ 
                \hline
                33mg & %case
                 0.167 & %k_min
                 3.0 & %k_max
                 $0.40 \times 10^{-3}$ & %eigmin(A)
                 $0.69 \times 10^{-3}$ & %eigmin(B)
                 0.316 & %pf_min
                 0.986  %pf_max
                 \\ 
                \hline
                % 2 & %case
                %  0.285 & %k_min
                %  0.5 & %k_max
                %  $0.20 \times 10^{-3}$ & %eigmin(A)
                %  $0.41 \times 10^{-3}$ & %eigmin(B)
                %  0.898 & %pf_min
                %  1.0  %pf_max
                %  \\ %k_max
                % \hline
                70da & %case
                 0.40 & %k_min
                 0.923 & %k_max
                 $1.03 \times 10^{-3}$ & %eigmin(A)
                 $1.76 \times 10^{-3}$ & %eigmin(B)
                 0.735 & %pf_min
                 0.929  %pf_max
                 \\ %k_max
                \hline
                69 & %case
                 0.583 & %k_min
                 0.846 & %k_max
                 $2.98 \times 10^{-5}$ & %eigmin(A)
                 $4.21 \times 10^{-5}$ & %eigmin(B)
                 0.763 & %pf_min
                 1.0  %pf_max
                 \\ %k_max
                \hline
                85 & %case
                 1.02 & %k_min
                 1.02 & %k_max
                 $0.424 \times 10^{-3}$ & %eigmin(A)
                 $0.416 \times 10^{-3}$ & %eigmin(B)
                 0.69 & %pf_min
                 1.0  %pf_max
                 \\ %k_max
                \hline
                16ci & %case
                 0.36 & %k_min
                 0.90 & %k_max
                 $1.99 \times 10^{-3}$ & %eigmin(A)
                 $3.97 \times 10^{-3}$ & %eigmin(B)
                 0.743 & %pf_min
                 0.940  %pf_max
                 \\ 
                \hline
                74ds & %case
                 0.571 & %k_min
                 1.0 & %k_max
                 $9.353 \times 10^{-5}$ & %eigmin(A)
                 $0.1398 \times 10^{-3}$ & %eigmin(B)
                 0.707 & %pf_min
                 0.863  %pf_max
                 \\ %k_max
                \hline
                51ga & %case
                 0.299 & %k_min
                 1.0 & %k_max
                 $0.679 \times 10^{-3}$ & %eigmin(A)
                 $1.115 \times 10^{-3}$ & %eigmin(B)
                 0.707 & %pf_min
                 0.958 %pf_max
                 \\ 
                \hline
                  15nbr & %case
                 1.0202 & %k_min
                 1.0202 & %k_max
                 $0.246$ & %eigmin(A)
                 $0.241$ & %eigmin(B)
                 0.70 & %pf_min
                 0.70  %pf_max
                 \\ 
                \hline
                 17me & %case
                 0.133 & %k_min
                 0.90 & %k_max
                 $24.12 \times 10^{-3}$ & %eigmin(A)
                 $39.66 \times 10^{-3}$ & %eigmin(B)
                 0.673 & %pf_min
                 0.991  %pf_max
                 \\ %k_max
                \hline
                  18nbr & %case
                 1.020 & %k_min
                 1.0204 & %k_max
                 $0.245$ & %eigmin(A)
                 $0.2399$ & %eigmin(B)
                 0.699 & %pf_min
                 0.700  %pf_max
                 \\ 
                \hline
                 34sa & %case
                 1.563 & %k_min
                 1.80 & %k_max
                 $0.236 \times 10^{-3}$ & %eigmin(A)
                 $0.142 \times 10^{-3}$ & %eigmin(B)
                 0.486 & %pf_min
                 1.0  %pf_max
                 \\ 
                \hline
                        85 & %case
                 1.02 & %k_min
                 1.02 & %k_max
                 $0.424 \times 10^{-3}$& %eigmin(A)
                 $0.416 \times 10^{-3}$& %eigmin(B)
                 0.699 & %pf_min
                 1.0  %pf_max
                \\
                \hline
                   94pi & %case
                 0.5 & %k_min
                 0.5 &
                 $0.20 \times 10^{-3}$ & %eigmin(A)
                 $0.41 \times 10^{-3}$ & %eigmin(B)
                 0.898 & %pf_min
                 1.0  %pf_max
                 \\ %k_max
                \hline
                %  118zh & %case
                %  0.5 & %k_min
                %  0.5 & %k_max
                %  $0.20 \times 10^{-3}$ & %eigmin(A)
                %  $0.41 \times 10^{-3}$ & %eigmin(B)
                %  0.898 & %pf_min
                %  1.0  %pf_max
                %  \\ %k_max
                % \hline
                136ma & %case
                 0.398 & %k_min
                 0.489 & %k_max
                 $9.71 \times 10^{-5}$ & %eigmin(A)
                 $0.23 \times 10^{-3}$ & %eigmin(B)
                 0.898 & %pf_min
                 1.0  %pf_max
                 \\ %k_max
                \hline
                  141 & %case
                 0.6197 & %k_min
                 0.6197 & %k_max
                 $2.147 \times 10^{-7}$ & %eigmin(A)
                 $3.464 \times 10^{-7}$ & %eigmin(B)
                 0.85 & %pf_min
                 1.0  %pf_max
                 \\ 
                \hline
                \multicolumn{7}{l}{} 
                \end{tabular}
            \end{center}
        \end{table}

\end{appendices}
%\vspace{-2mm}

\section*{Acknowledgement}
The authors thank the reviewers for their valuable feedback that significantly improved the manuscript. S. Talkington and S. Grijalva thank M. J. Reno, J. A. Azzolini, and D. Pinney for helpful commentary and discussions.
\balance

\bibliographystyle{IEEEtran}
\bibliography{refs/refs.bib,refs/extras.bib,refs/r2_refs.bib}

\vskip -2.5\baselineskip plus -1fil
\begin{IEEEbiographynophoto}{Samuel Talkington} received the B.S. degree in Electrical Engineering from West Virginia University, Morgantown, WV, USA in 2020. He is currently a Ph.D. student at the School of Electrical and Computer Engineering, Georgia Institute of Technology, Atlanta, GA, USA where he is a National Science Foundation Graduate Research Fellow. In 2019 and 2020, he completed interships with the Power System Studies and Electrical Distribution Divisions,  Mitsubishi Electric Power Products, Inc., Warrendale, PA, USA. In 2022 he was a research intern with the Critical Infrastructure Resilience program, Lawrence Livermore National Laboratory, Livermore, CA, USA. His research interests are decision-making and control in electric power systems.
\end{IEEEbiographynophoto}

\vskip -2.5\baselineskip plus -1fil
\begin{IEEEbiographynophoto}{Daniel Turizo} received the B.S and M.S.c. degrees in Electrical Engineering from the Universidad del Norte, Barranquilla, Colombia in 2016 and 2018, respectively. He is currently a Ph.D. student at the School of Electrical and Computer Engineering, Georgia Institute of Technology, Atlanta, GA, USA where he is a Fullbright Fellow. From 2018-2020 he was an Adjunct Professor of Electrical Engineering at the Universidad del Norte, Barranquilla, Colombia. 
\end{IEEEbiographynophoto}

\vskip -2.5\baselineskip plus -1fil
\begin{IEEEbiographynophoto}{Santiago Grijalva} is the Southern Company Distinguished Professor of Electrical and Computer Engineering and Director of the Advanced Computational Electricity Systems (ACES) Laboratory at the Georgia Institute of Technology. His research interest is on decentralized power system control, cyber-physical security, and economics. From 2002 to 2009 he was with PowerWorld Corporation. From 2013 to 2014 he was with the National Renewable Energy Laboratory (NREL) as founding Director of the Power System Engineering Center (PSEC). Dr. Grijalva was a Member of the NIST Federal Smart Grid Advisory Committee. His graduate degrees in ECE are from the University of Illinois at Urbana-Champaign. 
\end{IEEEbiographynophoto}

\vskip -2.5\baselineskip plus -1fil
\begin{IEEEbiographynophoto}{Jorge Fernandez} received the B.S. and M.S.c. degrees in Electrical Engineering from the Universidad Tecnológica de Pereira, Pereira, Colombia in 2017 and 2019, respectively. He is currently a Ph.D. student at the School of Electrical and Computer Engineering, Georgia Institute of Technology, Atlanta, GA, USA where he is a Fullbright Fellow. From 2020 to 2022 he was a digital services engineer with a Colombian ISO.
\end{IEEEbiographynophoto}

\vskip -2.5\baselineskip plus -1fil
\begin{IEEEbiographynophoto}{Daniel K. Molzahn} is an Assistant Professor in the School of Electrical
and Computer Engineering at the Georgia Institute of Technology and also holds an appointment as a computational engineer in the Energy Systems Division at Argonne National Laboratory. He was a Dow Postdoctoral Fellow in Sustainability at the University of Michigan, Ann Arbor. He received the B.S., M.S., and Ph.D. degrees in electrical engineering and the Masters of Public Affairs degree from the University of Wisconsin–Madison, where he was a
National Science Foundation Graduate Research Fellow. He received the IEEE Power and Energy Society’s Outstanding Young Engineer Award in 2021 and the NSF CAREER Award in 2022.
\end{IEEEbiographynophoto}

%%%%%%%%%%%%%%%%%%%%%%%%%%%%%%%%%%%%%%%%%%%%%%%%%%%%%%%%%%%%%%%%%%%%%%%%%%%%%%%%

\end{document}